\newcommand{\PotDir}{\ensuremath{{\mathcal{P}}^{(1,2)} } }
\newcommand{\Pot}{\ensuremath{\mathcal{P} } }
\newcommand{\Potfin}{\Pot_{fin}}
\newcommand{\Potom}{{\ensuremath{\Pot^\omega}} }
\newcommand{\PPb}{\ensuremath{\mathfrak{P}} }
\newcommand{\PPa}{\ensuremath{\mathbb{P}} }
\newcommand{\rGF}{\ensuremath{\mathbb{F}} } 
\newcommand{\Nat}{\ensuremath{{\mathbb{N}}} }
\newcommand{\Int}{\ensuremath{{\mathbb{Z}}} }
\newcommand{\impl}{\Rightarrow}
\newcommand{\nach}[1]{\stackrel{#1}{\to}}
\newcommand{\von}[1]{\stackrel{#1}{\gets}}
\newcommand{\injct}[1]{\stackrel{#1}{\hookrightarrow} }
\newcommand{\categ}[1]{\ensuremath{\mathbf{ #1} } }
\newcommand{\cC}{\categ{C}}
\newcommand{\cD}{\categ{D}}
\newcommand{\cSets}{\categ{Sets}}
\newcommand{\cSetsF}[1]{\categ{{Sets}_{#1}}}
\newcommand{\coalg}[1]{{\categ{Coalg}_{#1}}}
\newcommand{\PPaGraph}{\PPa\_\categ{Graph}}
\newcommand{\PPbGraph}{\PPb\_\categ{Graph}}
\newcommand{\PotomGraph}{\Potom\_\categ{Graph}}
\newcommand{\functor}[1]{\textsf{#1}}
\newcommand{\funF}{\functor{F}}
\newcommand{\funG}{\functor{G}}
\newcommand{\funH}{\functor{H}}
\newcommand{\funFinal}{\mathbf{1}  }
\newcommand{\funX}{\functor{X}}
\newcommand{\finObj}{\ensuremath{\mathbf{1}  } }
\newcommand{\finMor}{\ensuremath{\mathbf{!}  } }
\newcommand{\M}{\ensuremath{\mathcal{M}} }
\newcommand{\node}{\ensuremath{con} }
\newcommand{\st}{\ensuremath{ngb}}
\newcommand{\stp}[1]{\ensuremath{{ngb_{#1}}^+} }
\newcommand{\NN}{\ensuremath{N} }
\newcommand{\EN}{\ensuremath{E} }
\newcommand{\aV}{\ensuremath{V} }
\newcommand{\aE}{\ensuremath{E} }
\newcommand{\uned}[1]{\xymatrix{\ar@{{}{-}{*}}[r]^{#1} & }}
\newcommand{\ppf}[2]{\ensuremath{  {#1_{#2}}^{\star}  }}  
\begin{document}
\title{Towards {$\M$}-Adhesive Categories based on  Coalgebras and Comma Categories}

\author{Julia Padberg}
              
\institute{Hamburg University of Applied Sciences\\Germany\\\email{julia.padberg@haw-hamburg.de}}

\maketitle
\begin{abstract}
In this contribution we investigate several extensions of the powerset that comprise arbitrarily nested subsets, and call them superpower set. This allows the definition of graphs with possibly infinitely nested nodes. Additionally we define edges that are incident to edges. Since we use coalgebraic constructions we refer to these graphs as coalgebraic graphs. The superpower set functors are examined and then used  for the definition of $\M$-adhesive categories which are the basic categories for $\M$-adhesive transformation systems.  So, we additionally show that coalgebras
$\cSetsF{F}$ are $\M$-adhesive categories provided the  functor $F:\cSets \to \cSets$  preserves pullbacks along monomorphisms.
\end{abstract}

\keywords{graph, hierarchy, coalgebra, $\M$-adhesive transformation system}%
\tableofcontents

\section{Motivation}
\label{s.motiv}
% motiv.tex

The main motivation of this paper is the question how to define recursion on a graph's structure so that we still obtain an $\M$-adhesive transformation systems. Since the recursion construct we use in this contribution is a coalgebraic construction we consequently use the term coalgebraic graphs. \\
A shorter version can be found in \cite{Pad17}.
We start with a examples of such graphs to illustrate what we aim at.
\begin{example}[Coalgebraic graphs]
The coalgebraic graph $G_1 =(\NN_1,\EN_1,\node_1,\st_1)$ is given in Fig.~\ref{fig:ex1} and consists of  a set of nodes $\NN$, a set of edges $\EN$, a contains function $\node$ and a neighbour function $\st$. $\node $ yields the set of nodes that  for each node may contain. Those  nodes that are  mapped to themselves, are considered to be atomic.\\
Let $\NN_1= \{ n_1,n_2,n_3, n_4, n_5, n_6\}$ with:\\ 
$\node_1(n_i) =   \begin{cases}
                n_i &; 1\le i\le 3\\
	\{n_1,n_2\} &; i=4 \\
	 \{n_3\}\    &;i=5 \\
	\{n_2, \{n_2, n_3\}, n_5\} &;i =6
                         \end{cases}
$\\
The atomic nodes are $\NN_1=\{n_1,n_2,n_3\}$ and\\
$\node(\NN_1)=\{n_1,n_2,n_3, \{n_1,n_2\}, \{n_3\}\  , \{n_2, \{n_1, n_2\}, n_5\} \}$.\\
We have $n_1,n_2 \in \node_1(n_4)$ and $\{n_1,n_2\}  \in \node_1(n_6)$.

\begin{figure}[H]

	 \begin{subfigure}[b]{0.5\linewidth}
%\subfigure[$\label{fig:ex1}]{
         \includegraphics[width=\linewidth]{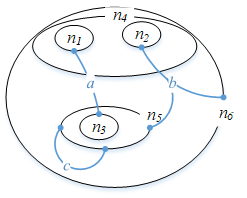}
        \caption{ Nested nodes in graph $G_1$}
				\label{fig:ex1}
    \end{subfigure} \hfill
	 \begin{subfigure}[b]{.43\linewidth}
%\subfigure[Coalgebraic edges in $G_2$ \label{fig:ex2}]{
\includegraphics[width=\linewidth]{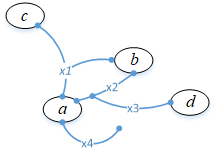}
\caption{ Nested edges in $G_2$ }
				\label{fig:ex2}
    \end{subfigure} 
		\centering
	 \begin{subfigure}[b]{.6\linewidth}
\includegraphics[width=\linewidth]{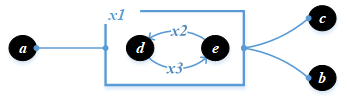}
\caption{Edges containing subgraphs  in $G_3$ }
				\label{fig:ex3}
    \end{subfigure} 
\caption{Examples of  graphs with recursive structures}
\label{fig:ex}
\end{figure}
$\EN_1=\{a,b,c\}$ with  
$ \begin{array}[t]{rr@{\; \mapsto\; }l}
	\st1:  &  a &    \{n_1,n_3\}\\
				&	b & \{n_2,n_5,n_6\} \\
				&	c&   \{n_5\}
\end{array}
$\\
Edge $b$ is a hyperedge. \\[2mm]
In Fig.~\ref{fig:ex2} the coalgebraic graph $G_2= (\NN_2,\EN_2,\st_2)$ with  atomic nodes $\NN_2=\{a,b,c,d\}$ and  edges $\EN_2=\{x1,x2,x3,x4\}$ with 
$\begin{array}[t]{rr@{\mapsto}l}
\st_2:  & x1 &  \{a,b,c\}\\
           & x2 &  \{a,b\}\\
           & x3 &  \{x2,d\}\\
           & x4 &  \{a, x4\}	
\end{array}
$\\

The edge $x_1$ is also a hyperedge.
The edge $x_3$ is attached to edge $x_2$ with $x_2 \in\st(x3)$  and  $x4$ is an unary edge that is attached to itself and is denoted by its end $n_2 \uned{x_4}$.\\[2mm]

$G_2$ can be flattened to a hypergraph with the attachment function 
$att:= \stp{2} :\EN\to \Pot(\aV)$ and
$\begin{array}[t]{rr@{\mapsto}l}
\stp{2}: & x1 &  \{a,b,c\}\\
           & x2 &  \{a,b\}\\
           & x3 &  \{a,b,d\}\\
           & x4 &  \{a\}	
\end{array}
$

$G_3$ has an egde, that contains a subgraph. In Fig.~\ref{fig:ex3} the coalgebraic graph $G_3= (\NN_3,\EN_3,\st_2)$ with  atomic nodes $\NN_2=\{a,b,c,d,e\}$ and  edges $\EN_2=\{x1,x2,x3\}$ with 
$\begin{array}[t]{rr@{\mapsto}l}
\st_2:  & x1 &  (abc,\{d,e,x2,x3\})\\
           & x2 &  (ed,\emptyset)\\
           & x3 &  (de,\emptyset)
\end{array}
$\\

\end{example}

The concepts investigated in this contribution  shall comprise 
\begin{itemize}
	\item usual (un-)directed multi-graphs,
	\item  classic hypergraphs,
	\item graphs with hyperedges as in \cite{DKH97},
	\item  bigraphs \cite{Milner06} see Sect.~\ref{ss.bigr},
	\item  and various   hierarchical graphs see Sect.~\ref{ss.hierG}.
\end{itemize}

\section{Super Power Sets}
\label{s.noderec}
% noderec.tex vormals ideas.tex

The superpower set  is achieved by recursively inserting  subsets  of the superpower set into  itself. In this contribution  we present  three possibilities:
\begin{enumerate}
	\item $\PPa$ allows atomic nodes.
	\item $\PPb$ only allows sets  of nodes.
	\item $\Potom$ layers the nesting of nodes.
\end{enumerate}

Subsequently, we investigate the properties of each construction and in Subsect.~\ref{ss.diff} we discuss the differences.

\subsection{Node Recursion based on $\PPa$}

\begin{definition}[Superpower set $\PPa$]
\label{d.ppa_set}
Given a well-founded set  $M$ and $\Pot (M) $ the power set of $M$ then we define the superpower set $\PPa(M)$
	\begin{enumerate}
	\item $M \subset \PPa(M)$ and $\Pot(M) \subset \PPa(M)$
	\item If $M' \subset \PPa(M)$ then $M' \in \PPa(M)$.
\end{enumerate}
$\PPa(M)$ is the smallest set satisfying 1. and 2.
\end{definition}
Note that this superpower set construction $\PPa$ is well-founded sets with the ordering with respect to the number of parentheses (see Appendix~\ref{s.NoetherianInd}).

\begin{lemma}[$\PPa$ is a functor]\label{l.PPa_functor}
$\PPa:\cSets \to \cSets$ is defined for finite sets as in Def.~\ref{d.ppa_set} and for functions $f:M \to N$ by
$\ppf{f}{}:\PPa(M) \to \PPa(N)$ with\\
$\ppf{f}{}(x)=
							\begin{cases}
	            f(x)  &\; ; x\in M\\
							\{\ppf{f}{} (x') \mid x' \in x\} &\; ; \text{ else}
  \end{cases}
$
\end{lemma}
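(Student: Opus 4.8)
The plan is to discharge the two obligations hidden in the statement: (i) that $\ppf{f}{}$ really is a well-defined map $\PPa(M)\to\PPa(N)$ for every function $f\colon M\to N$, and (ii) that the assignment $f\mapsto\ppf{f}{}$ preserves identities and composition. Both are proved by Noetherian induction along the well-founded order on $\PPa(M)$ given by the number of nesting parentheses, as noted right after Def.~\ref{d.ppa_set} (and as spelled out in Appendix~\ref{s.NoetherianInd}); this is precisely the order that makes the recursive second clause defining $\ppf{f}{}$ a legitimate definition in the first place.

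For well-definedness I would first observe that the two clauses defining $\ppf{f}{}$ are mutually exclusive and jointly exhaustive on $\PPa(M)$: the elements of the base set $M$ are atoms (they carry no parentheses), so no $x\in M$ simultaneously arises from clause~2 of Def.~\ref{d.ppa_set} as a subset of $\PPa(M)$. Then let $x\in\PPa(M)$ and assume inductively that $\ppf{f}{}(x')\in\PPa(N)$ for every $x'$ of strictly smaller parenthesis depth. If $x\in M$ then $\ppf{f}{}(x)=f(x)\in N\subseteq\PPa(N)$. Otherwise $x$ is a set with $x\subseteq\PPa(M)$, each $x'\in x$ has smaller depth, so by the induction hypothesis $\{\ppf{f}{}(x')\mid x'\in x\}\subseteq\PPa(N)$, and clause~2 of Def.~\ref{d.ppa_set} gives $\ppf{f}{}(x)=\{\ppf{f}{}(x')\mid x'\in x\}\in\PPa(N)$. (As a sanity check one sees that on $\Pot(M)\subseteq\PPa(M)$ this recovers the direct image map into $\Pot(N)$, and that $\ppf{f}{}(\emptyset)=\emptyset$.)

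Functoriality follows by the same induction scheme. For the identity: $\ppf{\mathrm{id}_M}{}(x)=\mathrm{id}_M(x)=x$ when $x\in M$, and $\ppf{\mathrm{id}_M}{}(x)=\{\ppf{\mathrm{id}_M}{}(x')\mid x'\in x\}=\{x'\mid x'\in x\}=x$ otherwise, using the induction hypothesis on each $x'\in x$; hence $\ppf{\mathrm{id}_M}{}=\mathrm{id}_{\PPa(M)}$. For composition, given $f\colon M\to N$ and $g\colon N\to P$: on $x\in M$ we get $\ppf{(g\circ f)}{}(x)=g(f(x))=\ppf{g}{}(\ppf{f}{}(x))$, since $f(x)\in N$ so the first clause of $\ppf{g}{}$ applies; and on a non-atomic $x$,
\[
\ppf{(g\circ f)}{}(x)=\{\ppf{(g\circ f)}{}(x')\mid x'\in x\}=\{\ppf{g}{}(\ppf{f}{}(x'))\mid x'\in x\}=\ppf{g}{}(\ppf{f}{}(x)),
\]
where the middle equality is the induction hypothesis and the last one is the definition of $\ppf{g}{}$ evaluated at the non-atomic argument $\{\ppf{f}{}(x')\mid x'\in x\}=\ppf{f}{}(x)$. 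Thus $\ppf{(g\circ f)}{}=\ppf{g}{}\circ\ppf{f}{}$, so $\PPa$ is a functor. The only genuinely delicate point — and the one I would be most careful about — is the well-definedness of the case split, i.e. making sure that "$x\in M$" and "$x$ built by clause~2" can never both hold; once the carrier is stratified by parenthesis depth this is automatic, and everything else is a routine structural induction.
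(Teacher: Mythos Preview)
Your proof is correct and complete; the paper itself states this lemma without proof (it only adds a remark on notation and moves to an example), so there is nothing to compare against. The Noetherian induction on parenthesis depth that you use is exactly the tool the paper sets up in Appendix~\ref{s.NoetherianInd} and deploys in the neighbouring lemmas, so your argument is entirely in the spirit of the paper. The one point worth making explicit---which you do flag at the end---is that in the composition step the equality $\ppf{g}{}(\ppf{f}{}(x))=\{\ppf{g}{}(\ppf{f}{}(x'))\mid x'\in x\}$ relies on $\ppf{f}{}(x)\notin N$ whenever $x\notin M$; this holds because $\ppf{f}{}$ cannot decrease parenthesis depth, so a non-atom is sent to a non-atom.
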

We use $\ppf{f}{}$ instead of $\PPa(f) $ merely for better readability and to stress the similarities of $\PPa$, $\PPb$ and $\Potom$.

\begin{example}[Functor $\PPa$]
Given sets $M=\{u,v,w,u',v'\}$ and $N=\{ n_1,n_2,n_3, n_4, n_5, n_6\}$  with $f:M\to N$ and
$\begin{array}[t]{cr@{\mapsto}l@{\; ; \;}r@{\mapsto}l @{\; ; \;}r@{\mapsto}l}
    f: &u & n_3  & v & n_3 & w &n_1 \\
		   &u' & n_5 & v'& n_5 \\
\end{array}
$ \\
then we  have  $\ppf{f}{}: \PPa(M) \to \PPa(N)$ mapping each subset element by element and  for example\\
$\ppf{f}{}(\{u,v,w,  \{u,v\}, \{v,\{ w, \emptyset \}\} \})  =  \{n_3, n_1,\{n_3\},\{n_3, \{n_1,\emptyset\}\} \}$
\end{example}

 \begin{lemma}[$\PPa$ preserves injections]
\label{l.PPa_prsv_injctPB}
Given  injective function $f:M \to N$ then
$\ppf{f}{}:\PPa(M) \to \PPa(N)$  is injective.
\end{lemma}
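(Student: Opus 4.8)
The plan is to prove injectivity of $\ppf{f}{}$ by well-founded (Noetherian) induction on the nesting structure of elements of $\PPa(M)$, using the ordering by number of parentheses mentioned after Def.~\ref{d.ppa_set}. Concretely, I would show: for all $x, y \in \PPa(M)$, if $\ppf{f}{}(x) = \ppf{f}{}(y)$ then $x = y$, by induction on the maximal nesting depth of $x$ and $y$.

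First I would handle the base case and, more importantly, sort out the case distinction forced by the definition of $\ppf{f}{}$. There are two kinds of elements of $\PPa(M)$: the ``atomic'' ones $x \in M$ (mapped by $f$) and the ``set-like'' ones $x \subseteq \PPa(M)$ with $x \notin M$ (mapped element-by-element). So given $\ppf{f}{}(x) = \ppf{f}{}(y)$, I must rule out the mixed case where $x \in M$ but $y$ is set-like (or vice versa). If $x \in M$ then $\ppf{f}{}(x) = f(x) \in N \subseteq \PPa(N)$, whereas if $y$ is set-like then $\ppf{f}{}(y) = \{\ppf{f}{}(y') \mid y' \in y\}$, which is a genuine subset of $\PPa(N)$, i.e.\ an element of $\PPa(N) \setminus N$. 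Since $M$ is well-founded and these two subcollections of $\PPa(N)$ are disjoint (an element of $N$ cannot equal a set built from $\PPa(N)$-elements, by well-foundedness / the parenthesis-count argument), the mixed case is impossible; this is where well-foundedness of $M$ is actually used. In the remaining pure cases: if both $x, y \in M$, then $f(x) = f(y)$ gives $x = y$ by injectivity of $f$; if both are set-like, proceed to the inductive step.

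For the inductive step, suppose $x, y \subseteq \PPa(M)$ are set-like with $\ppf{f}{}(x) = \ppf{f}{}(y)$, i.e.\ $\{\ppf{f}{}(x') \mid x' \in x\} = \{\ppf{f}{}(y') \mid y' \in y\}$. Take any $x' \in x$. Then $\ppf{f}{}(x') = \ppf{f}{}(y')$ for some $y' \in y$, and since $x', y'$ have strictly smaller nesting depth than $x, y$, the induction hypothesis yields $x' = y'$, so $x' \in y$. Hence $x \subseteq y$, and by symmetry $y \subseteq x$, so $x = y$. This closes the induction, and injectivity of $\ppf{f}{}$ on all of $\PPa(M)$ follows.

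I expect the main obstacle to be the mixed case — formally justifying that $f(x) \in N$ can never coincide with a set-like image $\{\ppf{f}{}(y') \mid y' \in y\}$. This is ``obvious'' when $M$ is a genuine well-founded set (a urelement-free set cannot be a member of itself or equal to a set of things strictly higher in the rank hierarchy), but it needs a clean statement, which is exactly why the definition insists $M$ be well-founded; I would cite the appendix on Noetherian induction (Appendix~\ref{s.NoetherianInd}) and the parenthesis-depth ordering to make this precise rather than reprove it inline. The rest is a routine extensionality argument.
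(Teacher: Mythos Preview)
Your proposal is correct and follows essentially the same route as the paper: Noetherian induction on the nesting depth (number of parentheses), with the base case handled by injectivity of $f$ on atoms and the inductive step by an extensionality argument on set-like elements. The paper argues the contrapositive ($M_1\neq M_2 \Rightarrow \ppf{f}{}(M_1)\neq \ppf{f}{}(M_2)$) while you argue the direct implication, but the content is the same; your explicit treatment of the mixed atomic/set-like case is a point the paper's proof simply elides, so in that respect your write-up is more complete than the original.
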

\begin{proof}
By induction\footnote{To be precise this is an Noetherian induction see Appendix~\ref{s.NoetherianInd}} over the depth of the superpower sets, so $n$ is the number of  nested parentheses:
\begin{description}
	\item[IA $(n=0)$ atomic nodes] Given $x_1,x_2 \in \PPa( M)$ with $x_1 \neq x_2$.\\
	     Since $f$ injective, $\ppf{f}{}(x_1)= f(x_1)  \neq  f(x_2) = \ppf{f}{}(x_2) $. 
	\item[IB]  Let $\ppf{f}{}:\PPa(M) \to \PPa(N)$  be injective for all sets  with at most $n$ nested  parentheses.
	\item[IS]Given  $M_1,M_2 \in \PPa(M)$ with $M_1 \neq M_2$ having both $n+1$ nested  parentheses.\\
	               Let $x\in M_1\land x \notin M_2$. \\
								 Hence $\ppf{f}{}(x) \in \ppf{f}{}(M_1)$. \\
								$x \notin M_2$ implies for all $m\in M_2$ that $x \neq m$.\\
								$x$ and $m$ have at most $n$ nested  parentheses.  \\
								$\ppf{f}{}(x) \neq \ppf{f}{}(m)$ for all $m\in M_2$ as $\ppf{f}{}$ is injective for all sets  with at most $n$ nested  parentheses. \\
								Thus
								$\ppf{f}{}(x) \notin \ppf{f}{}(M_2)$ \\ So, $\PPa(M_1)  \neq \PPa(M_2)$.
\end{description}
\end{proof}

\begin{lemma}[$\PPa$ preserves pullbacks along injective morphisms]
\label{l.PPaprsvinjPB}
\end{lemma}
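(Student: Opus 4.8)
The plan is to prove that $\PPa$ preserves pullbacks along injective morphisms by combining the pullback-preservation property of the ordinary powerset functor $\Pot$ with the recursive structure of $\PPa$, and by using Lemma~\ref{l.PPa_prsv_injctPB} so that injectivity is available throughout. Concretely, suppose we are given a pullback square in $\cSets$ with objects $P, A, B, C$, projections $p_A : P \to A$, $p_B : P \to B$, and the cospan $g : A \to C$, $h : B \to C$ where $h$ (say) is injective, so $P = \{(a,b) \mid g(a) = h(b)\}$. Applying $\PPa$ yields a commuting square on $\PPa(P), \PPa(A), \PPa(B), \PPa(C)$, and I must show it is again a pullback, i.e. that the canonical comparison map from $\PPa(P)$ into the pullback $Q$ of $\ppf{g}{}$ and $\ppf{h}{}$ is a bijection.

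First I would record that $\ppf{h}{}$ is injective by Lemma~\ref{l.PPa_prsv_injctPB}, so $Q = \{(x,y) \in \PPa(A)\times\PPa(B) \mid \ppf{g}{}(x) = \ppf{h}{}(y)\}$ and the comparison map $\langle \ppf{p_A}{}, \ppf{p_B}{}\rangle$ is automatically injective (its first component already is). The real work is surjectivity: given $(x,y) \in Q$, I must produce $z \in \PPa(P)$ with $\ppf{p_A}{}(z) = x$ and $\ppf{p_B}{}(z) = y$. I would do this by Noetherian induction on the nesting depth (number of nested parentheses), matching the induction scheme used in the previous proofs. In the base case $x \in A$ is an atomic element; then $\ppf{g}{}(x) = g(x) = \ppf{h}{}(y)$, and since $g(x) \in C$ is atomic, $\ppf{h}{}(y) \in C$ forces $y \in B$ to be atomic as well (a nonatomic $y$ would map to a set, not an element of $C$); so $(x,y)$ is literally an element of the set-theoretic pullback $P$, and $z := (x,y) \in P \subset \PPa(P)$ works. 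In the inductive step $x$ is a set $\{x_i\}$ with depth $n+1$; then $\ppf{g}{}(x) = \{\ppf{g}{}(x_i)\}$ equals $\ppf{h}{}(y)$, which must therefore also be a set, forcing $y = \{y_j\}$ to be a set of depth $\le n+1$; matching elements up (using that $\ppf{g}{}$ and $\ppf{h}{}$ are injective on the relevant lower-depth subsets, again by Lemma~\ref{l.PPa_prsv_injctPB}) gives a bijection between the $x_i$ and the $y_j$ with $\ppf{g}{}(x_i) = \ppf{h}{}(y_j)$, so by the induction hypothesis each matched pair lifts to some $z_{ij} \in \PPa(P)$ of depth $\le n$, and $z := \{z_{ij}\}$ is the desired lift.

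I would close by checking that the lift $z$ just constructed indeed satisfies $\ppf{p_A}{}(z) = x$ and $\ppf{p_B}{}(z) = y$ — this is immediate from the elementwise definition of $\ppf{p_A}{}, \ppf{p_B}{}$ and the construction of $z$ — and that it is the \emph{unique} such lift, which follows from the injectivity of the comparison map already noted. Hence the $\PPa$-image of the pullback square is a pullback.

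The main obstacle I anticipate is the bookkeeping in the inductive step: one must argue carefully that $\ppf{g}{}(x)$ being a set (rather than an atom) forces $y$ to be a set, and that the element-matching between $\{x_i\}$ and $\{y_j\}$ is well-defined and respects depth so that the induction hypothesis applies — in particular that equal images $\ppf{g}{}(x_i) = \ppf{h}{}(y_j)$ can be achieved \emph{simultaneously} across all elements, not just pointwise. This relies essentially on $\Pot$ preserving pullbacks along monos (so that "$\{\ppf{g}{}(x_i)\} = \{\ppf{h}{}(y_j)\}$ with $\ppf{h}{}$ injective" yields a genuine matching) together with the injectivity statement of Lemma~\ref{l.PPa_prsv_injctPB} supplying the required injectivity at each lower depth. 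A secondary subtlety is the mixed case where a set $x$ at depth $n+1$ could a priori match an atomic $y$; the atom-vs-set dichotomy of $\ppf{g}{}$ and $\ppf{h}{}$ rules this out and should be stated explicitly.
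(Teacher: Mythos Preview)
Your overall strategy---compare $\PPa(P)$ with the pullback $Q$ via the canonical comparison map, get injectivity for free from Lemma~\ref{l.PPa_prsv_injctPB}, and establish surjectivity by Noetherian induction on nesting depth---is exactly the paper's approach. The paper phrases the surjectivity half as the construction of an explicit inverse $\bar h:Q\to\PPa(P)$ defined by recursion on depth, and then verifies $h\circ\bar h=\mathrm{id}$ and $\bar h\circ h=\mathrm{id}$ using the same injectivity trick you invoke; your inductive construction of a preimage $z$ is precisely this $\bar h$.

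There is, however, one genuine slip in your inductive step. You write that ``$\ppf{g}{}$ and $\ppf{h}{}$ are injective on the relevant lower-depth subsets'' and that this yields a \emph{bijection} between the $x_i$ and the $y_j$. But only $h$ is assumed injective; $g$ need not be, so $\ppf{g}{}$ need not be, and there is no bijection in general. What you actually have from $\{\ppf{g}{}(x_i)\}=\{\ppf{h}{}(y_j)\}$ is: for each $x_i$ there exists a \emph{unique} $y_j$ with $\ppf{g}{}(x_i)=\ppf{h}{}(y_j)$ (uniqueness from injectivity of $\ppf{h}{}$), and conversely every $y_j$ is hit by some $x_i$. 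The correct lift is therefore $z:=\{\,z_{ij}\mid \ppf{g}{}(x_i)=\ppf{h}{}(y_j)\,\}$, taking \emph{all} matching pairs rather than a bijective pairing; this is exactly how the paper defines $\bar h$ in the recursive clause. With that correction your verification that $\ppf{p_A}{}(z)=x$ and $\ppf{p_B}{}(z)=y$ goes through unchanged, since every $x_i$ and every $y_j$ participates in at least one matching pair.
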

\begin{proof}
Given a pullback diagram $(PB)$ and the diagram $(1)$ in  \cSets   with $g_1: C \hookrightarrow D$ injective.\\
 Pullbacks and  the super powerset functor (see Lemma~\ref{l.PPa_prsv_injctPB})  preserve injections, so \\
$\pi_B: A\hookrightarrow B$, $\PPa(\pi_B): \PPa(A) \hookrightarrow \PPa(B)$
	and $\pi_{\PPa(B)}: P \hookrightarrow \PPa(B)$ are injective.\\
$\xymatrix@=15mm{
	           A \ar@{^{(}->}[r]|{\pi_B}  \ar[d]|{\pi_C}    \ar@{}[dr]|{(PB)}
		  & B   \ar[d]|{f_1}   \\
			   C \ar@{^{(}->}[r]|{g_1} 
			&  D
		}
	$ \hfill
	$\xymatrix@=15mm{
	            P   \ar@{..>}@/^/[dr]|{\bar{h}}  \ar@{^{(}->}@/^/[drr]|{\pi_{\PPa(B)} }  \ar@/_/[ddr]|{\pi_{\PPa(C)}  }  &&\\
       &    \PPa(A) \ar@{^{(}->}[r]|{\ppf{\pi}{B}}  \ar[d]|{\ppf{f\pi}{C}}  \ar@{}[dr]|{(1)} \ar@/^/[ul]|h \ar@{}[ur]^<<<{(2)} \ar@{}[dl]|<<<{(3)} 
		  &   \PPa(B)   \ar[d]|{\ppf{f}{1}}
			\\
			 &    \PPa(C) \ar@{^{(}->}[r]|{\ppf{g}{1}}
			&   \PPa(D)
		}
	$
	\\
	Since $(PB)$ is a pullback diagram we have  $A=\{ (b,c) \mid f_1(b) = g_1(c)\}$.\\
	 $(1)$ commutes, since \PPa  is a functor.\\[2mm]
	Let $P$ be the pullback of $( \PPa(D), \ppf{f}{1},\ppf{g}{1})$, so $ \ppf{f}{1} \circ \pi_{\PPa(B)}  =  \pi_{\PPa(C)}  \circ \ppf{g}{1}$.\\
	Hence, $P= \{(B',C') \mid \ppf{f}{1} (B') = \ppf{g}{1} (C') \}   \subseteq \PPa(B) \times \PPa(C)$.\\
	Moreover, there is the unique $h: \PPa(A) \to P$ 
	s.t. $h(A') = (\ppf{\pi}{B}(A'), \ppf{f\pi}{C}(A')) $ for all $A'\subseteq A$ so that  the diagrams $(2)$ and $(3)$ commute along $h$:\\ \hspace*{\fill} 
	$\pi_{\PPa(B)} \circ h = \ppf{\pi}{B}$ and  
	$\pi_{\PPa(C)} \circ h =  \ppf{\pi}{C}$.\\[2mm]
	We define  $\bar{h}: P \to \PPa(A)$ with \\
	$$\bar{h}((X,Y)) =  \begin{cases}
	        (b,c) \; ;  \text{ if } X =b\in B , \;  Y=c\in  C \\
	       \{ (x,y) \mid x \in X\cap B, \; y \in Y\cap C, \;  f_1(x) = g_1(y)\}\\
				 \text{\hspace*{32mm}}
	\cup \bigcup_{(X',Y') \in (X-B) \times (Y-C)} \bar{h}(X',Y')  \; ; \text{ else}
	\end{cases}
	$$
	and  have:
	\begin{enumerate}
		\item 		$\bar{h}$ is well-defined since $\bar{h}(X,Y)  \in \PPa(A)$.
	  \item $(2)$ commutes along $\bar{h}$, i.e.  $\ppf{\pi}{B} \circ \bar{h}  = \pi_{\PPa(B)}(X,Y)$		
					by induction over  the number of  nested parentheses $n$:
		 \begin{description}
			 \item[IA] ($n=0$, i.e. atomic nodes): Given $ (b,c) \in P$ with
			      $b\in B$ and $c\in C$.
			$\ppf{\pi}{B} \circ \bar{h} (b,c) = \ppf{\pi}{B}(b,c) = b = \pi_{\PPa(B)}(b,c)$
			
					\item[IB] Let  be $\ppf{\pi}{B}\circ \bar{h}  (X,Y)= \pi_{\PPa(B)}(X,Y)$	 for sets  with at most $n$ nested parentheses.
					\item[IS]  Given $(\hat{X}, \hat{Y}) \in P$ with $n+1$ nested parentheses. \\
 Let  $\hat{X} =\hat{B} \cup X$ with 
$\hat{B} \subseteq B$ and $X\cap B = \emptyset$. \\
Let $\hat{Y} =\hat{C} \cup Y$ with 
$\hat{C} \subseteq C$ and $Y\cap C = \emptyset$.  \\
$X$ and $Y$ have at most $n$  nested parentheses. 
\begin{align*}
\ppf{\pi}{B}\circ  \bar{h}(\hat{X}, \hat{Y}) \\
                       &=\ppf{\pi}{B} \circ\big( \; \{ (x,y) \mid x \in \hat{B}, \;  y \in \hat{C} \;  f_1(x) = g_1(y)\} 
                                                      \cup \bigcup_{(X',Y') \in (X\times Y)} \bar{h}(X',Y')  \;  \big)\\
                       &=\{ x\mid x \in \hat{B} \land \exists y \in \hat{C}: f_1(x) = g_1(y)\} 
											                                 \cup \bigcup_{(X',Y') \in (X\times Y)} \ppf{\pi}{B} \circ \bar{h}(X',Y')   \\\\
                       &= \hat{B}  \cup \bigcup_{(X',Y') \in (X\times Y)} \ppf{\pi}{B}\circ \bar{h}(X',Y')   \\
                       &\stackrel{IB}{=}\hat{B}  \cup \bigcup_{(X',Y') \in (X\times Y)} \pi_{\PPa(B)} (X',Y')   \\
                       &= \hat{B}  \cup \bigcup_{(X',Y') \in (X\times Y)}  X'   \\
                       &= \hat{B}  \cup X  \\
											&=  \hat{X}\\
											&= \pi_{\PPa(B)}(\hat{X}, \hat{Y}) 
					\end{align*}
		 \end{description}
	\end{enumerate}

	Now we show $P \cong \PPa(A)$:
	\begin{itemize}
	   \item $h \circ  \bar{h}= id_P$,\\
		    since $\pi_{\PPa(B)}$ is injective and 
				$\pi_{\PPa(B)} \circ h \circ  \bar{h}=  \ppf{\pi}{B} \circ  \bar{h} = \pi_{\PPa(B)} \circ id_P$
		
		\item $\bar{h} \circ h = id_{\PPa(A)}$, \\
		since $\ppf{\pi}{B}$ is injective and 
		$\ppf{\pi}{B} \circ \bar{h} \circ h = \pi_{\PPa(B)}  \circ h = \ppf{\pi}{B} =\ppf{\pi}{B} \circ  id_{\PPb(A)}$.
\end{itemize}
\end{proof}

%\subsection{Coalgebraic $F$-Graph}
Based on $F$-graphs (see \cite{Jae2015,Schn99}), that is a    family  of graph categories induced by a comma category construction using a functor $F$, we can define the category of  $\PPa$-graphs.

\begin{example}[$\PPa$-Graph]\label{ex.PPa_graph}
\begin{figure}[h]
	\includegraphics[width=1.00\textwidth]{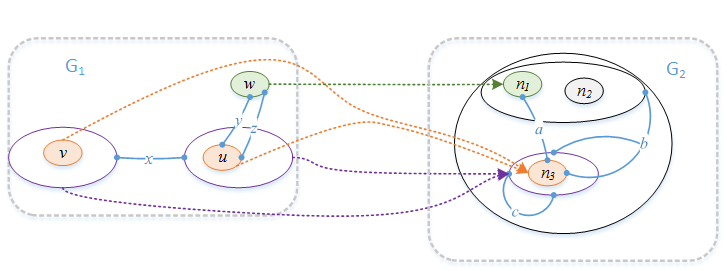}
	\caption{$\PPa$-graph morphism}
	\label{fig:ex_PPa_Graph_mor_2}
\end{figure}
In Fig.~\ref{fig:ex_PPa_Graph_mor_2} the following $\PPa$-graph are illustrated and a morphism in between.\\
	\parbox{.49\linewidth}{
$G_1=(\st_1: E_1 \to \PPa(N_1))$ \\with
 $ \begin{array}[t]{rr@{\; \mapsto\; }l}
	\st_1:  & x &    \{\{u\},\{v\}\}\\
				&	y & \{u,w\} \\
				&	z&   \{u,w\}
\end{array}
$}
	\parbox{.49\linewidth}{
$G_2=(\st_2: E_2 \to \PPa(N_2))$ \\
with $ \begin{array}[t]{rr@{\; \mapsto\; }l}
	\st_2:& a &    \{n_1,n_2\}\\
				&	b & \{ \{n_1,n_2\},  \{n_3\}, n_3\} \\
				&	c&   \{\{n_3\}\}
\end{array}
$}\\
Note, that we only have the nesting of nodes, but the nodes that contain others do not have a name themselves.
Edges are hyperedges given as a subset of the superpower set, but they may have incident nodes as well as nodes containing nodes.
\end{example}

\begin{definition}[$\PPa$-graph and the category of $\PPa$-graphs]\label{d.cr_FGraph}
The category of $\PPa$-graphs $\PPaGraph$ is given by a comma category $\PPaGraph=< Id_\cSets \downarrow \PPa>$.
\end{definition}

 $\PPa$-graph morphisms are given by mappings of the nodes and edges $f =(f_{\NN},f_{\EN}): G_1 \to G_2$ with $f_{\NN}:\NN_1 \to\NN_2$ and $f_{\EN}:\EN_1 \to \EN_2$ so that:

\begin{itemize}
	\item $\ppf{f}{\NN}\circ \node_1 = \node_2 \circ f_{\NN}$
	\item $\ppf{f}{\EN} \circ \st_1 = \st_2 \circ f_{\EN}$
\end{itemize}

\subsection{Node Recursion Based on $\PPb$}

\begin{definition}[Superpower set $\PPb$]
\label{d.ppb_set}
Given a well-founded set  $M$ and $\Pot (M) $ the power set of $M$ then we define the superpower set $\PPb(M)$
	\begin{enumerate}
	\item $\Pot(M) \subset \PPb(M)$
	\item If $M' \subset \PPb(M)$ then $M' \in \PPb(M)$.
\end{enumerate}
$\PPb(M)$ is the smallest set satisfying 1. and 2.
\end{definition}
Note that this superpower set construction $\PPb$ is well-founded sets with the ordering with respect to the number of parentheses (see Appendix~\ref{s.NoetherianInd}).
\\
The difference between  $\PPa$ and $\PPb$ is that in $\PPa$ the elements of the underlying set are elements of the superpowerset as well, so $M\subset \PPa(M)$ but $M\not\subset \PPb$ for an non-empty set $M$.

\begin{lemma}[$\PPb$ is a functor]\label{l.PPb_functor}
$\PPb:\cSets \to \cSets$ is defined for sets as in Def.~\ref{d.ppb_set} and for functions $f:M \to N$ by
$\ppf{f}{}:\PPb(M) \to \PPb(N)$ with\\
$\ppf{f}{}(\{x\})=
							\begin{cases}
	            \{f(x)\}  &\; ; x\in M\\
							\{\ppf{f}{} (x') \mid x' \in x\} &\; ; \text{ else}
  \end{cases}
$
\end{lemma}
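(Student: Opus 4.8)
The plan is to verify the three functor axioms by Noetherian induction on the nesting depth of the superpower set, i.e.\ on the number $n$ of nested parentheses, exactly as in the proofs of Lemmas~\ref{l.PPa_prsv_injctPB} and~\ref{l.PPaprsvinjPB}; the well-foundedness of this order is precisely the remark following Def.~\ref{d.ppb_set} (see Appendix~\ref{s.NoetherianInd}). Concretely I would show (a) that $\ppf{f}{}$ is well defined, i.e.\ $\ppf{f}{}(X)\in\PPb(N)$ for every $X\in\PPb(M)$, so that $\ppf{f}{}:\PPb(M)\to\PPb(N)$ is a genuine function; (b) that $\ppf{(id_M)}{}=id_{\PPb(M)}$; and (c) that $\ppf{(g\circ f)}{}=\ppf{g}{}\circ\ppf{f}{}$ for $f:M\to N$ and $g:N\to K$. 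Since every element of $\PPb(M)$ is by construction either a subset of $M$ or a subset of $\PPb(M)$, the recursion in the definition of $\ppf{f}{}$ terminates, and the case distinction ``$x\in M$ vs.\ else'' is exhaustive once one reads $\ppf{f}{}(X)=\{\,\ppf{f}{}(x')\mid x'\in X\,\}$ with $\ppf{f}{}(x')$ replaced by $f(x')$ whenever $x'\in M$.

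For (a), the base case ($X\in\Pot(M)$) gives $\ppf{f}{}(X)=\{f(x)\mid x\in X\}=f(X)\in\Pot(N)\subset\PPb(N)$. In the step, an $X\in\PPb(M)$ of depth $n+1$ is a subset of $\PPb(M)$ whose elements $x'$ all have depth $\le n$; by the induction hypothesis each $\ppf{f}{}(x')$ lies in $\PPb(N)$ (or in $N$, at the bottom level), hence $\ppf{f}{}(X)=\{\ppf{f}{}(x')\mid x'\in X\}$ is a subset of $\PPb(N)$ and therefore an element of $\PPb(N)$ by clause~2 of Def.~\ref{d.ppb_set}. For (b) the induction is immediate: on $\Pot(M)$ we get $\ppf{(id_M)}{}(X)=\{id_M(x)\mid x\in X\}=X$, and in the step $\ppf{(id_M)}{}(X)=\{\ppf{(id_M)}{}(x')\mid x'\in X\}=\{x'\mid x'\in X\}=X$ using the hypothesis on the $x'$ of smaller depth. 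For (c) the base case is the elementwise identity $\{g(f(x))\mid x\in X\}=\ppf{g}{}(\{f(x)\mid x\in X\})$ on $\Pot(M)$, and the step pushes the equation through the set braces: $\ppf{(g\circ f)}{}(X)=\{\ppf{(g\circ f)}{}(x')\mid x'\in X\}\stackrel{IH}{=}\{\ppf{g}{}(\ppf{f}{}(x'))\mid x'\in X\}=\ppf{g}{}\bigl(\{\ppf{f}{}(x')\mid x'\in X\}\bigr)=\ppf{g}{}(\ppf{f}{}(X))$.

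I expect the only real subtlety to be definitional rather than mathematical: making precise that $\ppf{f}{}$ as written — specified only on singletons $\{x\}$ — determines a function on all of $\PPb(M)$, and that this is the same recursion by parenthesis depth that underlies the well-foundedness remark. Once the recursion scheme is pinned down, all three inductions are routine and structurally identical to the atomic/nested case split already used for $\PPa$, and the argument transfers verbatim to $\Potom$ as well.
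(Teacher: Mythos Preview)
Your proof is correct and, in fact, more thorough than the paper's: the paper does not supply a proof of this lemma at all (nor of the analogous Lemmas~\ref{l.PPa_functor} and~\ref{l.Potom_functor}); it simply states the definition of $\ppf{f}{}$ and immediately illustrates it with an example. Your three-part verification (well-definedness of $\ppf{f}{}$, preservation of identities, preservation of composition) by Noetherian induction on nesting depth is exactly the routine check one would expect, and your remark that the singleton-style definition has to be read as the evident recursion $\ppf{f}{}(X)=\{\ppf{f}{}(x')\mid x'\in X\}$ (with the base case $x'\in M\mapsto f(x')$) is the right way to make sense of the paper's notation.
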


\begin{example}[Functor $\PPb$]
Given sets $M=\{u,v,w,u',v'\}$ and $N=\{ n_1,n_2,n_3, n_4, n_5, n_6\}$  with $f:M\to N$ and
$\begin{array}[t]{cr@{\mapsto}l@{\; ; \;}r@{\mapsto}l @{\; ; \;}r@{\mapsto}l}
    f: &u & n_3  & v & n_3 & w &n_1 \\
		   &u' & n_5 & v'& n_5 \\
\end{array}
$ \\
then we  have  $\ppf{f}{}: \PPb(M) \to \PPb(N)$ with for example\\
$\ppf{f}{}(\{ \{u,v\}, \{v,\{ w, \emptyset \}\} \})  =  \{\{n_3\},\{n_3, \{n_1,\emptyset\}\} \}$
\end{example}

\begin{lemma}[$\PPb$ preserves injections]
\label{l.PPb_prsv_injct}
Given an  injective function $f:M \to N$ then
$\ppf{f}{}:\PPb(M) \to \PPb(N)$  is injective.
\end{lemma}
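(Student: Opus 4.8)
The plan is to prove Lemma~\ref{l.PPb_prsv_injct} by Noetherian induction on the number of nested parentheses, in exactly the same style as the proof of Lemma~\ref{l.PPa_prsv_injctPB} for $\PPa$. The only structural difference is that $\PPb$ has no atomic layer: every element of $\PPb(M)$ is already a set of elements of $\PPb(M)$, so the base case is slightly different. I would take as the base case the lowest-depth elements, namely subsets of $M$ itself (depth $n=1$ in the bracket-counting convention, or $n=0$ depending on how one counts $\Pot(M)\subset\PPb(M)$): if $M_1,M_2\subseteq M$ with $M_1\neq M_2$, pick $x\in M_1\setminus M_2$ (w.l.o.g.); then $f(x)\in\ppf{f}{}(M_1)$, and since $f$ is injective and $x\neq m$ for all $m\in M_2$ we get $f(x)\neq f(m)$ for all $m\in M_2$, hence $f(x)\notin\ppf{f}{}(M_2)$, so $\ppf{f}{}(M_1)\neq\ppf{f}{}(M_2)$.

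For the induction step I would assume $\ppf{f}{}$ is injective on all elements of $\PPb(M)$ with at most $n$ nested parentheses, and take $M_1,M_2\in\PPb(M)$ distinct with $n+1$ nested parentheses. Since $M_1\neq M_2$ there is $x$ lying in exactly one of them, say $x\in M_1$ and $x\notin M_2$; $x$ and every $m\in M_2$ have at most $n$ nested parentheses, so by the induction hypothesis $\ppf{f}{}(x)\neq\ppf{f}{}(m)$ for all $m\in M_2$. Therefore $\ppf{f}{}(x)\in\ppf{f}{}(M_1)$ but $\ppf{f}{}(x)\notin\ppf{f}{}(M_2)$, giving $\ppf{f}{}(M_1)\neq\ppf{f}{}(M_2)$, which closes the induction.

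I do not expect a genuine obstacle here; the argument is a routine transcription of Lemma~\ref{l.PPa_prsv_injctPB}'s proof with the atomic base case removed and replaced by the "plain subset of $M$" base case. The one point worth a moment's care is purely bookkeeping: making sure the parenthesis-depth measure is the well-founded order from Appendix~\ref{s.NoetherianInd} and that "$x\in M_1$, $x\notin M_2$" is always available for distinct sets — which is just extensionality, together with symmetry of the roles of $M_1$ and $M_2$ to cover the case where the witness lies in $M_2$ instead. Everything else reduces to injectivity of $f$ at the bottom and the induction hypothesis above it.

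\begin{proof}
By Noetherian induction over the number $n$ of nested parentheses (see Appendix~\ref{s.NoetherianInd}).
\begin{description}
	\item[IA] Given $M_1,M_2 \subseteq M$ with $M_1 \neq M_2$. By extensionality and symmetry we may assume there is $x$ with $x\in M_1$ and $x\notin M_2$. Then $f(x) \in \ppf{f}{}(M_1)$. Since $x\neq m$ for all $m\in M_2$ and $f$ is injective, $f(x)\neq f(m)$ for all $m\in M_2$, so $f(x)\notin\ppf{f}{}(M_2)$. Hence $\ppf{f}{}(M_1)\neq\ppf{f}{}(M_2)$.
	\item[IB] Let $\ppf{f}{}:\PPb(M)\to\PPb(N)$ be injective for all sets with at most $n$ nested parentheses.
	\item[IS] Given $M_1,M_2\in\PPb(M)$ with $M_1\neq M_2$ having both $n+1$ nested parentheses. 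By symmetry let $x\in M_1$ and $x\notin M_2$. Then $\ppf{f}{}(x)\in\ppf{f}{}(M_1)$. Since $x\notin M_2$, for all $m\in M_2$ we have $x\neq m$, and $x$ and $m$ have at most $n$ nested parentheses, so by IB $\ppf{f}{}(x)\neq\ppf{f}{}(m)$ for all $m\in M_2$. Thus $\ppf{f}{}(x)\notin\ppf{f}{}(M_2)$, so $\ppf{f}{}(M_1)\neq\ppf{f}{}(M_2)$.
\end{description}
\end{proof}
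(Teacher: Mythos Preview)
Your proof is correct and follows essentially the same Noetherian induction on parenthesis depth as the paper's own proof, including the choice of base case at the level of plain subsets of $M$ (the paper labels this $n=1$). Your explicit appeal to extensionality and symmetry for the witness $x$ is a small expository improvement over the paper, which simply writes ``Let $x\in M_1\land x\notin M_2$'' without comment, but the argument is otherwise line-for-line the same.
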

\begin{proof}
By induction\footnote{To be precise this is an Noetherian induction see Appendix~\ref{s.NoetherianInd}} over the depth of the superpower sets, so $n$ is the number of  nested parentheses:
\begin{description}
	\item[IA $(n=1)$ i.e. sets] Given $M_1,M_2 \subseteq M$ with $M_1 \neq M_2$.\\
	            Let $x\in M_1\land x \notin M_2$. Hence $f(x) \in f(M_1) \land  f(x) \notin f(M_2)$ as $f$ is injective.  \\
							So, $\PPb(M_1)  \neq \PPb(M_2)$.
	\item[IB]  Let $\ppf{f}{}:\PPb(M) \to \PPb(N)$  be injective for all sets  with at most $n$ nested  parentheses.
	\item[IS]  Given  $M_1,M_2 \in \PPb(M)$ with $M_1 \neq M_2$ having both $n+1$ nested  parentheses.\\
	               Let $x\in M_1\land x \notin M_2$. \\
								 Hence $\ppf{f}{}(x) \in \ppf{f}{}(M_1)$. \\
								$x \notin M_2$ implies for all $m\in M_2$ that $x \neq m$.\\
								$x$ and $m$ have at most $n$ nested  parentheses. \\
								$\ppf{f}{}(x) \neq \ppf{f}{}(m)$ for all $m\in M_2$ as $\ppf{f}{}$ is injective for all sets  with at most $n$ nested  parentheses. \\
								Thus
								$\ppf{f}{}(x) \notin \ppf{f}{}(M_2)$ \\ So, $\PPb(M_1)  \neq \PPb(M_2)$.
\end{description}
\end{proof}

\begin{lemma}[$\PPb$ preserves pullbacks along injective morphisms]
\end{lemma}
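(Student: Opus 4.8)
The plan is to mimic the proof of Lemma~\ref{l.PPaprsvinjPB} almost verbatim, since the category of sets and the superpower set functors $\PPa$ and $\PPb$ behave the same way with respect to pullbacks along injections; the only difference is the bookkeeping around singletons, which is where $\PPb$ differs from $\PPa$. Concretely, I would start from a pullback square $(PB)$ in $\cSets$ with $f_1:B\to D$ arbitrary and $g_1:C\hookrightarrow D$ injective, and form the image square $(1)$ under $\PPb$. By Lemma~\ref{l.PPb_prsv_injct} the functor $\PPb$ preserves injections, so $\ppf{g}{1}:\PPb(C)\hookrightarrow\PPb(D)$ is injective, and likewise $\pi_B:A\hookrightarrow B$ and $\ppf{\pi}{B}:\PPb(A)\hookrightarrow\PPb(B)$ are injective. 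Let $P$ be the pullback of $(\PPb(D),\ppf{f}{1},\ppf{g}{1})$ with projections $\pi_{\PPb(B)}$, $\pi_{\PPb(C)}$, and let $h:\PPb(A)\to P$ be the induced mediating morphism, which satisfies $\pi_{\PPb(B)}\circ h=\ppf{\pi}{B}$ and $\pi_{\PPb(C)}\circ h=\ppf{\pi}{C}$.

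The core of the argument is again to exhibit an explicit inverse $\bar h:P\to\PPb(A)$ and to verify $h\circ\bar h=\mathrm{id}_P$ and $\bar h\circ h=\mathrm{id}_{\PPb(A)}$. Because in $\PPb$ there are no atomic elements — every element of $\PPb(M)$ is a set — the base case of the recursion is the level of plain subsets rather than the level of atoms. So I would define, for $(X,Y)\in P$ with $X\in\PPb(B)$, $Y\in\PPb(C)$,
$$\bar h(X,Y) \;=\; \{\,(x,y)\mid x\in X\cap B,\ y\in Y\cap C,\ f_1(x)=g_1(y)\,\}\ \cup\ \bigcup_{(X',Y')\in(X\setminus B)\times(Y\setminus C)}\bar h(X',Y'),$$
where at the lowest level $X\subseteq B$, $Y\subseteq C$ and the second union is empty. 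One checks that $\bar h(X,Y)\in\PPb(A)$ by Noetherian induction on the number of nested parentheses, using that $A=\{(b,c)\mid f_1(b)=g_1(c)\}$ is the pullback in $\cSets$. Then, exactly as in the $\PPa$ case, I would prove that square $(2)$ commutes along $\bar h$, i.e. $\ppf{\pi}{B}\circ\bar h=\pi_{\PPb(B)}$, by induction on the nesting depth, and deduce the two identities: $h\circ\bar h=\mathrm{id}_P$ because $\pi_{\PPb(B)}$ is injective and $\pi_{\PPb(B)}\circ h\circ\bar h=\ppf{\pi}{B}\circ\bar h=\pi_{\PPb(B)}=\pi_{\PPb(B)}\circ\mathrm{id}_P$, and $\bar h\circ h=\mathrm{id}_{\PPb(A)}$ because $\ppf{\pi}{B}$ is injective and $\ppf{\pi}{B}\circ\bar h\circ h=\pi_{\PPb(B)}\circ h=\ppf{\pi}{B}=\ppf{\pi}{B}\circ\mathrm{id}_{\PPb(A)}$. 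Hence $P\cong\PPb(A)$ compatibly with the projections, so $(1)$ is a pullback.

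The main obstacle is the same subtle point as in Lemma~\ref{l.PPaprsvinjPB}: one must be careful that $\bar h$ really lands in $\PPb(A)$ and that the induction for the commutativity of $(2)$ is set up so that every recursive call $\bar h(X',Y')$ strictly decreases the parenthesis depth — this is exactly why the well-foundedness remark after Def.~\ref{d.ppb_set} is invoked. A secondary point worth stating explicitly is why injectivity of $g_1$ (not $f_1$) is what the construction needs: it guarantees that for each $(X,Y)\in P$ the set $Y$ is determined by $X$ on the matching components, so no information is lost when passing through $\bar h$; without it the mediating map $h$ need not be an isomorphism. Since $\PPb$ has no atomic layer, the argument is if anything slightly cleaner than for $\PPa$, and no genuinely new idea beyond the $\PPa$ proof is required.
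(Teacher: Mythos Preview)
Your proposal is correct and follows essentially the same approach as the paper: the paper constructs the identical inverse map $\bar h$, proves $\ppf{\pi}{B}\circ\bar h=\pi_{\PPb(B)}$ by Noetherian induction with base case $n=1$ (plain subsets, including the empty set), and concludes $P\cong\PPb(A)$ via the same two injectivity arguments. Your remark that the absence of atoms in $\PPb$ shifts the induction base to the level of sets is exactly the one adjustment the paper makes relative to the $\PPa$ proof.
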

\begin{proof}
Given a pullback diagram $(PB)$ and the diagram $(1)$ in  \cSets   with $g_1: C \hookrightarrow D$ injective.\\
 Pullbacks and  the superpower set functor (see Lemma~\ref{l.PPb_prsv_injct})  preserve injections, so \\
$\pi_B: A\hookrightarrow B$, $\PPb(\pi_B): \PPb(A) \hookrightarrow \PPb(B)$
	and $\pi_{\PPb(B)}: P \hookrightarrow \PPb(B)$ are injective.\\
$\xymatrix@=15mm{
	           A \ar@{^{(}->}[r]|{\pi_B}  \ar[d]|{\pi_C}    \ar@{}[dr]|{(PB)}
		  & B   \ar[d]|{f_1}   \\
			   C \ar@{^{(}->}[r]|{g_1} 
			&  D
		}
	$ \hfill
	$\xymatrix@=15mm{
	            P   \ar@{..>}@/^/[dr]|{\bar{h}}  \ar@{^{(}->}@/^/[drr]|{\pi_{\PPb(B)} }  \ar@/_/[ddr]|{\pi_{\PPb(C)}  }  &&\\
       &    \PPb(A) \ar@{^{(}->}[r]|{\ppf{\pi}{B}}  \ar[d]|{\ppf{f\pi}{C}}  \ar@{}[dr]|{(1)} \ar@/^/[ul]|h \ar@{}[ur]^<<<{(2)} \ar@{}[dl]|<<<{(3)} 
		  &   \PPb(B)   \ar[d]|{\ppf{f}{1}}
			\\
			 &    \PPb(C) \ar@{^{(}->}[r]|{\ppf{g}{1}}
			&   \PPb(D)
		}
	$
	\\
	Since $(PB)$ is a pullback diagram we have  $A=\{ (b,c) \mid f_1(b) = g_1(c)\}$.\\
	 $(1)$ commutes, since \PPb  is a functor.\\[2mm]
	Let $P$ be the pullback of $( \PPb(D), \ppf{f}{1},\ppf{g}{1})$, so $ \ppf{f}{1} \circ \pi_{\PPb(B)}  =  \pi_{\PPb(C)}  \circ \ppf{g}{1}$.\\
	Hence, $P= \{(B',C') \mid \ppf{f}{1} (B') = \ppf{g}{1} (C') \}   \subseteq \PPb(B) \times \PPb(C)$.\\
	Moreover, there is the unique $h: \PPb(A) \to P$ 
	s.t.  $h(A') = (\ppf{\pi}{B}(A'), \ppf{f\pi}{C}(A')) $ for all $A'\subseteq A$ so that  the diagrams $(2)$ and $(3)$ commute along $h$:\\ \hspace*{\fill} 
	$\pi_{\PPb(B)} \circ h = \ppf{\pi}{B}$ and  
	$\pi_{\PPb(C)} \circ h =  \ppf{\pi}{C}$.\\[2mm]
	We define  $\bar{h}: P \to \PPb(A)$ with 
	$$\bar{h}(X,Y) = \{ (x,y) \mid x \in X\cap B, \; y \in Y\cap C, \;  f_1(x) = g_1(y)\}
	\cup \bigcup_{(X',Y') \in (X-B) \times (Y-C)} \bar{h}(X',Y')
	$$
	and  have:
	\begin{enumerate}
		\item 		$\bar{h}$ is well-defined since $\bar{h}(X,Y)  \in \PPb(A)$.
	  \item $(2)$ commutes along $\bar{h}$, i.e.  $\ppf{\pi}{B} \circ \bar{h}  = \pi_{\PPb(B)}(X,Y)$		
					by induction over  the number of  nested parentheses $n$:
		 \begin{description}
			 \item[IA] ($n=1$, i.e. sets): 
			       $\ppf{\pi}{B} \circ \bar{h} (\emptyset,\emptyset) = \ppf{\pi}{B}( \emptyset) = \emptyset  = \pi_{\PPb(B)}(\emptyset,\emptyset) )$\\
						 For $X \subseteq B$  and $Y \subseteq C$
			\begin{align*}
			    \ppf{\pi}{B} \circ \bar{h} (X,Y)\\
					                 & =  \ppf{\pi}{B}( \{ (x,y) \mid x \in X, \; y \in Y, \;  f_1(x) = g_1(y)\}  )\\
													&=  \{ x \mid x \in X \land  \exists; y \in Y : f_1(x) = g_1(y)\}) \\
													&=X \\
													& = \pi_{\PPb(B)}(X,Y)
					\end{align*}
					\item[IB] Let  be $\ppf{\pi}{B} \circ \bar{h}  (X,Y)= \pi_{\PPb(B)}(X,Y)$	 for sets  with at most $n$ nested parentheses.
					\item[IS]  Given $(\hat{X}, \hat{Y}) \in P$ with $n+1$ nested parentheses. \\
 Let  $\hat{X} =\hat{B} \cup X$ with 
$\hat{B} \subseteq B$ and $X\cap B = \emptyset$. \\
Let $\hat{Y} =\hat{C} \cup Y$ with 
$\hat{C} \subseteq C$ and $Y\cap C = \emptyset$.  \\
$X$ and $Y$ have at most $n$  nested parentheses. 
\begin{align*}
\ppf{\pi}{B} \circ  \bar{h}(\hat{X}, \hat{Y}) \\
                       &=\ppf{\pi}{B} \circ\big( \; \{ (x,y) \mid x \in \hat{B}, \;  y \in \hat{C} \;  f_1(x) = g_1(y)\} 
                                                      \cup \bigcup_{(X',Y') \in (X\times Y)} \bar{h}(X',Y')  \;  \big)\\
                       &=\{ x\mid x \in \hat{B} \land \exists y \in \hat{C}: f_1(x) = g_1(y)\} 
											                                 \cup \bigcup_{(X',Y') \in (X\times Y)} \ppf{\pi}{B} \circ \bar{h}(X',Y')   \\\\
                       &= \hat{B}  \cup \bigcup_{(X',Y') \in (X\times Y)} \ppf{\pi}{B} \circ \bar{h}(X',Y')   \\
                       &\stackrel{IB}{=}\hat{B}  \cup \bigcup_{(X',Y') \in (X\times Y)} \pi_{\PPb(B)} (X',Y')   \\
                       &= \hat{B}  \cup \bigcup_{(X',Y') \in (X\times Y)}  X'   \\
                       &= \hat{B}  \cup X  \\
											&=  \hat{X}\\
											&= \pi_{\PPb(B)}(\hat{X}, \hat{Y}) 
					\end{align*}
		 \end{description}
	\end{enumerate}

	Now we show $P \cong \PPb(A)$:
	\begin{itemize}
	   \item $h \circ  \bar{h}= id_P$,\\
		    since $\pi_{\PPb(B)}$ is injective and 
				$\pi_{\PPb(B)} \circ h \circ  \bar{h}=  \ppf{\pi}{B}  \circ  \bar{h} = \pi_{\PPb(B)} \circ id_P$
		
		\item $\bar{h} \circ h = id_{\PPb(A)}$, \\
		since $\ppf{\pi}{B}$ is injective and 
		$\ppf{\pi}{B}\circ \bar{h} \circ h = \pi_{\PPb(B)}  \circ h = \ppf{\pi}{B}= \ppf{\pi}{B} \circ  id_{\PPb(A)}$.
\end{itemize}
\end{proof}
Based on $F$-graphs (see \cite{Jae2015,Schn99}), that is a    family  of graph categories induced by a comma category construction using a functor $F$, we can define the category of coalgebraic $F$-graphs.

\begin{example}[$\PPb$-Graph]\label{ex.PPb_graph}
\begin{figure}[h]
	\includegraphics[width=1.00\textwidth]{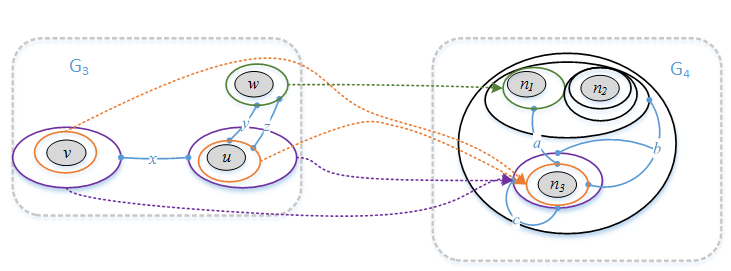}
	\caption{$\PPb$-graph morphism}
	\label{fig:ex_PPb_Graph_mor_2}
\end{figure}
In Fig.~\ref{fig:ex_PPb_Graph_mor_2} the following $\PPb$-graphs are illustrated and a morphism in between.\\
	\parbox{.49\linewidth}{
$G_3=(\st_3: E_3 \to \PPb(N_3))$ \\with
 $ \begin{array}[t]{rr@{\; \mapsto\; }l}
	c_3:  & x &    \{\{\{u\}\},\{\{v\}\}\}\\
				&	y & \{\{u\},\{w\}\} \\
				&	z&   \{\{u\},\{w\}\}
\end{array}
$}
	\parbox{.49\linewidth}{
$G_4=(\st_4: E_4 \to \PPb(N_4))$ \\
with $ \begin{array}[t]{rr@{\; \mapsto\; }l}
	\st_4:& a &    \{\{n_1\},\{n_4\}\}\\
				&	b & \{ \{n_1,n_4\},  \{n_3\}, \{\{n_3\}\}\} \\
				&	c&   \{\{\{n_3\}\}\}
\end{array}
$}
\\
Note, that we only have the recursion of nodes, but the nodes that contain others do not have a name themselves.
Edges are hyperedges given as a subset of the superpower set, but they only have neighbours that are  nodes containing nodes. They cannot have incident vertices.
\end{example}

\begin{definition}[$\PPb$-graphs and the category $\PPbGraph$]
\label{d.cr_FGraph}
The category of coalgebraic graphs $\PPbGraph$ is given by a comma category $\PPbGraph=< Id_\cSets \downarrow \PPb>$.
\end{definition}
Morphisms are given by mappings of the nodes and edges $f =(f_{\NN},f_{\EN}): G_1 \to G_2$ with $f_{\NN}:\NN_1 \to\NN_2$ and $f_{\EN}:\EN_1 \to \EN_2$ so that:
\begin{itemize}
	\item $\ppf{f}{\NN}\circ \node_1 = \node_2 \circ f_{\NN}$
	\item $\ppf{f}{\EN} \circ \st_1 = \st_2 \circ f_{\EN}$
\end{itemize}

\subsection{Node recursion based on $\Potom$}
\begin{definition}[Superpower set $\Potom$]\label{d.potom_set}
Given a set well-founded $M$ we define $\Pot^0 (M) =M$ and $\Pot^1 (M) = \Pot (M) $ the power set of $M$. Then $\Pot^{i+1} (M) = \Pot (\Pot^i (M))$\\

$$\Potom (M) = \bigcup_{i \in \Nat_0} \Pot^i (M)$$
\end{definition}
Note that this superpower set construction $\Potom$ is well-founded sets with the ordering with respect to the number of parentheses (see Appendix~\ref{s.NoetherianInd}).\\
This construction differs from the other notions, as in each subset there are only subsets that the the same depth in terms of nesting. So, for some non-empty set $M$ with $m\in M$, we have $\{m, M\} \notin \Potom(M)$ but $\{m, M\} \in \PPa(M)$  and $\{m, M\} \in \PPb(M)$.

\begin{lemma}[$\Potom$ is a functor]\label{l.Potom_functor}
$\Potom:\cSets \to \cSets$ is defined for sets as in Def.~\ref{d.potom_set} and for functions $f:M \to N$ by
$\ppf{f}{}:\Potom(M) \to \Potom(N)$ with\\
$\ppf{f}{}(\{x\})=
							\begin{cases}
	            \{f(x)\}  &\; ; x\in M\\
							\{\ppf{f}{} (x') \mid x' \in x\} &\; ; \text{ else}
  \end{cases}
$
\end{lemma}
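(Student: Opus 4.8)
The plan is to mirror the structure of the proofs of Lemma~\ref{l.PPa_functor} and Lemma~\ref{l.PPb_functor}, since $\Potom$ is built in the same style. First I would verify that $\ppf{f}{}$ is well-defined as a map $\Potom(M) \to \Potom(N)$: given $x \in \Potom(M)$, it lies in some layer $\Pot^i(M)$, and I would show by induction on $i$ that $\ppf{f}{}(x) \in \Pot^i(N)$. For $i = 0$ this is just $f(x) \in N = \Pot^0(N)$; for $i+1$, any $x \in \Pot^{i+1}(M)$ is a set of elements of $\Pot^i(M)$, so $\ppf{f}{}(x) = \{\ppf{f}{}(x') \mid x' \in x\}$ is a set of elements of $\Pot^i(N)$ by the induction hypothesis, hence an element of $\Pot^{i+1}(N)$. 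This also shows that $\ppf{f}{}$ respects the layering, which is the feature distinguishing $\Potom$ from $\PPa$ and $\PPb$.

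Next I would check the two functoriality equations. For identities, $\ppf{(\mathrm{id}_M)}{} = \mathrm{id}_{\Potom(M)}$ follows by a straightforward induction on the nesting depth: at depth $0$, $\mathrm{id}_M(x) = x$; at depth $i+1$, $\ppf{(\mathrm{id}_M)}{}(x) = \{\ppf{(\mathrm{id}_M)}{}(x') \mid x' \in x\} = \{x' \mid x' \in x\} = x$ using the induction hypothesis. For composition, given $f : M \to N$ and $g : N \to O$, I would show $\ppf{(g \circ f)}{} = \ppf{g}{} \circ \ppf{f}{}$, again by induction on depth: at depth $0$ both sides send $x$ to $g(f(x))$; at depth $i+1$, $\ppf{(g\circ f)}{}(x) = \{\ppf{(g\circ f)}{}(x') \mid x' \in x\} = \{\ppf{g}{}(\ppf{f}{}(x')) \mid x' \in x\} = \ppf{g}{}(\{\ppf{f}{}(x') \mid x' \in x\}) = \ppf{g}{}(\ppf{f}{}(x))$, where the middle step uses the induction hypothesis and the last step uses the defining clause of $\ppf{g}{}$ on the non-atomic set $\ppf{f}{}(x) \in \Pot^{i+1}(N)$.

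I do not expect a serious obstacle here; the argument is a routine Noetherian induction over the number of nested parentheses, exactly as flagged in Appendix~\ref{s.NoetherianInd}. The one point requiring a little care is the case split in the definition of $\ppf{f}{}$: one must be sure that in the inductive step the argument $\ppf{f}{}(x)$ (resp. $\{\ppf{f}{}(x') \mid x' \in x\}$) genuinely falls into the "else" branch, i.e. is not accidentally an element of the base set $N$. Since every value produced at depth $\ge 1$ is a set of previously-constructed values and the layers $\Pot^i(N)$ are, for well-founded $M$, disjoint from $N$ except at $i=0$, this is fine, and the well-definedness step above already records it.
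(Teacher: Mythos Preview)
The paper states this lemma (like the analogous Lemmas~\ref{l.PPa_functor} and~\ref{l.PPb_functor}) without any proof; your proposal therefore supplies strictly more than the paper does, and the layer-by-layer induction you outline is the natural argument and is correct.

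One small remark on your final caveat: the claim that the layers $\Pot^i(N)$ for $i\ge 1$ are disjoint from $N=\Pot^0(N)$ is not automatic in pure ZFC (e.g.\ if $\emptyset\in N$ then $\emptyset$ lies in every layer), so strictly speaking the case split in the definition of $\ppf{f}{}$ is only unambiguous under the paper's standing convention that the base set is ``well-founded'' in the sense of Appendix~\ref{s.NoetherianInd}, i.e.\ its elements are treated as atoms with $np=0$. You already flag this, which is exactly the right level of care; the paper itself does not comment on it.
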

\begin{example}[Functor $\Potom$]
Given sets $M=\{u,v,w,u',v'\}$ and $N=\{ n_1,n_2,n_3, n_4, n_5, n_6\}$  with $f:M\to N$ and
$\begin{array}[t]{cr@{\mapsto}l@{\; ; \;}r@{\mapsto}l @{\; ; \;}r@{\mapsto}l}
    f: &u & n_3  & v & n_3 & w &n_1 \\
		   &u' & n_5 & v'& n_5 \\
\end{array}
$ \\
then we  have  $\ppf{f}{}: \Potom(M) \to \Potom(N)$ with for example\\
$\ppf{f}{}(\{ \{u,v\}, \{\emptyset ,\{ w, \emptyset \}\} \})  =  \{\{n_3\},\{\emptyset, \{n_1,\emptyset\}\} \}$
\end{example}

\begin{lemma}[$\Potom$ preserves injections]
\label{l.Potom_prsv_injct}
Given  injective function $f:M \to N$ then
$\ppf{f}{}:\Potom(M) \to \Potom(N)$  is injective.
\end{lemma}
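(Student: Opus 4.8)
The plan is to mirror the structure of the proofs of Lemma~\ref{l.PPa_prsv_injctPB} and Lemma~\ref{l.PPb_prsv_injct}, carrying out a Noetherian induction over the nesting depth $n$, i.e.\ over the number of nested parentheses in an element of $\Potom(M)$. The key new ingredient, and the reason the statement is in fact slightly easier here than for $\PPa$ and $\PPb$, is the layering property noted after Def.~\ref{d.potom_set}: every element of $\Potom(M)$ lives in a single fixed $\Pot^i(M)$, so within any subset $M' \in \Pot^{i+1}(M)$ all members sit at the uniform depth $i$, and no mixing of depths occurs. This means the induction hypothesis always applies to \emph{all} members of a set simultaneously.

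First I would set up the base case: for $n=0$ an element of $\Potom(M)$ is just a point $x\in M = \Pot^0(M)$, and $\ppf{f}{}(x) = f(x)$, so injectivity of $\ppf{f}{}$ on depth-$0$ elements is exactly injectivity of $f$, which is assumed. (One could equally take $n=1$, i.e.\ ordinary subsets $M_1,M_2\subseteq M$ with $M_1\neq M_2$, as in the $\PPb$ proof, and conclude $\ppf{f}{}(M_1)\neq\ppf{f}{}(M_2)$ directly from injectivity of $f$; either choice of base works.) Then I would state the induction hypothesis: $\ppf{f}{}$ is injective on $\Pot^i(M)$ for all $i\le n$, equivalently on all elements with at most $n$ nested parentheses.

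For the induction step, take $M_1,M_2\in\Pot^{n+1}(M)$ with $M_1\neq M_2$; then both are sets of elements of $\Pot^n(M)$. Pick $x$ witnessing the difference, say $x\in M_1$ and $x\notin M_2$. Since $x\notin M_2$, for every $m\in M_2$ we have $x\neq m$, and both $x$ and $m$ have at most $n$ nested parentheses, so by the induction hypothesis $\ppf{f}{}(x)\neq\ppf{f}{}(m)$ for all $m\in M_2$; hence $\ppf{f}{}(x)\in\ppf{f}{}(M_1)$ but $\ppf{f}{}(x)\notin\ppf{f}{}(M_2)$, so $\ppf{f}{}(M_1)\neq\ppf{f}{}(M_2)$. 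This completes the induction and the proof.

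I do not expect a genuine obstacle here; the argument is essentially verbatim the same as in Lemma~\ref{l.PPb_prsv_injct}, and the only point deserving care is to make the appeal to the layering property explicit so that the induction hypothesis is invoked legitimately — namely, to note that all members of a set $M_1\in\Pot^{n+1}(M)$ genuinely have nesting depth $\le n$, which is what licenses applying the hypothesis to $x$ and to every $m\in M_2$. The empty-set edge cases ($M_1$ or $M_2$ empty, or elements equal to $\emptyset$) are handled automatically since $\emptyset$ has depth $1$ and is covered by the hypothesis, or trivially in the base case.
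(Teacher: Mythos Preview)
Your proposal is correct and follows essentially the same approach as the paper: Noetherian induction on the nesting depth, with the $n=0$ base case handled by injectivity of $f$ and the induction step carried out by picking a witness $x\in M_1\setminus M_2$ and applying the hypothesis to all pairs $x,m$ with $m\in M_2$. Your additional remark about the layering property of $\Potom$ is a nice clarification that the paper leaves implicit, but it does not alter the argument.
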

\begin{proof}
By induction\footnote{To be precise this is an Noetherian induction see Appendix~\ref{s.NoetherianInd}} over the depth of the superpower sets, so $n$ is the number of  nested parentheses:
\begin{description}
	\item[IA $(n=0)$ atomic nodes] Given $x_1,x_2 \in \Potom( M)$ with $x_1 \neq x_2$.\\
	     Since $f$ injective, $\ppf{f}{}(x_1)= f(x_1)  \neq  f(x_2) = \ppf{f}{}(x_2) $. 
	\item[IB]  Let $\ppf{f}{}:\Potom(M) \to \Potom(N)$  be injective for all sets  with at most $n$ nested  parentheses.
	\item[IS]  Given  $M_1,M_2 \in \Potom(M)$ with $M_1 \neq M_2$ having both $n+1$ nested  parentheses.\\
	               Let $x\in M_1\land x \notin M_2$. \\
								 Hence $\ppf{f}{}(x) \in \ppf{f}{}(M_1)$. \\
								$x \notin M_2$ implies for all $m\in M_2$ that $x \neq m$.\\
								$x$ and $m$ have at most $n$ nested  parentheses. \\
								$\ppf{f}{}(x) \neq \ppf{f}{}(m)$ for all $m\in M_2$ as $\ppf{f}{}$ is injective for all sets  with at most $n$ nested  parentheses. \\
								Thus
								$\ppf{f}{}(x) \notin \ppf{f}{}(M_2)$ \\ So, $\Potom(M_1)  \neq \Potom(M_2)$.
\end{description}
\end{proof}

\begin{lemma}[$\Potom$ preserves pullbacks along injective morphisms]
\end{lemma}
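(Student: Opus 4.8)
The plan is to mimic, essentially verbatim, the proofs already given for $\PPa$ and $\PPb$ in Lemmas~\ref{l.PPaprsvinjPB} and the one before, since the only structural difference between $\Potom$ and $\PPb$ is that a set in $\Potom(M)$ at nesting depth $k+1$ contains only elements of uniform depth $k$, whereas in $\PPb$ the depths may be mixed. First I would set up the same two diagrams: a pullback square $(PB)$ with $g_1:C\hookrightarrow D$ injective, yielding $A=\{(b,c)\mid f_1(b)=g_1(c)\}$, and the square $(1)$ obtained by applying $\Potom$, which commutes because $\Potom$ is a functor (Lemma~\ref{l.Potom_functor}). Using Lemma~\ref{l.Potom_prsv_injct}, the maps $\pi_B$, $\Potom(\pi_B)=\ppf{\pi}{B}$ and hence the pullback projection $\pi_{\Potom(B)}:P\hookrightarrow\Potom(B)$ are all injective. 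Let $P$ be the pullback of $(\ppf{f}{1},\ppf{g}{1})$, so $P=\{(B',C')\mid\ppf{f}{1}(B')=\ppf{g}{1}(C')\}\subseteq\Potom(B)\times\Potom(C)$, and let $h:\Potom(A)\to P$ be the induced mediating morphism with $\pi_{\Potom(B)}\circ h=\ppf{\pi}{B}$ and $\pi_{\Potom(C)}\circ h=\ppf{\pi}{C}$.

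The core of the proof is to construct an explicit inverse $\bar h:P\to\Potom(A)$ and show $h\circ\bar h=\mathit{id}_P$ and $\bar h\circ h=\mathit{id}_{\Potom(A)}$. I would define $\bar h$ by recursion on nesting depth: on depth-$0$ elements $(b,c)\in P$ with $b\in B$, $c\in C$ set $\bar h(b,c)=(b,c)$; on a pair $(X,Y)\in P$ of depth $n+1$, where now $X\subseteq\Pot^{n}(B)$ and $Y\subseteq\Pot^{n}(C)$ consist \emph{purely} of depth-$n$ elements, set
$$\bar h(X,Y)=\{\bar h(X',Y')\mid (X',Y')\in X\times Y,\ \ppf{f}{1}(X')=\ppf{g}{1}(Y')\}.$$
This is in fact simpler than the $\PPa$/$\PPb$ cases, because the uniform-depth condition means there is no ``$X\cap B$'' versus ``$X-B$'' case split inside a single set: at depth $n+1$ every element of $X$ has depth exactly $n$, so only the recursive branch occurs. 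One must check $\bar h$ is well-defined, i.e.\ $\bar h(X,Y)\in\Potom(A)$, again by depth induction (the depth-$n$ images land in $\Pot^{n}(A)$, so their set of images lands in $\Pot^{n+1}(A)$).

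Next I would verify that square $(2)$ commutes along $\bar h$, that is $\ppf{\pi}{B}\circ\bar h=\pi_{\Potom(B)}$, by induction on $n$, in direct analogy with the displayed $\PPb$ computation: the base case handles depth-$0$ pairs (and $\emptyset$), and the step unfolds $\bar h(X,Y)$, pushes $\ppf{\pi}{B}$ inside the set-builder, applies the induction hypothesis to replace $\ppf{\pi}{B}\circ\bar h(X',Y')$ by $\pi_{\Potom(B)}(X',Y')=X'$, and uses that $\{X'\mid (X',Y')\in X\times Y,\ \ppf{f}{1}(X')=\ppf{g}{1}(Y')\}=X$, which holds because $(X,Y)\in P$ forces $\ppf{f}{1}(X)=\ppf{g}{1}(Y)$ and every $X'\in X$ then has a matching $Y'\in Y$. (Symmetrically $\ppf{\pi}{C}\circ\bar h=\pi_{\Potom(C)}$.) Finally, the isomorphism $P\cong\Potom(A)$ follows formally exactly as in the earlier lemmas: $h\circ\bar h=\mathit{id}_P$ because $\pi_{\Potom(B)}$ is monic and $\pi_{\Potom(B)}\circ h\circ\bar h=\ppf{\pi}{B}\circ\bar h=\pi_{\Potom(B)}=\pi_{\Potom(B)}\circ\mathit{id}_P$; and $\bar h\circ h=\mathit{id}_{\Potom(A)}$ because $\ppf{\pi}{B}$ is monic and $\ppf{\pi}{B}\circ\bar h\circ h=\pi_{\Potom(B)}\circ h=\ppf{\pi}{B}=\ppf{\pi}{B}\circ\mathit{id}_{\Potom(A)}$.

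The main obstacle is, as in the $\PPa$/$\PPb$ proofs, getting the bookkeeping of the recursive definition of $\bar h$ right and making the Noetherian induction on nesting depth airtight — in particular being careful that ``depth'' is defined so that $P$'s elements genuinely decompose into pairs of strictly smaller depth, and that the depth-$n$ layer of $\Potom$ really does interact with $f_1,g_1$ only via the depth-$(n-1)$ layer. Once that is set up, every algebraic step is a routine transcription of the $\PPb$ argument with the mixed-depth case eliminated, so I do not expect any genuinely new difficulty; indeed the statement is slightly easier than Lemma~\ref{l.PPaprsvinjPB}.
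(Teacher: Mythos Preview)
Your proposal is correct and follows essentially the same route as the paper: set up the two diagrams, take the comparison map $h:\Potom(A)\to P$, build an explicit inverse $\bar h$ by recursion on nesting depth, verify $\ppf{\pi}{B}\circ\bar h=\pi_{\Potom(B)}$ by Noetherian induction, and conclude $P\cong\Potom(A)$ from injectivity of $\ppf{\pi}{B}$ and $\pi_{\Potom(B)}$. The one difference is that the paper does \emph{not} exploit the uniform-depth property of $\Potom$: it simply reuses the $\PPa$ definition of $\bar h$ verbatim, keeping the ``$X\cap B$ versus $X-B$'' case split (which, as you observe, degenerates at every depth $\geq 2$), whereas you streamline $\bar h$ to a single recursive clause; your version is tidier but not materially different.
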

\begin{proof}
Given a pullback diagram $(PB)$ and the diagram $(1)$ in  \cSets   with $g_1: C \hookrightarrow D$ injective.\\
 Pullbacks and  the superpower set functor (see Lemma~\ref{l.Potom_prsv_injct})  preserve injections, so \\
$\pi_B: A\hookrightarrow B$, $\Potom(\pi_B): \Potom(A) \hookrightarrow \Potom(B)$
	and $\pi_{\Potom(B)}: P \hookrightarrow \Potom(B)$ are injective.\\
$\xymatrix@=15mm{
	           A \ar@{^{(}->}[r]|{\pi_B}  \ar[d]|{\pi_C}    \ar@{}[dr]|{(PB)}
		  & B   \ar[d]|{f_1}   \\
			   C \ar@{^{(}->}[r]|{g_1} 
			&  D
		}
	$ \hfill
	$\xymatrix@=15mm{
	            P   \ar@{..>}@/^/[dr]|{\bar{h}}  \ar@{^{(}->}@/^/[drr]|{\pi_{\Potom(B)} }  \ar@/_/[ddr]|{\pi_{\Potom(C)}  }  &&\\
       &    \Potom(A) \ar@{^{(}->}[r]|{\ppf{\pi}{B}}  \ar[d]|{\ppf{f\pi}{C}}  \ar@{}[dr]|{(1)} \ar@/^/[ul]|h \ar@{}[ur]^<<<{(2)} \ar@{}[dl]|<<<{(3)} 
		  &   \Potom(B)   \ar[d]|{\ppf{f}{1}}
			\\
			 &    \Potom(C) \ar@{^{(}->}[r]|{\ppf{g}{1}}
			&   \Potom(D)
		}
	$
	\\
	Since $(PB)$ is a pullback diagram we have  $A=\{ (b,c) \mid f_1(b) = g_1(c)\}$.\\
	 $(1)$ commutes, since \Potom  is a functor.\\[2mm]
	Let $P$ be the pullback of $( \Potom(D), \ppf{f}{1},\ppf{g}{1})$, so $ \ppf{f}{1} \circ \pi_{\Potom(B)}  =  \pi_{\Potom(C)}  \circ \ppf{g}{1}$.\\
	Hence, $P= \{(B',C') \mid \ppf{f}{1} (B') = \ppf{g}{1} (C') \}   \subseteq \Potom(B) \times \Potom(C)$.\\
	Moreover, there is the unique $h: \Potom(A) \to P$ 
	s.t. $h(A') = (\ppf{\pi}{B}(A'), \ppf{f\pi}{C}(A')) $ for all $A'\subseteq A$ so that  the diagrams $(2)$ and $(3)$ commute along $h$:\\ \hspace*{\fill} 
	$\pi_{\Potom(B)} \circ h = \ppf{\pi}{B}$ and  
	$\pi_{\Potom(C)} \circ h =  \ppf{\pi}{C}$.\\[2mm]
	We define  $\bar{h}: P \to \Potom(A)$ with 
	\\
	$\bar{h}((X,Y)) =  \begin{cases}
	        (b,c) \; ;  \text{ if } X =b\in B , \;  Y=c\in  C \\
	       \{ (x,y) \mid x \in X\cap B, \; y \in Y\cap C, \;  f_1(x) = g_1(y)\}\\
				 \text{\hspace*{32mm}}
	\cup \bigcup_{(X',Y') \in (X-B) \times (Y-C)} \bar{h}(X',Y')  \; ; \text{ else}
	\end{cases}
	$
	 \\
	and  have:
	\begin{enumerate}
		\item 		$\bar{h}$ is well-defined since $\bar{h}(X,Y)  \in \Potom(A)$.
	  \item $(2)$ commutes along $\bar{h}$, i.e.  $\ppf{\pi}{B} \circ \bar{h}  = \pi_{\Potom(B)}(X,Y)$		
					by induction over  the number of  nested parentheses $n$:
		 \begin{description}
		
			 \item[IA] ($n=0$, i.e. atomic nodes): Given $ (b,c) \in P$ with
			      $b\in B$ and $c\in C$.
			$\ppf{\pi}{B} \circ \bar{h} (b,c) = \ppf{\pi}{B}(b,c) = b = \pi_{\Potom(B)}(b,c)$

					\item[IB] Let  be $\ppf{\pi}{B} \circ \bar{h}  (X,Y)= \pi_{\Potom(B)}(X,Y)$	 for sets  with at most $n$ nested parentheses.
					\item[IS]  Given $(\hat{X}, \hat{Y}) \in P$ with $n+1$ nested parentheses. \\
 Let  $\hat{X} =\hat{B} \cup X$ with 
$\hat{B} \subseteq B$ and $X\cap B = \emptyset$. \\
Let $\hat{Y} =\hat{C} \cup Y$ with 
$\hat{C} \subseteq C$ and $Y\cap C = \emptyset$.  \\
$X$ and $Y$ have at most $n$  nested parentheses. 
\begin{align*}
\ppf{\pi}{B} \circ  \bar{h}(\hat{X}, \hat{Y}) \\
                       &=\ppf{\pi}{B} \circ\big( \; \{ (x,y) \mid x \in \hat{B}, \;  y \in \hat{C} \;  f_1(x) = g_1(y)\} 
                                                      \cup \bigcup_{(X',Y') \in (X\times Y)} \bar{h}(X',Y')  \;  \big)\\
                       &=\{ x\mid x \in \hat{B} \land \exists y \in \hat{C}: f_1(x) = g_1(y)\} 
											                                 \cup \bigcup_{(X',Y') \in (X\times Y)} \ppf{\pi}{B} \circ \bar{h}(X',Y')   \\\\
                       &= \hat{B}  \cup \bigcup_{(X',Y') \in (X\times Y)} \ppf{\pi}{B} \circ \bar{h}(X',Y')   \\
                       &\stackrel{IB}{=}\hat{B}  \cup \bigcup_{(X',Y') \in (X\times Y)} \pi_{\Potom(B)} (X',Y')   \\
                       &= \hat{B}  \cup \bigcup_{(X',Y') \in (X\times Y)}  X'   \\
                       &= \hat{B}  \cup X  \\
											&=  \hat{X}\\
											&= \pi_{\Potom(B)}(\hat{X}, \hat{Y}) 
					\end{align*}
		 \end{description}
	\end{enumerate}

	Now we show $P \cong \Potom(A)$:
	\begin{itemize}
	   \item $h \circ  \bar{h}= id_P$,\\
		    since $\pi_{\Potom(B)}$ is injective and 
				$\pi_{\Potom(B)} \circ h \circ  \bar{h}=  \ppf{\pi}{B}  \circ  \bar{h} = \pi_{\Potom(B)} \circ id_P$
		
		\item $\bar{h} \circ h = id_{\Potom(A)}$, \\
		since $\ppf{\pi}{B}$ is injective and 
		$\ppf{\pi}{B}\circ \bar{h} \circ h = \pi_{\Potom(B)}  \circ h = \ppf{\pi}{B}= \ppf{\pi}{B} \circ  id_{\Potom(A)}$.
\end{itemize}
\end{proof}
Based on $F$-graphs (see \cite{Jae2015,Schn99}), that is a    family  of graph categories induced by a comma category construction using a functor $F$, we can define the category of coalgebraic $F$-graphs.

\begin{example}[Coalgebraic $F$-Graph based on $\Potom$]\label{ex.Potom_graph}
\begin{figure}[h]
	\includegraphics[width=1.00\textwidth]{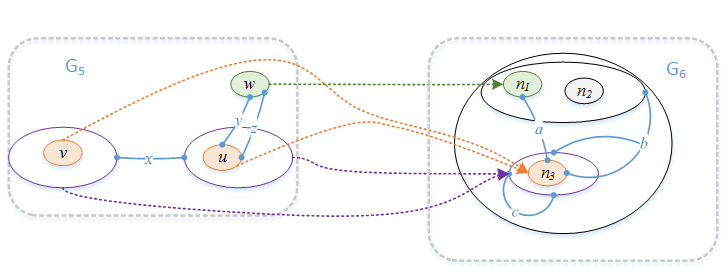}
	\caption{$\Potom$-graph morphism}
	\label{fig:ex_Potom_Graph_mor_2}
\end{figure}
In Fig.~\ref{fig:ex_Potom_Graph_mor_2} the following coalgebraic $F$-graphs are illustrated and a morphism in between.\\
	\parbox{.49\linewidth}{
$G_5=(\st_5: E_5 \to \Potom(N_5))$ \\with
 $ \begin{array}[t]{rr@{\; \mapsto\; }l}
	\st_5:  & x &    \{\{\{u\}\},\{\{v\}\}\}\\
				&	y & \{\{u\},\{w\}\} \\
				&	z&   \{\{u\},\{w\}\}
\end{array}
$}
	\parbox{.49\linewidth}{
$G_6=(\st_6: E_6 \to \Potom(N_6))$ \\
with $ \begin{array}[t]{rr@{\; \mapsto\; }l}
	\st_6:& a &    \{\{n_1\},\{n_2\}\}\\
				&	b & \{ \{n_1,n_2\},  \{n_3\}, n_3\} \\
				&	c&   \{\{n_3\}\}
\end{array}
$}
\\
Note, that we only have the recursion of nodes, but the nodes that contain others do not have a name themselves.
Edges are hyperedges given as a subset of the superpower set, but they cannot have incident vertices.
\end{example}

\begin{definition}[$\Potom$-Graph and the category of coalgebraic graphs]\label{d.cr_FGraph}
The category of $\Potom$-graphs $\PotomGraph$ is given by a comma category $\PotomGraph =< Id_\cSets \downarrow \Potom>$.
\end{definition}
 $\Potom$-graph morphisms are given by mappings of the nodes and edges $f =(f_{\NN},f_{\EN}): G_1 \to G_2$ with $f_{\NN}:\NN_1 \to\NN_2$ and $f_{\EN}:\EN_1 \to \EN_2$ so that:
\begin{itemize}
	\item $\ppf{f}{\NN}\circ \node_1 = \node_2 \circ f_{\NN}$
	\item $\ppf{f}{\EN} \circ \st_1 = \st_2 \circ f_{\EN}$
\end{itemize}

\subsection{Differences between $\PPa$, $\PPb$ and $\Potom$}\label{ss.diff}
We have for any set $S$ that $\PPb(M) \subseteq \PPa(M)$ and $\Potom(M) \subseteq \PPa(M)$.
The graphs in examples Ex.~\ref{Ex.PPa_graph}, Ex.~\ref{ex.PPb_graph},  and Ex.~\ref{ex.Potom_graph}  and the superpower set functors are listed in the table below stating which graph can be constructed using which functor.

	\parbox{.34\linewidth}{
	\begin{tabular}{r||c|c|c|}
		                & $\PPb$    & $\PPa$   & $\Potom$ \\ \hline \hline
					$G1$ &   yes                         & yes    & yes  \\
					$G2$ &   yes                         & yes    & no  see \ref{i}  \\
					$G3$ &    no see  \ref{ii}  & yes    & yes\\
					$G4$  &   no see \ref{iii}  & yes    & no  see \ref{iv}   \\
					$G5$ &   no see  \ref{ii}   & yes    & yes  \\
					$G6$ &   no see  \ref{v}    & yes    & yes  \\
	\end{tabular}}
	\parbox{.65\linewidth}{\begin{enumerate}
	\item \label{i} because $\{ \{n_1\} ,\{\{n_2\}\} \} \notin \Potom(\{n_1,n_2,n_3\})$
	\item \label{ii}   because $u, w \notin \PPb(\{u,v,w\})$
	\item \label{iii} because $\{ n_1, \{n_2\}  \} \notin \PPb(\{n_1,n_2,n_3\})$
	\item \label{iv} because $\{ n_1 ,\{n_2\} \} \notin \Potom(\{n_1,n_2,n_3\})$
	\item \label{v} because $n_1 \notin \PPb(\{n_1,n_2,n_3\})$
	\end{enumerate}}

\section{$\M$-adhesive Categories of $F$-Coalgebras}
\label{s.VKcoalg}
% VKcoalg.tex

A endofunctor $F:\cSets \to \cSets$  gives rise the  category of coalgebras $\cSetsF{F}$
with $M \stackrel{\alpha_M}{\longleftarrow} F(M)$ -- also denoted by $(M,\alpha_M)$ -- being the objects and morphisms $f:(M,\alpha_M) \to (N,\alpha_N)$ -- called $F$-homomorphism -- so that $(1)$ commutes in \cSets (see \cite{Ru00} ):
$$
   \xymatrix{
	     M \ar[r]^{\alpha_M}  \ar[d]|f  \ar@{}[dr]|{(1)}
	&  F(M)      \ar[d]|{F(f)} \\
	     N \ar[r]^{\alpha_N}
	&  F(N)
	}
$$

\begin{lemma}[Pullbacks along injections in \cSetsF{F}]
Given a functor $F: \cSets \to \cSets$  that preserves pullbacks along an  injective morphism,
then \cSetsF{F} has pullbacks along an  injective F-homomorphism.
\end{lemma}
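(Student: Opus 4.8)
The plan is to construct the pullback in $\cSetsF{F}$ by taking the underlying pullback in $\cSets$ and equipping it with a canonically induced coalgebra structure, then checking that it has the required universal property. Concretely, suppose we are given $F$-homomorphisms $g_1 : (C,\alpha_C) \to (D,\alpha_D)$ and $f_1 : (B,\alpha_B) \to (D,\alpha_D)$ with $g_1 : C \hookrightarrow D$ injective. First I would form the pullback $A = \{(b,c) \mid f_1(b) = g_1(c)\}$ of $(D, f_1, g_1)$ in $\cSets$, with projections $\pi_B : A \to B$ and $\pi_C : A \to C$; recall that $\pi_B$ is again injective since pullbacks preserve injections in $\cSets$.

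Next I would build the coalgebra structure $\alpha_A : A \to F(A)$. The idea is that, since $F$ preserves pullbacks along the injective morphism $g_1$, the square $F(A)$ with projections $F(\pi_B), F(\pi_C)$ is again a pullback of $(F(D), F(f_1), F(g_1))$ in $\cSets$. The two maps $\alpha_B \circ \pi_B : A \to F(B)$ and $\alpha_C \circ \pi_C : A \to F(C)$ agree after postcomposition with $F(f_1)$ and $F(g_1)$ respectively — this follows from the $F$-homomorphism conditions for $f_1$ and $g_1$ together with the fact that the original square commutes — so by the universal property of the pullback $F(A)$ there is a unique $\alpha_A : A \to F(A)$ with $F(\pi_B) \circ \alpha_A = \alpha_B \circ \pi_B$ and $F(\pi_C) \circ \alpha_A = \alpha_C \circ \pi_C$. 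These two equations say precisely that $\pi_B$ and $\pi_C$ are $F$-homomorphisms $(A,\alpha_A) \to (B,\alpha_B)$ and $(A,\alpha_A) \to (C,\alpha_C)$, and one checks that the square in $\cSetsF{F}$ commutes because it does so on underlying sets.

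Finally I would verify the universal property of $(A,\alpha_A)$ as a pullback in $\cSetsF{F}$. Given any $F$-coalgebra $(P,\alpha_P)$ with $F$-homomorphisms $p_B : (P,\alpha_P) \to (B,\alpha_B)$ and $p_C : (P,\alpha_P) \to (C,\alpha_C)$ agreeing into $(D,\alpha_D)$, the underlying universal property of $A$ in $\cSets$ yields a unique set map $u : P \to A$ with $\pi_B \circ u = p_B$ and $\pi_C \circ u = p_C$. It remains to see that $u$ is an $F$-homomorphism, i.e. $F(u) \circ \alpha_P = \alpha_A \circ u$; this follows because both sides, when postcomposed with the jointly monic pair $F(\pi_B), F(\pi_C)$ (jointly monic since $F(A)$ is a pullback, hence the two legs are jointly monic), give equal results — a short diagram chase using that $p_B, p_C, u$ commute with the coalgebra structures. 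Uniqueness of $u$ as an $F$-homomorphism is immediate since it is already unique as a set map.

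The main obstacle is the construction of $\alpha_A$: everything hinges on using the hypothesis that $F$ preserves the pullback along the injective morphism $g_1$, so that $F(A)$ is genuinely the pullback of the $F$-image square and the mediating arrow $\alpha_A$ exists and is unique. Without that hypothesis one only gets a comparison map into $F(A)$ from the pullback of the $F$-image square, not a coalgebra structure on $A$ itself. The injectivity of $g_1$ (and hence of $\pi_B$, $F(\pi_B)$, etc.) is what lets us invoke the preservation hypothesis in the form stated; the rest is routine diagram chasing with jointly monic pairs.
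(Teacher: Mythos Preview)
Your proposal is correct and follows essentially the same approach as the paper: form the pullback $A$ in $\cSets$, use that $F$ preserves pullbacks along the injective leg to make $F(A)$ a pullback and induce the coalgebra structure $\alpha_A$, then verify the universal property via the underlying $\cSets$-pullback together with the uniqueness/joint-monicity of the mediating map into $F(A)$. Your write-up is in fact slightly more explicit than the paper's about why the mediating map $u$ is an $F$-homomorphism (invoking joint monicity of $F(\pi_B), F(\pi_C)$), but the underlying argument is the same.
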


\begin{proof}
Given $(B,\alpha_B) \injct{f} (D,\alpha_D)  \von{g} (C,\alpha_C)$. Then we have 
(PB1) in \cSets below. This results in $(PB2)$ since $F$ preserves pullbacks along an  injective morphism. $\alpha_A$ is the unique induced morphism for 
$ F(g) \circ ( \alpha_C \circ \pi_C)= F(f) \circ ( \alpha_B \circ \pi_B)$. So, $(1)$ commutes.\\[2mm]
$(A,\alpha_A)$ is pullback in \cSetsF{F} because we have the diagrams below\\
   	\parbox{.6\linewidth}{
	in \cSets: \\
		$
	   \xymatrix@C=6mm{
		      &   X \ar[rrr]^{\alpha_X}  \ar@/_/[ddl]_{g'}\ar@/^/[dddr]|{f'} \ar[d]|h
			&&&  F(X)    \ar@/_/[ddl]_(.3){F(g')}\ar@/^/[dddr]|{F(f')}   \ar[d]|{F(h)}\\
		      &  A   \ar[rrr]^{\alpha_A}  \ar[dl]|{\pi_B}    \ar@{^{(}->}[ddr]|(.6){\pi_C}    \ar@{}[ddd]|{(PB1)}
			&&&  F(A)   \ar[dl]|{F(\pi_B)}    \ar@{^{(}->}[ddr]|{F(\pi_C)}      \ar@{}[ddd]|{(PB2)}\\
					    B  \ar[rrr]^(.3){\alpha_B}  \ar@{^{(}->}[ddr]|f          
			&&& F(B) \ar@{^{(}->}[ddr]|(.6){F(f)} \\
				&&  C   \ar[rrr]^(.3){\alpha_C} \ar[dl]|g
			&&&  F(C)   \ar[dl]|{F(g)}\\
					 & D		  \ar[rrr]^{\alpha_D}
			&&& F(D)
		}
	 $
	}
	\parbox{.4\linewidth}{
	in \cSetsF{F}:\\
		$
	   \xymatrix@C=6mm{
		    &  (X,\alpha_X) \ar@/_/[ddl]_{g'}\ar@/^/[dddr]|{f'}  \ar[d]|h\\
		    & (A,\alpha_A)  \ar[dl]|{\pi_B}    \ar@{^{(}->}[ddr]|(.6){\pi_C}    \ar@{}[ddd]|{(1)} \\
				    (B,\alpha_B)   \ar@{^{(}->}[ddr]|f    \\
				&& (C,\alpha_C)  \ar[dl]|g\\
				& (D,\alpha_D)
		}
		$
		}
		 \\
		The comparison object $(X,\alpha_X)$ with $f\circ g' = g \circ f'$ in $\cSetsF{F}$ leads in $\cSets$ to the induced 
		morphisms  $h:X\to A$ and $F(h) : F(X) \to F(A)$ commuting the corresponding triangles. As $F(A)$ is pullback in $\cSets$, we have $F(h) \circ \alpha_X =  \alpha_A \circ h$.
		Hence, $h$ is the induced morphism in \cSetsF{F} as well.
\end{proof}

\begin{corollary}[Pullbacks along injections  in \cSetsF{F}]
Given a functor $F: \cSets \to \cSets$  that preserves pullbacks along an  injective morphism,
then \cSetsF{F} has pullbacks along an  injective F-homomorphism.
\end{corollary}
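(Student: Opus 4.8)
The plan is to observe first that the Corollary as printed is word-for-word the Lemma immediately preceding it, so at face value the proof is a single line: ``immediate from the previous Lemma.'' I suspect the intended content is the concrete consequence obtained by feeding the three superpower-set functors into that Lemma: since each of $\PPa$, $\PPb$ and $\Potom$ preserves pullbacks along injective morphisms (Lemma~\ref{l.PPaprsvinjPB} and its $\PPb$- and $\Potom$-analogues), the Lemma yields that $\cSetsF{\PPa}$, $\cSetsF{\PPb}$ and $\cSetsF{\Potom}$ each have pullbacks along injective $F$-homomorphisms. I would therefore state the Corollary in that specialised form and give this two-line derivation, citing the three preservation lemmas.

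If instead a self-contained re-proof is wanted, I would simply rerun the argument of the preceding Lemma. Given $(B,\alpha_B)\injct{f}(D,\alpha_D)\von{g}(C,\alpha_C)$ with $f$ injective, take the pullback $A$ of $(B,f,g)$ in $\cSets$ with projections $\pi_B,\pi_C$ (so $\pi_C$ is again injective, since pullbacks preserve monos), apply $F$ to obtain the pullback square on $F(A)$, and let $\alpha_A\colon A\to F(A)$ be the unique morphism induced by the cone $(\alpha_B\circ\pi_B,\alpha_C\circ\pi_C)$ over $F(D)$. Then $(A,\alpha_A)$ is an $F$-coalgebra and $\pi_B,\pi_C$ are $F$-homomorphisms, and the universal property transports from $\cSets$ to $\cSetsF{F}$ exactly as in the Lemma: a comparison object $(X,\alpha_X)$ with $f\circ g'=g\circ f'$ in $\cSetsF{F}$ underlies one in $\cSets$, the induced $h$ commutes with the coalgebra structures because $F(A)$ is a pullback, and uniqueness is inherited from $\cSets$.

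The only point that needs a word of care — and where I expect a referee to look — is the translation of vocabulary: ``monomorphism'' in $\cSets$ is the same as ``injective function,'' and an injective $F$-homomorphism is a monomorphism (indeed a regular mono) in $\cSetsF{F}$ because the forgetful functor $\cSetsF{F}\to\cSets$ creates the relevant limits. This identification is what makes the Corollary the right stepping stone: it lets one later take $\M$ to be the class of injective $F$-homomorphisms and verify the remaining van Kampen-type conditions on the way to $\M$-adhesiveness. Beyond that bookkeeping there is no real obstacle here; the genuine analytic work — that $\PPa$, $\PPb$, $\Potom$ actually preserve pullbacks along injections — has already been carried out in the earlier lemmas.
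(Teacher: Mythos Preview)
Your observation is correct: the Corollary is a verbatim duplicate of the preceding Lemma, and the paper gives no separate proof for it --- it is simply restated immediately after the Lemma's proof with no further argument. Your one-line ``immediate from the previous Lemma'' is therefore exactly the paper's (implicit) approach; the additional speculation about specialisation to $\PPa$, $\PPb$, $\Potom$ and the self-contained re-proof are reasonable elaborations but go beyond what the paper itself does.
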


\subsubsection{Concerning the Vertical Weak VK Square}

\begin{definition}[Class  of monomorphisms $\M$]
Let \M be  a  class  of monomorphisms in \cSets that is PO-PB-compatible, that is:
\begin{enumerate}
	\item Pushouts along \M-morphisms exist and \M is stable under pushouts.
\item Pullbacks along \M-morphisms exist and \M is stable under pullbacks.
\item \M contains all identities and is closed under composition.
\end{enumerate}
\end{definition}
According to Prop.~4.7 in \cite{Ru00}  if $f:M\to N$ is injective in \cSets  then $f$ is  an $F$-monomorphism in $\cSetsF{F}$. Obviously the class of all injective functions  $\M_F=\{(A,\alpha_A) \injct{f} (B,\alpha_B) \mid f \text{ is injective in \cSets } \}$ 
is  PO-PB-compatible.

\begin{theorem}[$(\cSetsF{F},\M_F)$ is an $\M$-Adhesive Category]
If $F$ preserves pullbacks along injective morphisms, then $(\cSetsF{F},\M_F)$ is an $\M$-adhesive category.
\end{theorem}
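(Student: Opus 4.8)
The plan is to verify the standard characterization of $\M$-adhesive categories: $(\cSetsF{F},\M_F)$ is $\M$-adhesive if (i) $\M_F$ is a PO-PB-compatible class of monomorphisms, (ii) $\cSetsF{F}$ has pushouts along $\M_F$-morphisms and pullbacks along $\M_F$-morphisms, and (iii) these pushouts are \emph{vertical weak} van Kampen squares, i.e. the weak VK property holds for pushouts along $\M_F$-morphisms. Property (i) is already noted in the excerpt (the class of injective $F$-homomorphisms is PO-PB-compatible, using Prop.~4.7 of \cite{Ru00}). Part of (ii), namely existence of pullbacks along injective $F$-homomorphisms, is exactly the preceding lemma/corollary; existence of pushouts along injective $F$-homomorphisms follows because $\cSets$ has such pushouts, they are created along monos in the coalgebra category (the pushout object carries a unique coalgebra structure since the pushout injection opposite to the mono is used to induce $\alpha$), and injections are stable under pushout in $\cSets$. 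So the real work is the weak VK condition.

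First I would set up the vertical weak VK square in $\cSetsF{F}$: a pushout $(PO)$ of $(B,\alpha_B) \xleftarrow{m} (A,\alpha_A) \xrightarrow{f} (C,\alpha_C)$ with $m \in \M_F$, on top of which sits a commutative cube whose vertical faces are pullbacks and whose top-back-left vertical arrow (the one over $m$) is in $\M_F$; the goal is to show that the top face is a pushout iff the remaining two vertical faces are pullbacks. The approach is to transport the whole cube through the forgetful functor $U:\cSetsF{F}\to\cSets$, apply the fact that $\cSets$ is $\M$-adhesive (so the corresponding weak VK statement holds downstairs), and then lift the conclusion back. Concretely, $U$ preserves pullbacks (it is right adjoint / creates limits) and preserves pushouts along monos (shown in establishing (ii)); the hypothesis that $F$ preserves pullbacks along injective morphisms is what guarantees the coalgebra structures on the pullback objects in the cube exist and are unique, so that the cube in $\cSets$ genuinely comes from a cube in $\cSetsF{F}$ and, conversely, a pushout/pullback square in $\cSets$ between underlying sets of coalgebras lifts uniquely to $\cSetsF{F}$.

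The two directions then go as follows. For the ``pushout $\Rightarrow$ pullbacks'' direction: given the top face is a pushout in $\cSetsF{F}$, applying $U$ gives a pushout in $\cSets$ along a mono, the bottom face is a pushout along a mono in $\cSets$, the back faces are pullbacks in $\cSets$, so by $\M$-adhesiveness of $\cSets$ the front vertical faces are pullbacks in $\cSets$; since $U$ creates pullbacks and all objects involved are coalgebras, these are pullbacks in $\cSetsF{F}$. For the converse ``pullbacks $\Rightarrow$ pushout'' direction: the four vertical faces being pullbacks in $\cSetsF{F}$ give, via $U$, four pullbacks in $\cSets$; together with the bottom pushout this yields (again by $\M$-adhesiveness of $\cSets$, using that $m$ downstairs is a mono) that the top face is a pushout in $\cSets$; finally I must check this pushout lifts to $\cSetsF{F}$, i.e. that the induced coalgebra structure on the top-pushout object is compatible — this is where I would use that $F$ preserves the relevant pullback (to identify the pushout object's coalgebra as the uniquely induced one) and the universal property to conclude it is a pushout in $\cSetsF{F}$.

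The main obstacle I expect is precisely this last lifting step in the converse direction: showing that a pushout-along-mono of underlying sets, sitting inside a cube of coalgebras, automatically carries a coalgebra structure making it the pushout in $\cSetsF{F}$, and that no extra hypothesis beyond ``$F$ preserves pullbacks along injections'' is needed. The delicate point is that pushouts in a coalgebra category are generally subtle, but along monos one can build $\alpha$ on the pushout object by using the section/retraction behaviour of the mono's pushout-complement together with $F$ applied to the pushout square; verifying well-definedness and naturality of this induced $\alpha$, and that it is compatible with the cube, is the technical heart. Everything else reduces to bookkeeping with the ambient $\M$-adhesiveness of $\cSets$ and the limit/colimit-creation properties of $U$, which I would invoke rather than re-prove.
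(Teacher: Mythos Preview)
Your approach is essentially the same as the paper's: reduce the vertical weak VK property to $\cSets$ via the forgetful functor $U$, invoke adhesiveness of $\cSets$, and lift back. The one place where you diverge is in what you flag as the ``main obstacle'' and ``technical heart'', namely the lifting of the top pushout in the $\Leftarrow$ direction and, relatedly, the construction of pushouts along $\M_F$-morphisms. The paper disposes of both in a single stroke by citing that $\cSetsF{F}$ is (finitely) cocomplete for \emph{any} endofunctor $F$ on $\cSets$ (Rutten, Thm~4.2): the forgetful functor $U$ creates all colimits, so pushouts in $\cSetsF{F}$ exist and are computed on underlying sets. Consequently, once the top square is a pushout in $\cSets$ it is automatically a pushout in $\cSetsF{F}$; there is nothing delicate to verify, and in particular the hypothesis that $F$ preserves pullbacks along injections plays no role at this point. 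That hypothesis is used only where you first placed it: to ensure that pullbacks along $\M_F$-morphisms are created by $U$, so that a pullback square of underlying sets (with one leg injective) lifts to a pullback in $\cSetsF{F}$ in the $\Rightarrow$ direction. Your suggestion to build $\alpha$ on the pushout object ``by hand'' using a section/retraction along the mono would work, but it is unnecessary extra labour compared to invoking colimit creation.
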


\begin{proof}
 $(\cSetsF{F},\M_F)$ is an $\M$-adhesive category:
 \begin{enumerate}
	 \item Pushouts in $\cSetsF{F}$ along $m\in \M_F$  exist, since 
             $\cSetsF{F}$ is finitely cocomplete (Thm 4.2 \cite{Ru00}) for arbitrary $F:\cSets \to \cSets$.
\item and they are vertical weak VK squares, i.e.  for all commutative cubes $(2)$  where all vertical morphisms
$a,b,c,d$ are in \M with the given pushout $(1)$ in the
bottom and the back squares being pullbacks. \\Then following holds :\\
The top square is a pushout if and only if the front squares are pullbacks.

  $
   \xymatrix@=3mm{
   			&\\
                       (A,\alpha_A)  \ar[rr]|{m\in\M} \ar[dd]|{f}    
	         && (B,\alpha_B)                 \ar[dd]|{g}     \\ 
	            & (1)  \\
		       (C,\alpha_C)  \ar[rr]|{n}
		    && (D,\alpha_D)
		 }
     $ \hfill
     $
    \xymatrix@=3mm{
                     &&&    (A',\alpha'_A)  \ar[ddd]|<<<<<{a}    \ar[dlll]|{f'}    \ar[drr]|{m'}     
		         &&  (2) \\
		               (C',\alpha'_C)     \ar[ddd]|{c}                    \ar[drr]|{n'}
		   &&&&& (B',\alpha'_B)     \ar[ddd]|{b}     \ar[dlll]|<<<<<{g'}          
		           &&                                                                \\
		          &&  (D',\alpha'_D)   \ar[ddd]|<<<<<{d}                                        \\
                        &&& (A,\alpha_A)                       \ar[dlll]|>>>>>{f}     \ar[drr]|{m}           \\
		                (C,\alpha_C)                                     \ar[drr]|{n}
		   &&&&& (B,\alpha_B)                       \ar[dlll]|{g}                       
		           &&                                                              \\
		          &&  (D,\alpha_D)
		   } 	        	 
  $	
	
	Since (finite) colimits and pullbacks along $\M$-morphisms are constructed on the underlying set,  square $ (1)$ and the VK-cube are
	given  for the   underlying sets in \cSets as well.
	\begin{description}
		\item[$\Rightarrow$] If in \cSetsF{F}  $(B',\alpha'_B)  \to  (D',\alpha'_D)  \gets  (C',\alpha'_C) $ is pushout over 
		 $(B',\alpha'_B)  \gets  (A',\alpha'_A) \to   (C',\alpha'_C) $  then
	$B' \to  D' \gets  C'$ is pushout over 
	 $B' \gets  A' \to   C'$ in \cSets 
	as pushout are constructed on the underlying sets. As \cSets together with the class of injective morphisms is an adhesive category (see Thm 4.6 in \cite{FAGT}), we have the  front squares are pullbacks in $\cSets$. As the vertical morphisms are injective
	the front squares are pullbacks in \cSetsF{F} provided that $F$ preserves pullbacks along injections.
	\item[$\Leftarrow$] Let 	the front squares be pullbacks in $\cSetsF{F}$, then the squares of the underlying sets are pullbacks in $\cSets$.  So the top square in pushout in \cSets and hence in $\cSetsF{F}$. 
\end{description}

	\end{enumerate}
\end{proof}

The same holds for many-sorted coalgebras over sets. 
Graphs with undirected edges can be considered as many sorted coalgebras using the functor $F:\cSets \times \cSets \to \cSets \times \cSets$ with
$F(N,E) = (N,E) \stackrel{(\finMor,<s,t>)}{\longrightarrow} (\finObj, N \times N) $ where $\finObj$ is the final object and $\finMor$ the corresponding final morphism, see e.g. \cite{Ru00}.
\begin{corollary}
If $F:\cSets \times \cSets \to \cSets \times \cSets$ preserves pullbacks along injective morphisms, then $(\cSets \to \cSets)_F,\M_F)$ is an $\M$-adhesive category for    a  class  of monomorphisms \M in \cSets that is PO-PB-compatible.
\end{corollary}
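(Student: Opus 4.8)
The plan is to observe that the proof of the preceding Theorem used nothing about $\cSets$ beyond three properties, all of which survive replacing the base category $\cSets$ by the two-sorted base category $\cSets \times \cSets$: (i) $(\cSets,\M)$ is an $\M$-adhesive category for every PO-PB-compatible class $\M$ of monomorphisms (this follows from $\cSets$ being adhesive, Thm.~4.6 in \cite{FAGT}, together with PO-PB-compatibility, since restricting to a PO-PB-compatible subclass of the monomorphisms only weakens the van Kampen requirement); (ii) finite colimits, and pullbacks along $\M$-morphisms, are computed on the underlying objects; and (iii) by \cite{Ru00}, for an arbitrary endofunctor on a cocomplete base category the coalgebra category is finitely cocomplete (Thm.~4.2) and base monomorphisms induce coalgebra monomorphisms (Prop.~4.7).

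First I would fix the class of monomorphisms. Given a PO-PB-compatible class $\M$ of monomorphisms in $\cSets$, let $\M^2$ be the class of those morphisms of $\cSets \times \cSets$ that are componentwise in $\M$. Since pushouts, pullbacks, composition and identities in $\cSets \times \cSets$ are formed componentwise, $\M^2$ is PO-PB-compatible, and $(\cSets \times \cSets, \M^2)$ is an $\M$-adhesive category: the van Kampen condition for a commutative cube in $\cSets \times \cSets$ reduces to the van Kampen conditions for its two component cubes in $\cSets$, each of which holds because $(\cSets,\M)$ is $\M$-adhesive. Correspondingly, set $\M_F := \{ (A,\alpha_A) \injct{f} (B,\alpha_B) \mid f \in \M^2 \}$; by \cite{Ru00} this is a PO-PB-compatible class of monomorphisms in $(\cSets \times \cSets)_F$, and since $\M^2$ consists of componentwise-injective morphisms, the hypothesis that $F$ preserves pullbacks along injections in particular yields that $F$ preserves pullbacks along $\M^2$-morphisms.

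Then I would rerun the argument of the Theorem verbatim with $\cSets$ replaced by $\cSets \times \cSets$ and ``injective'' replaced by ``in $\M^2$''. Pushouts in $(\cSets \times \cSets)_F$ along $\M_F$-morphisms exist because $(\cSets \times \cSets)_F$ is finitely cocomplete (Thm.~4.2 in \cite{Ru00}, applicable since $\cSets \times \cSets$ is cocomplete). For the vertical weak van Kampen property: by the preceding Lemma on pullbacks along injections, applied with base category $\cSets \times \cSets$, and by the componentwise construction of finite colimits, a pushout square along $\M_F$ together with a van Kampen cube whose vertical morphisms lie in $\M_F$ descends to the corresponding data for the underlying two-sorted sets; there, since $(\cSets \times \cSets, \M^2)$ is $\M$-adhesive, the front squares are pullbacks iff the top square is a pushout; and since the vertical morphisms lie in $\M^2$ and $F$ preserves pullbacks along $\M^2$-morphisms, these pullbacks and this pushout lift back to $(\cSets \times \cSets)_F$ exactly as in the Theorem and its preceding Lemma.

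The step that needs the most care is the reduction (i)--(ii): verifying that $(\cSets \times \cSets, \M^2)$ is genuinely $\M$-adhesive, and that the relevant (co)limits of the coalgebra category are carried sort by sort, so that the single-sorted argument really does apply componentwise. Once this is in place, the claim for concrete functors follows at once; for instance the undirected-graph functor $F:\cSets \times \cSets \to \cSets \times \cSets$ with $F(N,E) = (\finObj, N \times N)$ preserves pullbacks along injective morphisms, since the constant functor at $\finObj$ and $X \mapsto X \times X$ do, so $((\cSets \times \cSets)_F, \M_F)$ is an $\M$-adhesive category.
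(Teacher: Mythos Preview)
Your proposal is correct and takes essentially the same approach as the paper: the paper offers no separate proof for this corollary, stating only that ``the same holds for many-sorted coalgebras over sets,'' and you have spelled out precisely why the argument of the preceding Theorem transfers componentwise to the base category $\cSets \times \cSets$. Your care in verifying that $(\cSets \times \cSets, \M^2)$ is itself $\M$-adhesive and that the relevant (co)limits are computed sortwise is exactly the content the paper leaves implicit.
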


The following corollary allows $\M$ transformation systems for various dynamic systems based on $F$-coalgebras of functors the preserve pullbacks along injective morphisms.

\begin{corollary}[$\M$ transformation systems  for $F$-coalgebras]
We obtain 
\begin{itemize}
	\item $\M$-transformation systems for finitely branching non-deterministic transition systems  
	$\cSetsF{\Potfin}$, where  $(Q, \alpha_Q:Q \to \Potfin(Q)$ as finite power set functor $ \Potfin$ preserves pullbacks along injective morphisms.	
	\item $\M$-transformation systems for infinite binary trees $\cSetsF{A \times \_\times \_ }$ over an alphabet $A$ with since the product functor preserves limits.
 \item $\M$-transformation systems for  labelled transition systems over a signature $\Sigma$  with $\cSetsF{\Pot(\Sigma \times \_)}$, since the composition  preserves pullback-preservation.
\end{itemize}
\end{corollary}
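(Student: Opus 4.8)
The plan is to obtain all three items as applications of the preceding theorem: if a functor $F:\cSets\to\cSets$ preserves pullbacks along injective morphisms, then $(\cSetsF{F},\M_F)$ is an $\M$-adhesive category, and an $\M$-adhesive category is exactly the ambient structure over which $\M$-adhesive transformation systems are set up (by the general HLR theory, cf.\ \cite{FAGT}). So for each bullet it suffices to verify the pullback-preservation hypothesis. Two stability principles do all the work: (i) limits commute with limits, so a finite product of limit-preserving endofunctors is limit-preserving; and (ii) a composite $F_2\circ F_1$ preserves pullbacks along injective morphisms whenever $F_1$ does \emph{and} preserves monomorphisms and $F_2$ preserves pullbacks along injective morphisms.

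\textbf{Finite power set.} I would first note that the full power set functor $\Pot$ preserves pullbacks along an injection $g:C\hookrightarrow D$ (the argument is the base level of the reasoning in Lemma~\ref{l.PPaprsvinjPB}): the pullback of $f:B\to D$ and $g$ is, since $g$ is injective, isomorphic to $f^{-1}(g(C))\subseteq B$, while the pullback of $\Pot(f)$ and $\Pot(g)$ consists of the pairs $(B',C')$ with $f(B')=g(C')$; injectivity of $g$ forces $C'=g^{-1}(f(B'))$, so such a pair amounts to a subset $B'\subseteq f^{-1}(g(C))$, and the pullback is naturally isomorphic to $\Pot(f^{-1}(g(C)))$. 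The same computation restricted to finite subsets shows $\Potfin$ preserves pullbacks along injections, since the pullback of $\Potfin(f)$ and $\Potfin(g)$ is precisely the finite part of the pullback of $\Pot(f)$ and $\Pot(g)$. The theorem then gives that $(\cSetsF{\Potfin},\M_F)$ is $\M$-adhesive, hence carries $\M$-adhesive transformation systems.

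\textbf{Binary trees and labelled transition systems.} The functor $F(X)=A\times X\times X$ is a finite product of the constant functor at $A$ and two copies of $\mathrm{Id}_{\cSets}$, all of which preserve every limit, so by (i) $F$ preserves all pullbacks, a fortiori those along injections, and the theorem applies. For $\cSetsF{\Pot(\Sigma\times\_)}$ write $F=\Pot\circ G$ with $G(X)=\Sigma\times X$; again by (i) $G$ preserves all pullbacks, and it sends injections to injections ($\Sigma\times g$ is injective when $g$ is), while $\Pot$ preserves pullbacks along injections by the previous paragraph, so (ii) yields that $F$ does too, and the theorem applies once more.

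I do not expect a real obstacle. The only step that warrants an explicit sentence is the descent from $\Pot$ to $\Potfin$ — one must check that cutting down to finite subsets does not destroy the pullback — and this is exactly the ``finite part'' observation above; everything else is a routine instance of (i) or (ii).
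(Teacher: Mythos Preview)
Your proposal is correct and matches the paper's own reasoning: the corollary is stated without a separate proof, the justifications being given inline exactly as you reconstruct them (the theorem plus the fact that $\Potfin$, products, and compositions preserve pullbacks along injections). The only minor difference is bookkeeping: the paper records the $\Potfin$ case as Corollary~\ref{c.Pot_PB_along_injc} following Lemma~\ref{l.Pot_PB_along_injc}, so you could simply cite that rather than rerunning the $\Pot$ argument and restricting to finite subsets.
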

\begin{example}[Transformation of a finitely branching non-deterministic transition system]
Given the transition system $(K,\alpha_K)$ with $K=\Nat$ and 
$\alpha_K(n) =\{2n+1, 2n+2\}$  that is a full  infinite binary tree. 
The rule is $L\gets K\to R$ and te morphisms are all set inclusions. The application of the rule to $Q$ leads to the transformation step given in as illustrated in Fig.~\ref{fig:ex_Coalg_fbTS_trafo}.
\begin{figure}[h]
	\centering
		\includegraphics[width=.8\linewidth]{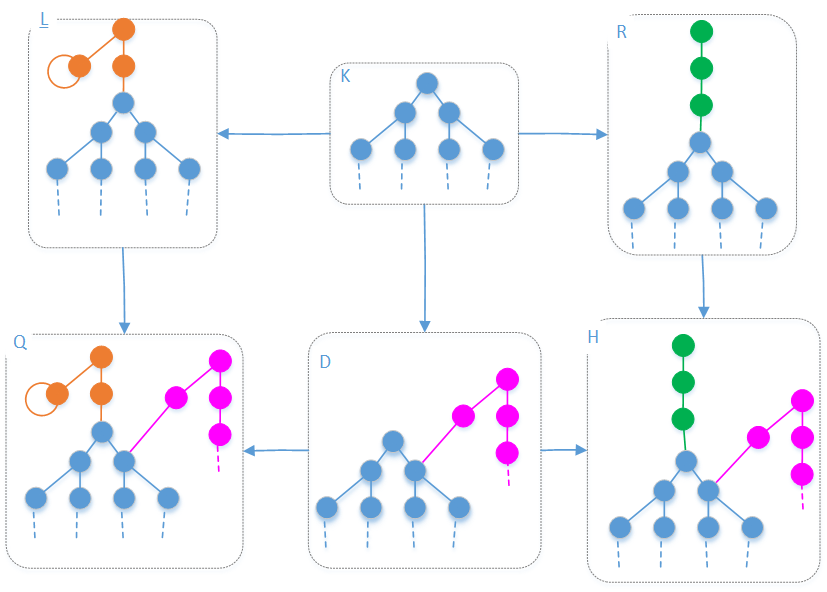}
	\caption{Transformation step in $Coalg_{\Potfin}$}
	\label{fig:ex_Coalg_fbTS_trafo}
\end{figure}
\end{example}

Although \cite{Kahl14} has already approached the coalgebraic representation of DPO-transformations this approach is far more general as its considers arbitrary coalgebras based on functors preserving pullbacks.

\subsection{Endofunctors  and Pullbacks}
\label{ss.endoPB}

A functor $F :\cC \to \cD $ is called (weak) pullback preserving if it maps (weak)
pullback squares to (weak) pullback squares: that is, if $F$ applied a (weak) pullback square
in the category $\cC$ forms a (weak) pullback square in $\cD$. (Def. 4.2.1(ii) \cite{Jac16}). 
This obviously implies that $F$ preserves pullbacks weakly, i.e.  it maps
pullback squares to (weak) pullback squares (\cite{Ru00,AD05}).
Many coalegebraic results require that $F$  preserves pullbacks weakly 
This comprises the power-set functors, and  arbitrary product, coproduct, power or composite of functors 
weakly preserving pullbacks (\cite{AD05}). 
\cite{Jac16} states that a weak pullback preserving functor
preserves (ordinary) pullbacks of monos (exercise 4.2.5  in \cite{Jac16}).
So, we have 
\begin{itemize}
	\item $F$ is (weak) pullback preserving $\Longrightarrow$ $F$ preserves pullbacks weakly.
	\item  $F$ is (weak) pullback preserving $\Longrightarrow$ $F$ preserves pullback along monomorphims.
\end{itemize}
Seemingly, all three notions hold for the power-set functors, and  arbitrary product, coproduct, power or composite of functors.
But it remains an open question if these notions are equivalent at least in $\cSets$ or in arbitrary categories.

\section{Edge Recursion}
\label{s.edgerec}
%edgerec.tex

In this section we investigate the  recursion of  edges, yielding nested edges where neighbours of edges can again be edges, as in Fig.~\ref{fig:ex1}
In \cite{Ru00} it is shown that graphs with undirected edges can be considered as many sorted coalgebras using the functor $F:\cSets \times \cSets \to \cSets \times \cSets$ with
$F(V,E) = (V,E) \nach{(\finMor,<s,t>)} (\finObj, V \times V) $ where $\finObj$ is the final object and $\finMor$ the corresponding final morphism.
In \cite{Jae2015} the notion of $F$-graphs based on comma-categories is investigated and in \cite{Jae2016}
extended to  coalgebras. The functors  investigated in \cite{Jae2016} are
the product,  the coproduct and several powerset functors.
\\
This can be extended to various types of nested edges.
\begin{definition}[Nested  hyperedges]
Given a set of nodes $\NN$  and a set of edges $\EN$ and a function yielding the neighbours 
$\st:\EN \to  \Pot (V \uplus \EN)$. \\
Then the category of coalgebras  $\coalg{F_1}$ over  $F_1: \cSets \times \cSets \to \cSets \times \cSets$  with 
$F_1(V,\EN) = (\finObj, \Pot (V\uplus \EN)$ yields the category of graphs with nested hyperedges.\\
The class $\M$ is given by the class of pairs of injective morphisms $<f_{\NN},f_{\EN}>$.
\end{definition}

\begin{lemma}[$(\coalg{F_1},\M)$ is an $\M$-adhesive category]
\end{lemma}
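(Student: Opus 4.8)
The plan is to reduce the claim to the general theorem about $\M$-adhesive categories of $F$-coalgebras established earlier in this section, namely that $(\cSetsF{F},\M_F)$ is an $\M$-adhesive category whenever $F:\cSets\to\cSets$ preserves pullbacks along injective morphisms, together with its many-sorted variant stated in the corollary for functors $\cSets\times\cSets\to\cSets\times\cSets$. So the whole proof amounts to exhibiting the functor $F_1(V,\EN)=(\finObj,\Pot(V\uplus\EN))$ as such a pullback-preserving functor and identifying $\M$ with $\M_{F_1}$.

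First I would observe that $F_1$ is the product of two component functors: the constant functor $(V,\EN)\mapsto\finObj$ into the first coordinate, and the functor $(V,\EN)\mapsto\Pot(V\uplus\EN)$ into the second. A constant functor trivially preserves all pullbacks, and for the second component I would argue that $(V,\EN)\mapsto V\uplus\EN$ is (isomorphic to) the coproduct functor $\cSets\times\cSets\to\cSets$, which preserves pullbacks along injections since coproducts commute with pullbacks of monos in $\cSets$ (pullbacks and coproducts are both computed componentwise on the disjoint summands, and a pullback of an injection along any map is again an injection). Then $\Pot$ preserves pullbacks along monomorphisms, as recalled in Subsect.~\ref{ss.endoPB} (the power-set functor is weak-pullback-preserving, hence preserves ordinary pullbacks of monos). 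The composite $\Pot\circ({\uplus})$ therefore preserves pullbacks along injective morphisms, because pullback-preservation along monos is closed under composition, as also noted in Subsect.~\ref{ss.endoPB}. Since a product of two functors each preserving pullbacks along injections again preserves such pullbacks (limits are computed componentwise), $F_1$ preserves pullbacks along injective morphisms.

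Next I would note that the class $\M$ of pairs of injective morphisms $\langle f_\NN,f_\EN\rangle$ is exactly the class $\M_{F_1}$ of $F_1$-homomorphisms whose underlying set maps are injective, which is PO-PB-compatible by the same reasoning given after the definition of $\M_F$ (injective maps are closed under composition, contain identities, and pushouts and pullbacks along them exist and are stable, all computed on underlying sets). Hence the hypotheses of the $\M$-adhesiveness theorem (in its many-sorted form) are met, and $(\coalg{F_1},\M)$ is an $\M$-adhesive category.

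The only mildly delicate point — and the step I would be most careful about — is the claim that $(V,\EN)\mapsto V\uplus\EN$ preserves pullbacks along injections: one must check that in a pullback square in $\cSets\times\cSets$ with one leg a pair of injections, applying $\uplus$ componentwise yields a pullback square in $\cSets$. This follows because the disjoint union of two pullback squares (the $V$-part and the $\EN$-part) is again a pullback square — the fibre product over a coproduct $D_V\uplus D_E$ splits as the disjoint union of the fibre products over $D_V$ and over $D_E$, since no element of $B_V\uplus B_E$ maps to the "$D_E$-side" from the $V$-component. Once this elementary fact is in place, everything else is an invocation of results already proved above, so no further computation is needed.
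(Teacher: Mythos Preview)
Your proposal is correct and follows essentially the same route as the paper: reduce to the many-sorted coalgebra theorem by showing that $F_1$ preserves pullbacks along injections, arguing componentwise via the final-object constant, the coproduct, and the power-set functor. The paper's proof is terser, citing the appendix lemmas (Lemma~\ref{l.Pot_PB_along_injc} for $\Pot$ and Lemma~\ref{l.coprod_PB} for $\coprod$, the latter actually proving preservation of \emph{all} pullbacks) rather than the indirect route through weak-pullback-preservation in Subsect.~\ref{ss.endoPB}, but the structure is the same.
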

\begin{proof}
 $F$ preserves pullbacks along monomorphisms, as the first component is a pullback of the final object in \cSets and
the powerset functor preserves pullback of monos (see Lemma~\ref{l.Pot_PB_along_injc}) and the coproduct functor as well
(see Lemma~\ref{l.coprod_PB}).
\end{proof}

\begin{corollary}[Nested  undirected edges]
Given a set of nodes $\NN$  and a set of edge names $\EN$ and a function yielding the neighbours 
$\st:\EN \to  \PotDir(V \uplus \EN)$. 
Then the category of coalgebras  $\coalg{F_2}$ over  $F_2: \cSets \times \cSets \to \cSets \times \cSets$  with 
$F_2(\NN,\EN) = (\finObj, \PotDir (\NN\uplus \EN)$ yields the category of graphs with nested undirected edges.

The class $\M$ is given by the class of pairs of injective morphisms $<f_{\NN},f_{\EN}>$. and  $(\coalg{F_2},\M)$ is an $\M$-adhesive category.
\end{corollary}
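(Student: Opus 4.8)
The plan is to mirror the proof of the preceding lemma for $F_1$. By the Theorem -- or rather the version of it for functors $F:\cSets\times\cSets\to\cSets\times\cSets$, stated as the Corollary just after it -- it suffices to show that $F_2$ preserves pullbacks along injective morphisms of $\cSets\times\cSets$: the class of injective set maps is PO-PB-compatible (as already noted in the excerpt), and the distinguished class $\M$ in the statement is exactly the corresponding $\M_{F_2}$ of pairs $\langle f_{\NN},f_{\EN}\rangle$ of injective maps. I would also note in passing that pushouts, pullbacks along monomorphisms, and the VK-cubes in the definition of $\M$-adhesiveness are all formed componentwise in $\cSets\times\cSets$, so once $F_2$ preserves pullbacks along monomorphisms nothing further needs to be checked.

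Next I would decompose $F_2$. By definition $F_2(\NN,\EN)=(\finObj,\,\PotDir(\NN\uplus\EN))$, i.e. $F_2=\langle C,\,\PotDir\circ\uplus\rangle$, where $C:\cSets\times\cSets\to\cSets$ is the constant functor at the final object $\finObj$ and $\uplus:\cSets\times\cSets\to\cSets$ is the coproduct. A functor into a product category preserves pullbacks along monomorphisms precisely when each of its two components does, so it remains to treat the components. The constant functor $C$ preserves every pullback, since it sends every square to the trivial pullback square on $\finObj$; and the coproduct functor preserves pullbacks along monomorphisms by Lemma~\ref{l.coprod_PB}. Since composition of functors preserving pullbacks along monomorphisms again preserves them (the fact already used for $F_1$ and recorded in the composition corollary), it then remains only to show that $\PotDir$ preserves pullbacks along monomorphisms.

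For that last point I would argue as follows. Here $\PotDir=\mathcal{P}^{(1,2)}$ is the variant of the powerset functor picking out the subsets of cardinality $1$ or $2$ (the endpoint sets of undirected loops and edges). An injective $f:M\hookrightarrow N$ preserves cardinalities, so $\Pot(f)$ carries the cardinality-$\le 2$ subsets of $M$ bijectively onto the cardinality-$\le 2$ subsets of $N$ lying in its image. Hence, in the pullback $P=\{(B',C')\mid \Pot(f)(B')=\Pot(g)(C')\}$ furnished by Lemma~\ref{l.Pot_PB_along_injc} for a square with $g$ monic, restricting both projections to cardinality-$\le 2$ subsets yields exactly the pullback of the square restricted to $\PotDir$, which is what we want. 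An equally viable route is to rerun the explicit diagram chase of Lemma~\ref{l.Pot_PB_along_injc}, carrying the bound $|\cdot|\le 2$ through each step verbatim. Either way, $\PotDir\circ\uplus$ and hence $F_2$ preserve pullbacks along injective morphisms, and the Corollary applies.

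The main obstacle is precisely this treatment of $\PotDir$: one has to be certain that cutting the powerset down to bounded-cardinality subsets does not destroy preservation of pullbacks, and the reason it does not is exactly that we pull back only along \emph{injective} morphisms, which preserve cardinality so that ``small'' subsets pull back to ``small'' subsets. Everything else -- the decomposition of $F_2$, the behaviour of constant and coproduct functors, closure of pullback-along-mono-preservation under composition and products, and the componentwise nature of (co)limits in $\cSets\times\cSets$ -- is routine and already available in the paper.
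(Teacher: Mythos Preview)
Your proposal is correct and follows essentially the same route the paper intends: the statement is recorded as a corollary of the $F_1$ lemma, so the implicit proof is exactly ``replace $\Pot$ by $\PotDir$ and repeat'', using that the first component is the trivial pullback at $\finObj$ and that the coproduct functor preserves pullbacks (Lemma~\ref{l.coprod_PB}). The only place you do more work than the paper is the treatment of $\PotDir$: the paper simply invokes Corollary~\ref{c.Pot_PB_along_injc}, which states that $\Pot^{(i,j)}$ (hence $\PotDir=\Pot^{(1,2)}$) preserves pullbacks along injective functions, so your cardinality argument can be replaced by a one-line citation.
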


\begin{corollary}[Nested directed edges]
Given a set of vertices $\NN$  and a set of edge names $\EN$ and a function yielding the neighbours 
$\st:\EN \to  (\NN \uplus \EN) \times  (V \uplus \EN)$.
Then the category of coalgebras  $\coalg{F_3}$ over  $F_3: \cSets \times \cSets \to \cSets \times \cSets$  with 
$F_3(\NN,\EN) = (\finObj, \NN\uplus \EN) \times  (\NN \uplus \EN) $ yields the category of graphs with nested directed edges.
\\
The class $\M$ is given by the class of pairs of injective morphisms $<f_{\NN},f_{\EN}>$ and  $(\coalg{F_3},\M)$ is an $\M$-adhesive category.
\end{corollary}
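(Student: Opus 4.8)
The plan is to reduce the statement to the general theorem already proved, namely that $(\cSetsF{F},\M_F)$ is an $\M$-adhesive category whenever $F$ preserves pullbacks along injective morphisms, and then simply verify that the concrete functor $F_3$ meets this hypothesis. Concretely, $F_3:\cSets\times\cSets\to\cSets\times\cSets$ decomposes componentwise as $F_3(\NN,\EN)=(\finObj,\;(\NN\uplus\EN)\times(\NN\uplus\EN))$, so by the componentwise construction of (co)limits in a product category it suffices to treat each component separately.

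First I would handle the first component: the constant functor onto the final object $\finObj$. A constant functor trivially preserves all pullbacks (the image square has $\finObj$ at every corner and the identity everywhere, which is a pullback), hence in particular pullbacks along monos. Second, for the second component I would argue compositionally: $(\NN,\EN)\mapsto \NN\uplus\EN$ is the coproduct functor, which preserves pullbacks along injective morphisms (this is the result cited in the paper as Lemma on the coproduct, \cite{l.coprod_PB}), and $X\mapsto X\times X$ is obtained from the diagonal followed by the binary product functor; products preserve all limits, in particular pullbacks, as already noted in the excerpt ("since the product functor preserves limits"). Composition of pullback-along-mono–preserving functors again preserves pullbacks along monos, because the composite sends a mono to a mono (each factor does) and sends the pullback square of the first factor to a pullback square that is then again mapped to a pullback square by the second. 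Therefore the second component of $F_3$ preserves pullbacks along injective morphisms.

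Combining the two components, $F_3$ preserves pullbacks along injective morphisms in $\cSets\times\cSets$, since pullbacks and monomorphisms in a product category are exactly pairs of pullbacks and pairs of monomorphisms. Invoking the Corollary on many-sorted coalgebras (the version of the $\M$-adhesivity theorem for functors $\cSets\times\cSets\to\cSets\times\cSets$), we conclude that $(\coalg{F_3},\M)$ is an $\M$-adhesive category, where $\M=\M_{F_3}$ is exactly the class of pairs of injective morphisms $\langle f_{\NN},f_{\EN}\rangle$, which matches the class named in the statement.

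I do not expect a genuine obstacle here; the only point requiring a little care is making explicit that "preserves pullbacks along monos" is closed under composition and under forming product categories, and that the diagonal/product decomposition of $X\mapsto X\times X$ is legitimate — but all of these are routine once the cited lemmas on the coproduct functor and on products preserving limits are in hand. If one wanted to avoid citing the composition-closure property abstractly, an alternative is to observe directly that a pullback square in $\cSets\times\cSets$ along an injective morphism is sent by $F_3$ to the pairing of a trivial square over $\finObj$ with the image under $X\mapsto X\times X$ of the coproduct of the original pullback squares, and then check the universal property by hand in $\cSets$; this is more tedious and I would prefer the modular argument above.
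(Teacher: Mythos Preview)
Your proposal is correct and matches the paper's approach: the paper states this as a corollary without proof, the implicit argument being exactly the componentwise decomposition you give (the first component is the constant functor to $\finObj$, the second is the coproduct functor followed by the product/squaring functor), together with an appeal to the general $\M$-adhesivity theorem for coalgebras over $\cSets\times\cSets$. This is precisely the template used in the proof of the preceding lemma for $F_1$, with the product functor replacing the powerset functor.
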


And again, these edge concepts can be mixed as well see \cite{Jae2015}.

\section{Coalgebraic Graphs}
\label{s.recgraph}
%recgraph.tex
Based on the above reached results we can now define graph where nodes and edges are nested.

\begin{example}[Nested nodes]\label{ex.nestednodes}
Nested nodes can be constructed using the  coalgebra $\coalg{\PPa}$ based on the superpower set functor $\PPa$.
Given a set $N$  the function  $\node:\NN \to \PPa(\NN)$ gives the nodes contained in a given node.
This function yields an $\M$-adhesive category; the category of coalgebras  $\coalg{\PPa}$ over  $\PPa: \cSets \to \cSets $  with
the class $\M$  of injective morphisms.\\
The nesting of nodes  can also be defined allowing the different kinds of nesting using some  
functor $\funF:\cSets \to \cSets$, so  we have the contains function $\node:\NN \to \funF(\NN)$. This  yields an $\M$-adhesive category where $\funG$ may be one of the (super-)power functors, e.g. $\Pot$, $\PotDir$, $\PPa$ or $\Potom$ or any other functor preserving pullbacks of injections. 
\end{example}

To obtain coalgebraic graphs as given in Sect.~\ref{s.motiv} we  construct coalgebraic graphs as $G= (\NN,\EN,\node:\NN \to \funF(\NN,\EN), \st: \EN \to \funG(\NN,\EN)$. These graphs can be considered to be an coalgebra over $\rGF: \cSets\times\cSets \to \cSets \times \cSets$
with  $\rGF(\NN,\EN) =(\funF(\NN,\EN), \funG(\NN,\EN))$.

\begin{definition}[Coalgebraic graph]\label{d.rGF}
	 Let the coalgebraic graph functor $\rGF: \cSets\times\cSets \to \cSets \times \cSets$ be given 
with $\rGF(\NN,\EN) =(\funF(\NN,\EN), \funG(\NN,\EN))$  where $\rGF$ preserves pullbacks along monomorphisms  provided $\funF:  \cSets\times\cSets \to \cSets $ and $\funG:  \cSets\times\cSets \to \cSets $ preserve injections and pullbacks along injective morphisms, then $\coalg{\rGF}$ is the  category of the corresponding coalgebraic graphs.
\end{definition}
According to Lemma~\ref{l.rGF_PB} in Appendix~\ref{s.misc}  the functor $\rGF$ preserves pullbacks along monomorphisms., so the have:
\begin{corollary}[Coalgebraic  graphs yield an $\M$-adhesive category.]
The class $\M$ is given by the class of pairs of injective morphisms $<f_{\NN},f_{\EN}>$ and  $(\coalg{\rGF},\M)$ is an $\M$-adhesive category.
\end{corollary}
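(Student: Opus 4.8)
The plan is to reduce the statement to the general theorem on $\M$-adhesive categories of $F$-coalgebras proven earlier in Section~\ref{s.VKcoalg}, namely that $(\cSetsF{F},\M_F)$ is an $\M$-adhesive category whenever $F$ preserves pullbacks along injective morphisms, together with its many-sorted variant (the corollary for functors $\cSets\times\cSets\to\cSets\times\cSets$). The only genuinely new ingredient needed is that the specific functor $\rGF$ built componentwise from $\funF$ and $\funG$ preserves pullbacks along monomorphisms; this is exactly Lemma~\ref{l.rGF_PB} in the appendix, which we may invoke. So the corollary will follow by chaining these two facts.

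Concretely, I would proceed as follows. First, recall from Definition~\ref{d.rGF} that $\rGF(\NN,\EN)=(\funF(\NN,\EN),\funG(\NN,\EN))$ with $\funF,\funG:\cSets\times\cSets\to\cSets$ preserving injections and pullbacks along injective morphisms. Second, observe that pullbacks and monomorphisms in the product category $\cSets\times\cSets$ are computed componentwise, and that a morphism $\langle f_\NN,f_\EN\rangle$ is a monomorphism precisely when both $f_\NN$ and $f_\EN$ are injective; hence a square in $\cSets\times\cSets$ is a pullback along a mono iff each of its two component squares is a pullback along an injection in $\cSets$. Third, apply Lemma~\ref{l.rGF_PB}: since $\funF$ and $\funG$ each preserve pullbacks along injections, $\rGF$ preserves pullbacks along monomorphisms in $\cSets\times\cSets$. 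Fourth, feed this into the many-sorted version of the main theorem (the corollary stating that if $F:\cSets\times\cSets\to\cSets\times\cSets$ preserves pullbacks along injective morphisms then $((\cSets\to\cSets)_F,\M_F)$ is $\M$-adhesive), with $F:=\rGF$ and $\M:=\M_F$ the class of pairs of injective morphisms. This yields that $(\coalg{\rGF},\M)$ is an $\M$-adhesive category, which is the claim.

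One should also check the small bookkeeping point that the class $\M$ named in the corollary — pairs $\langle f_\NN,f_\EN\rangle$ of injective functions — really coincides with the class $\M_F$ arising from the general construction, i.e. that these pairs are $\rGF$-monomorphisms in $\coalg{\rGF}$ and that this class is PO-PB-compatible; both follow from Rutten's Prop.~4.7 (injective set maps lift to coalgebra monos) applied componentwise, exactly as in the single-sorted discussion preceding the main theorem, since colimits and pullbacks along $\M$-morphisms in $\coalg{\rGF}$ are again computed on the underlying sets.

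I do not expect any step here to be a serious obstacle: the real work has already been discharged in the main theorem (construction of pushouts, verification of the vertical weak VK property via adhesivity of $\cSets$, and stability of pullbacks under the pullback-preserving functor) and in Lemma~\ref{l.rGF_PB}. The only place where care is needed is the componentwise reasoning in $\cSets\times\cSets$ — making sure that "pullback along a mono" and "$\M$-morphism" decompose correctly into the two sorts — but this is routine, since finite (co)limits and the relevant pullbacks in a product category and in $\coalg{\rGF}$ are all created by the forgetful functor to $\cSets\times\cSets$ and thence to $\cSets$ on each sort.
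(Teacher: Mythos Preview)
Your proposal is correct and follows essentially the same route as the paper: the paper simply notes (in the sentence preceding the corollary) that by Lemma~\ref{l.rGF_PB} the functor $\rGF$ preserves pullbacks along monomorphisms, which together with the many-sorted version of the main theorem yields the claim. Your write-up merely unpacks this in more detail, adding the componentwise bookkeeping in $\cSets\times\cSets$ and the check that the stated class $\M$ agrees with $\M_F$, but the argument is the same.
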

The definition of coalgebraic graphs is chosen to be quite open and comprises the usual graph types, as (un-) directed  and (hyper-) graphs as well as various hierarchical graphs (see Subsect.~\ref{ss.exHierGra}).
\\
Obviously, in non-hierarchical graph types the contains function is superfluous, so below it is given by the final morphism $\finMor$.
\begin{itemize}

\item Undirected graphs are given by $\funF= \funFinal$ and $\funG =  \prod \circ (\PotDir \times \funFinal)$, so the objects in  $\coalg{\rGF}$ are given by 
			$\xymatrix{
			(\NN,\EN)  \ar@<1mm>[r]^(.4){\node} \ar@<-1mm>[r]_(.4){\st} &  (\funFinal, \PotDir(\NN))
			}
			$.
			
			Since $\funG(\NN,\EN) =  \prod \circ (\PotDir\times \funFinal)(\NN,\EN) =\prod  ( \PotDir(\NN) , \finObj )\cong \PotDir(\NN)$, an undirected graph is given by $G=(\NN,\EN, \node,\st)$ with $\node = \finMor$ and $\st: \EN \to \PotDir(\NN)$.
			
	\item Directed graphs can be given by $\funF= \funFinal$ and $\funG = \prod \circ \funX^2\times \funFinal$, so we have 
			$\xymatrix{
			(\NN,\EN)  \ar@<1mm>[r]^(.4){\node} \ar@<-1mm>[r]_(.4){\st} &  (\funFinal, \NN \times \NN)
			}
			$ since $\funG(\NN,\EN) =   \prod \circ \funX^2\times \funFinal (\NN,\EN) = \NN \times \NN$. So, a directed graph is an object in the category $\coalg{\rGF}$ given by $G=(\NN,\EN, \node,\st)$ with $\node = \finMor$ and $\st: \EN \to \NN \times \NN$.\\
			It would be interesting to know whether the evolving categories correspond to the usual ones, e.g. is the category $\coalg{\rGF}$ isomorphic to the usual category of graphs (as in \cite{FAGT}) given as a functor category $[\mathcal{S},\cSets]$ for the schema category 
$\mathcal{S} = \xymatrix{ \bullet \ar@<1mm>[r]\ar@<-1mm>[r] & \bullet}$.

    \item Classical hypergraphs, where edges are attached to a set of nodes,  are given by $\funF=\funFinal$ and $\funG=\Pot \times \funFinal$.
		\item Hypergraphs as in hyperedge replacement \cite{DKH97}, where edges are attached to a string of nodes, is hence given by $\funF=\funFinal$ and $\funG=(\_)^* \times \funFinal$ with $(\_)^* $ the free monoid functor.
		\\
		The definition in \cite{FAGT} uses the indexed comma categories, hence the relation between
		$\mathit{ComCat}(Id_\cSets, (\_)^* ,\{1,2\})$  and the evolving coalgebra $\coalg{\rGF}$ needs to be investigated. 
		\item Place-Transition nets can be considered to be objects in 
		$\coalg{\rGF}$, with $\funF= \funFinal$ and $\funG =( ( (\_)^* \times  (\_)^*) \circ \funX^2) \times \funFinal$. The extension of the hierarchy concepts in this contribution to Petri nets  is probably worth exploring.
\end{itemize}
Since all the involved functors preserve pullbacks of injective morphisms (see Subsect.~\ref{s.misc}), we immediately obtain $\M$-adhesive categories. \\

Subsequently, we omit the final functor $\funFinal$ for better readability.

\subsection{Examples of Hierarchical Graphs}
\label{ss.exHierGra}
In the following we relate the notion developed above to concepts of hierarchical graphs in the literature  using both concepts , comma category and colagbebra.

\begin{enumerate}
	\item The comma category $<Id_\cSets\downarrow \PPa>$ as used in Ex.~\ref{ex.PPa_graph}  with is an $\M$-adhesive category because of the  comma-category construction (see Theorem 4.15 (construction of (weak) adhesive HLR categories in \cite{FAGT}) and  $\PPa$ preserving pullbacks of injections. It yields hierarchical graphs with hyperedges between nodes and containers of nodes, but containers do not have an explicit name. 
	\item Combining the nested nodes based on the superpower set functor $\PPa$  as in Ex.~\ref{ex.nestednodes} with usual edges concepts leads to various types of hierarchical graphs and is closely related to hierarchical graphs in the sense of \cite{BKK05}. In this case hierarchical  graphs are given by 
	$G=(\NN,\EN, \node: \NN\to \PPa(\NN), \st:\EN \to \funH(\NN))$.	 $\funH$  determines edge type. Typical choices for $\funH$ are $\Pot$ or $(\_)^*$ for hyperedges, $\PotDir$ for undirected edges or  directed edges with $\funX^2: \cSets \to \cSets \times \cSets$ with $\funX^2(N) = N\times N$. 
	For an example see	 Sect.~\ref{ss.hierG}.\ref{p.bkk}.\\
	We use a coalgebra over $\rGF_1: \cSets \times \cSets \to \cSets \times \cSets$ with $\rGF(\NN,\EN) = \PPa (\NN) \times  \funH(\NN)$, then $(\coalg{\rGF_1},\M)$ is an $\M$-adhesive category.	
	\item A hierarchy where the edges are refined by subnets is obtained by the neighbouring function
	     $\st:\EN \to (\NN)^* \times \Potom (\NN)$ that maps edges to a pair  where the first component defines the incident nodes and the second component defines the nodes contained  by the edges. This nesting is layered as it is defined by the functor $\Potom$, see Def.~\ref{d.potom_set}. The resulting graphs are given by
			$G=(\NN,\EN, \st:\EN \to \NN^* \times \Potom(\NN))$.			
			The category of such graphs is given by the comma category $< Id_\cSets \downarrow \funG>$ with the functor  $\funG=   ((\_)^* \times \Potom) \circ \funX^2$. \\     
Note $\funG(\NN) =    ((\_)^* \times \Potom) \circ \funX^2 (\NN) = ((\_)^* \times \Potom) (\NN,\NN) = (\NN)^* \times \Potom (\NN)$.
\item For hierarchies, where the edges between nodes may have other parents than the nodes and where the edges may contain  subgraphs  (as in \cite{Palacz04}) the graphs can be given by the functions $\node: \NN \to \PPa(\NN \uplus \EN)$ and $\st:\EN \to \Pot(\NN) \times \PPa(\NN\uplus \EN)$. We use then a coalgebra with $\rGF_2(\NN,\EN) = (\PPa (\NN \uplus \EN), \Pot(\NN) \times \PPa(\NN \uplus \EN)) $. $\PPa (\NN \uplus \EN)$ yields nested sets of nodes and edges and $\Pot(N)$ yields the incident nodes of an hyperedge. To obtain an $\M$-adhesive category we construct $\rGF$ from other functors that yield the $\M$-adhesive category $\coalg{\rGF_2}$. 
\item Multiple hierarchies can be constructed as $\M$-adhesive categories using a copying functor $\funX^i: \cSets \to \cSets\times \cSets\times ... \times \cSets$. The the containment function $\node:\NN \to
\prod   \circ \funX^i \circ \PPa(\NN)$ yields for each node $i$ different nestings. For edges we may use hyperedges $\st:\EN \to \Pot(\NN)$. The corresponding $\M$-adhesive category $\coalg{\rGF_3}$ is given by 
$\rGF (\NN,\EN) = (\prod   \circ \funX^i \circ \PPa(\NN), \Pot(\NN))$ and corresponds to the multi-hierarchical  graphs in Sect.~\ref{ss.mulHier}. 
\item For bigraphs, see Sect.~\ref{ss.bigr} we use the following functions $\node:\NN \to \PPa(\NN)$
and $\st: \EN \to \Pot(\NN \uplus \EN)\times \Pot(\NN \uplus \EN)$. Again we obtain an $\M$-adhesive category $\coalg{\rGF_4}$ with   $\rGF_4(\NN,\EN) = (\PPa(\NN), \Pot(\NN \uplus \EN)\times \Pot(\NN \uplus \EN) )$ constructed from other functors.
\item \label{i.group} The functions $\node: \NN \to \PPa(\NN)$ and   $\st:\EN \to \NN \times \NN \times \PPa(\EN)$ 
allow  the description of graph grouping and give rise to the category of  coalgebraic graphs $\coalg{\rGF_5}$ with
$\rGF_5 (\NN,\EN) = (\PPa(\NN), \NN \times \NN \times \PPa(\EN))$ that corresponds roughly to the the graph grouping in Sect.~\ref{ss.graphgroup}.
\end{enumerate}

Summarizing, we have:\nopagebreak
\renewcommand{\arraystretch}{2}
\begin{table}[H]
\begin{tabular}{|c|m{47mm}|m{41mm}|m{38mm}|}
  \hline 
&Definition          
    &Categorical construction
		&Description \\ \hline \hline 
   1&\parbox{ 47mm}{$\st:E \to \PPa(N)$ }
   & \parbox{41mm}{comma category \\$< Id_\cSets \downarrow \PPa>$}
   & \parbox{38mm}{~\\[1mm]hyperedges between nodes and containers of nodes, but container have no explicit name, 
	                                see Ex.~\ref{ex.PPa_graph}\\[-3mm]~}  
			\\  \hline 
2&\parbox{ 47mm}{$\node: \NN \to \PPa(\NN)$\\$\st: E \to \funH(\NN)$}
   & \parbox{41mm}{~\\[1mm]coalgebra $\coalg{\rGF_1}$\\$\rGF_1= \PPa \times \funH$} 
	 & \parbox{38mm}{~\\hierarchical graphs, \\
	                     $\funH$  determines edge type,\\
												  see Sect.~\ref{ss.hierG}.\ref{p.bkk}\\[-3mm]~} \\  %\cline{2-3}
  &\multicolumn{2}{|c|}{ for $\funH \in\{\Pot,  \PotDir, \_\times\_,  (\_)^* \}$}
  & \parbox{38mm}{}  \\ \hline 	
3&\parbox{ 47mm}{$\st: E\to \NN^* \times \Potom(\NN)$}
 & \parbox{41mm}{~\\[1mm]comma category \\$< Id_\cSets \downarrow \funG>$ with\\ 
                                        $\funG=  ( (\_)^* \times \Potom) \circ \funX^2$\\[-1mm]~
 } 
&  \parbox{38mm}{~\\hierarchical graphs,\\
	                    see Sect.~\ref{ss.hierG}.\ref{p.DHP}\\[-3mm]~}\\  \hline 
4&\parbox{ 47mm}{$\node: \NN \to \PPa(\NN \uplus \EN)$\\$\st:\EN \to \Pot(\NN) \times \PPa(\NN\uplus \EN)$}
   & \parbox{41mm}{~\\[1mm]$\coalg{\rGF_2}$ with
	                                                                         $\rGF_2 =$\\
	                                                                         $(\PPa\circ \coprod)   \,\times \,  $\\\hspace*{\fill}
	                                                                         $(\Pot  \times (\PPa\circ \coprod) )\circ (\funX^2\times Id_\cSets)$\\[-1mm]~} 
	 & \parbox{38mm}{~\\hierarchical graphs 
	                          see Sect.~\ref{ss.hierG}.\ref{p.pal}\\[-3mm]~} \\ \hline
5&\parbox{ 47mm}{$(\node_i)_{i<n}: \NN \to \PPa(\NN \uplus \EN)$\\$\st:\EN \to \Pot(\NN) $}
  & \parbox{41mm}{~\\[1mm]$\coalg{\rGF_3}$ with\\
$\rGF_3 =  (\prod   \circ \funX^i \circ \PPa\circ \coprod) \times \Pot$ \\[-1mm]~}
	& \parbox{38mm}{~\\multi-hierarchical  graphs,\\ 
	                          see Sect.~\ref{ss.mulHier}\\[-3mm]~} \\ 
 \hline	
6&\parbox{ 47mm}{~\\$\node:\NN \to \PPa(\NN)$\\$\st: \EN \to \Pot(\NN \uplus \EN)\times \Pot(\NN \uplus \EN)$\\[-3mm]~}
  & \parbox{41mm}{~\\[1mm]$\coalg{\rGF_4}$ with   \\
	$\rGF_4 =$\\\hspace*{\fill}
	                                                                         $ (\PPa(\NN), \Pot(\NN \uplus \EN)\times \Pot(\NN \uplus \EN) )$  \\[-1mm]~}
  & \parbox{38mm}{bigraphs, see Sect.~\ref{ss.bigr}}\\ 
	\hline 
7&\parbox{ 47mm}{~\\$\node: \NN \to \PPa(\NN)$ \\
                                $\st:\EN \to \NN \times \NN \times \PPa(\EN)$\\[-3mm]~}
  & \parbox{41mm}{~\\[1mm]$\coalg{\rGF_5}$with\\$\rGF_5 (\NN,\EN) = (\PPa(\NN), \NN \times \NN \times \PPa(\EN))$\\[-1mm]~} 
	& \parbox{38mm}{graph grouping, \\see Sect.~\ref{ss.graphgroup}}\\  \hline 
\end{tabular}
	\caption{Involved functors}
	\label{tab:InvolvedFunctors}
\end{table}

\subsection{Properties of Nested Nodes and Edges}

\begin{definition}[Properties of nested nodes]\label{d.proprecnodes}
\begin{enumerate}
	\item  \label{d.recnodes.uni} Nodes are unique if $c$ is injective.
	\item \label{d.recnodes.atomic} Vertices are the atomic nodes that  refer to themselves: $\aV=\{n \mid c(n) =n\}$ 
	\item \label{d.recnodes.atomic} Nodes are containers if $c(n) \in \PPa(\NN) -\NN$ 
	\item 
 The set of nodes is well-founded if and only if
 \begin{itemize}
	 \item \label{d.recnodes.base1} $X \in \NN \land Y\in \node(X)$ implies, that $Y\in \node(\NN)$
  \item  \label{d.recnodes.base2} $X \in \node(\NN) \land Y\in (X- \NN)$ implies, that $Y\in \node(\NN)$
\end{itemize}
\item 
 The set of nodes is hierarchical if and only if  $\node(n) \cap \node(n') \neq \emptyset$ implies $n=n'$.
\end{enumerate}
\end{definition}

\begin{definition}[Properties of nested edges]\label{d.proprecedges}
\begin{enumerate}
  
\item The set of atomic  hyperedges $\aE:= \{e \in \EN \mid \st(e)  \in \Pot(\NN)\}$.
\item Edges  are   node-based if the function $\stp{}: \EN \to \Pot(\NN)$ defined by  
$\stp{}(e) =   \{n \in \NN  \mid  n \in \st(e) \} \cup \bigcup_{x \in \st(e)} \stp{}(x)$ is well-defined.
\item Edges are atomic if they   are
node-based and if the function $\stp{}(E) \subseteq \aV$ only yields  vertices.
\end{enumerate}
\end{definition}

Analogously the properties for (un-)directed edges.

\section{Transformations of Hierarchical Graphs}
\label{s.hierGra}
% hierGra.tex%relWork.tex  vormals state

Here  we argue to what extent known concepts can be considered as $\M$-adhesive categories of hierarchical graphs. The detailed, mathematical investigation of each of these examples is beyond the scope of this paper.\\
Labels and attributes are not  considered in this paper, but labelled  or attributed graphs yield  $\M$-adhesive categories  (see \cite{FAGT,EGH10}) and at least labels can be introduced into  coalgebraic constructions (see \cite{Ru00,AD05}). 

\subsection{Hierarchical Graphs} \label{ss.hierG}
Many possibilities to define hierarchical graphs have already been investigated, e.g. \cite{BH01,DHP02,Busatto02,Palacz04,BKK05,BCM10}. 
	In \cite{ES95} the possibility of infinitely recursive hierarchies has already been introduced as  an infinite number of type layers.
Here we sketch how three of them, namely \cite{DHP02}, \cite{BKK05}  and \cite{Palacz04}, can be considered in this framework. 
\begin{enumerate}
\item \textit{Hierarchical Graphs as in \cite{BKK05}} \label{p.bkk}\\
	 In this approach graphs are grouped into packages via a coupling graph.
	A hierarchical graph is a system $H =(G,D,B)$, where $G$ is a graph some graph type, $P$ is a rooted directed acyclic graph, and $B$ is a bipartite 
coupling graph whose partition contains the nodes of $N_G$ and of $N_P$. All edges are oriented from the first $N_G$ to the second set
of nodes $N_P$ and  every
node in $N_G$ is connected to at least one node in $N_P$. For this approach we can consider coalgebraic graphs in  the coalgebra category $\coalg{\rGF_1}$ (see Table~\ref{tab:InvolvedFunctors}) with 
	$\node: \NN \to \PPa(\NN)$ being   well-founded. Additionally  a
 completeness condition,  stating that each atomic node is within some package, has to hold: \\
	$\forall n\in \NN: \node(n) = n \impl \exists p\in \NN: n \in \node(p)$ 
\end{enumerate}

\begin{wrapfigure}[10]{l}{.48\linewidth}~\\[-11mm]
	 \includegraphics[width=\linewidth]{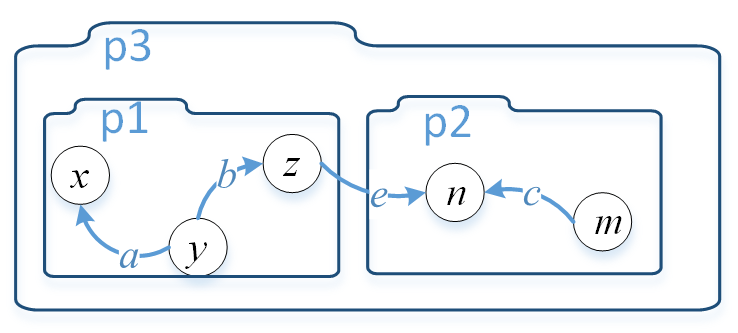}
	\caption{Hierarchical graph as  in \cite{BKK05}}
		\label{fig:ExBKK}
    \end{wrapfigure} 
	\noindent
	The 	packages are the nodes that are not atomic. The edge function is given by 
	$\st: E \to \funH(N)$ where $\funH(N)$ determines  the type of the underlying graphs.
	In Fig.~\ref{fig:ExBKK}  we have an example with two packages, that uses directed egdes.
	So based on $\funH=\funX^2$ we can give this example as a coalgebraic graph.   \\
	We have $\NN=\{n,m,x,y,z,p1,p2,p3\}$
	 \\
	\hspace*{5mm} with $\node(v) = \begin{cases}
	                    v&\text{; if } v\in\{n,m,x,y,z\}\\
                \{x,y,z\}&\text{; if } v =p1  \\
                \{n,m\} &\text{; if } v = p2 \\
                \{p1,p2\}&\text{; if } v =p3
	 \end{cases}
	$ and 
	$\st: \begin{cases}
	                    a\mapsto (y,x)\\
	                    b\mapsto (y,z)\\
	                    c\mapsto (m,n)\\
	                    e\mapsto (z,n)
	 \end{cases}
	$

\begin{enumerate}
	\setcounter{enumi}{1}
	\item \textit{Hierarchical Hypergraphs as in \cite{DHP02}} \label{p.DHP}
Hypergraphs $H=(V, E, att, lab)$ in \cite{DHP02} 
consist of two finite sets $V$ and $E$ of vertices and hyperedges. These are equipped with an order,  so the attachment function is defined by $att: E \to V^*$. 
The hierarchy  is given in layers, in the sense that subsets  in the same layer have the same nesting depth. So, edges are within one layer.
	Hierarchical graphs $<G,F,cts: F \to \mathcal{H} > \in \mathcal{H}$ are given with special edges $F$ that contain potentially hierarchical subgraphs.
Fig.~\ref{fig:ExDHP} depicts a hierarchical graph that can be considered to be a graph  in the comma category $< Id_\cSets \downarrow \funG>$  (see Table~\ref{tab:InvolvedFunctors}). The graph 
  $G=(\NN,\EN,\st)$ with $\st: E\to \NN^* \times \Potom(\NN))$ is defined so that edges are node-based. 
	\\[-8mm]
	\begin{figure}[H]
	\hspace*{5mm}
	 \begin{subfigure}[b]{0.5\linewidth}
	\includegraphics[width=\linewidth]{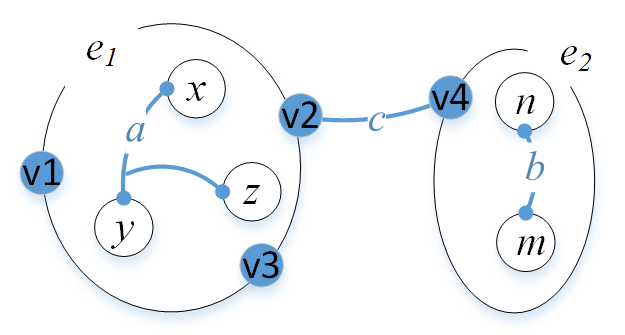}
        \caption{ as given in\cite{DHP02}}
				\label{fig:ExDHP}
    \end{subfigure} \hfill
	 \begin{subfigure}[b]{0.4\linewidth}
	$\begin{array}{rr@{\mapsto}l}
 \st: & a &  <xyz,\emptyset>\\
              & b & <nm,\emptyset>\\
              & c & <v2v4,\emptyset>\\
              & e_1 & <v1v2v3,\{x,y,z\}>\\
              & e_2 & <v4,\{n,m\}> 
	 \end{array}
	$
	\caption{as a graph in $< Id_\cSets \downarrow \funG>$}
    \end{subfigure}
		\caption{Example of hierarchical hypergraphs}
	\end{figure}

	\item \textit{Hierarchical Graphs as in \cite{Palacz04}}\label{p.pal}  are obtained from hypergraphs by adding a
	 parent assigning function to them. Nodes and edges can be assigned as a child of any other node or edge.These correspond to coalgebraic graphs in the category $\coalg{\rGF_2}$ (see Table~\ref{tab:InvolvedFunctors}). The parent function coincides with $\node: \NN \to \PPa(\NN \uplus \EN)$ being well-founded and hierarchical and $\st:\EN \to \Pot(\NN) \times \PPa(\NN\uplus \EN)$ since edges can have children as well.\\[-6mm]
	\begin{figure}[H]
	\hspace*{5mm}
	 \begin{subfigure}[b]{0.35\linewidth}
	     \includegraphics[width=\linewidth]{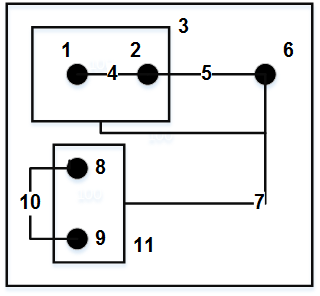}
		    \caption{Example from  \cite{Palacz04}}
    \end{subfigure} \hfill
	 \begin{subfigure}[b]{0.5\linewidth}
			\includegraphics[width=.9\linewidth]{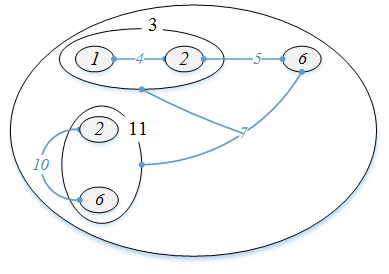}
		\caption{corresponding coalgebraic graph}
    \end{subfigure} \hfill
		    \caption{Hierarchical graph  in \cite{Palacz04}}
				\label{fig:ex_hierG_Pal04}		
	\end{figure}
In Fig.~\ref{fig:ex_hierG_Pal04} the  nodes $\NN=\{1,2,3, 6,8,9,11\}$ and  the contains function $\node:\NN \to \PPa(\NN \uplus \EN)$, 
	yield the nodes and their children. The  hyperedges $\EN=\{4,5,7,10\}$ with $\st: \EN \to \Pot(\NN \times \PPa(\NN \uplus \EN))$
	 yield the edges. Note in this example the edges  are not nested. \\
	Contains and neighbour function are given by \\
$\begin{array}[t]{rr@{\mapsto}l@{\;\; } r@{\mapsto}l}
	    \node:
			  &  1&1  &8 &8 \\
			  & 2&2   &9 &9\\
			  & 3 &\{ 1,2,4\}& 11& \{8,9\}\\
			  &6&6    
	 \end{array}
	$\hfill and 
	 $\begin{array}[t]{rr@{\mapsto}l}
	    \st:&  4 &(\{1,2\}, \emptyset )\\
			      & 5 & (\{2,6\}, \emptyset )\\
			      & 7 & (\{3,6,11\}, \emptyset )\\
			     &10&(\{8,9\}, \emptyset )
	 \end{array}
	$
\end{enumerate}

	\subsection{Multi-Hierarchical Graphs} \label{ss.mulHier} 
	In \cite{SLPPG2017} multiple hierarchies have been suggested, first ideas can be  found in  \cite{Palacz04}. A finite set of child nesting functions is specified that relate nodes to set of nodes and edges. This corresponds to a finite family $(\node_i: \NN \to \PPa(\NN \uplus \EN))_{i<n}$ that are well-founded and hierarchical. For transformations of multi-hierarchical graphs  there is  the $\M$-adhesive category $\coalg{\rGF_3}$ of coalgebraic graphs  (see Table~\ref{tab:InvolvedFunctors}).

	\subsection{Bigraphs as an hierarchy} \label{ss.bigr}
Bigraphs \cite{Milner06} originate in process calculi for concurrent systems and provide a graphical model of computation.  A bigraph is composed of two
graphs: a place graph and a link graph. 
They emphasize interplay between physical locality and virtual connectivity. 
Reaction rules allow the reconfiguration of bigraphs.  A bigraphical reactive system consists of a set of bigraphs and a set of reaction rules, which can be used to reconfigure the set of bigraphs. Bigraphs may be  composed  and have a bisimulation that is a congruence wrt. composition.   Categorically, bigraphs are given as morphisms in  a symmetric partial monoidal category where the objects are  interfaces.  This construction corresponds to ranked graphs as given in  \cite{GH97} where morphisms are given by a isomorphism class of concrete directed graphs with interfaces. \cite{Ehr02_bigraphs} discusses extensively the relation of bigraphs to graph transformations. In 
\cite{GRJD16} a functor that flattens bigraphs into ranked graphs is provided that encodes the topological structure of the place graph into the node names. 
In \cite{BMPT14} bigraphs are shown to be essentially the same as gs-graphs that  present the place and the link graph  within one graph. 
We also represent bigraphs within one graph, where the hierarchical structure is given by a superpower set of nodes and the link structure is given by nested hyperedges.
Here we abstract  from the categorical foundations and give bigraphs as a special cases of hierarchical graphs.
Hence, we ignore their categorical structure, but we obtain a transformation system.
Nevertheless, often only the graphical representation of bigraphs is used \cite{WW12,Worboys13,BCRS16}.
\\
A bigraph is a 5-tuple:
$(V,E,ctrl,prnt,link):\langle k,X\rangle \to \langle m,Y\rangle $,
where ${\displaystyle V}$ is a set of nodes, $E$ is a set of edges, $ctrl$ is the control map that assigns controls to nodes, $prnt$ is the parent map that defines the nesting of nodes, and $link$ is the link map that defines the link structure.
The notation  $\langle k,X\rangle \to \langle m,Y\rangle$  indicates that the bigraph has $k$ holes (sites) and a set of inner names $X$ and $m$ regions, with a set of outer names $Y$. These are respectively known as the inner and outer interfaces of the bigraph.\\\
Below we illustrate the relation of bigraphs to coalgebraic graphs in $\coalg{\rGF_4}$ (see Table~\ref{tab:InvolvedFunctors}) in an example.
\begin{figure}[h]
    \begin{subfigure}[b]{.5\textwidth}
		    \includegraphics[width=\linewidth]{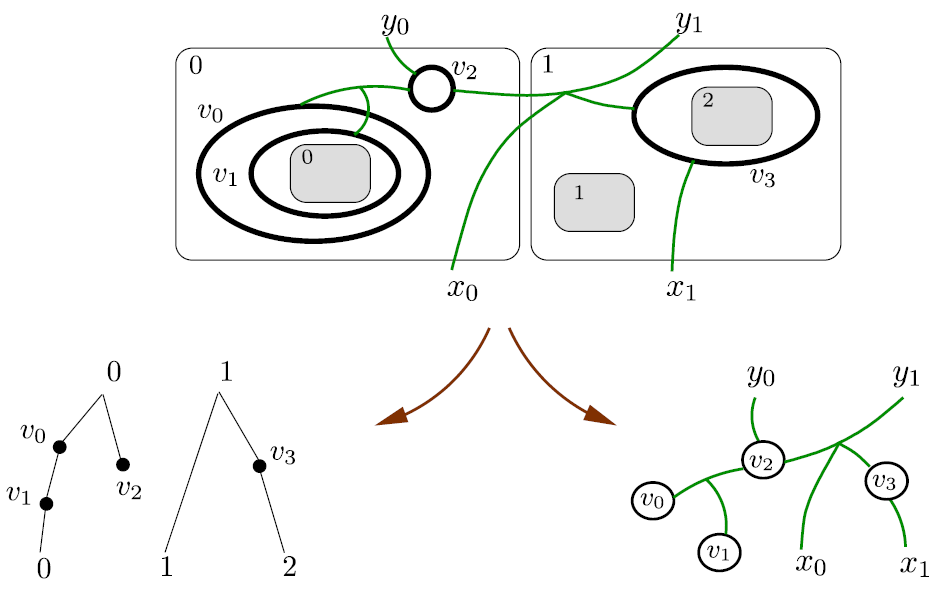}
\caption{Example from \cite{Milner06}}
  \label{fig:pureBiGr_fig3}
\end{subfigure}
\hfill
    \begin{subfigure}[b]{.5\textwidth}
		\includegraphics[width=\linewidth]{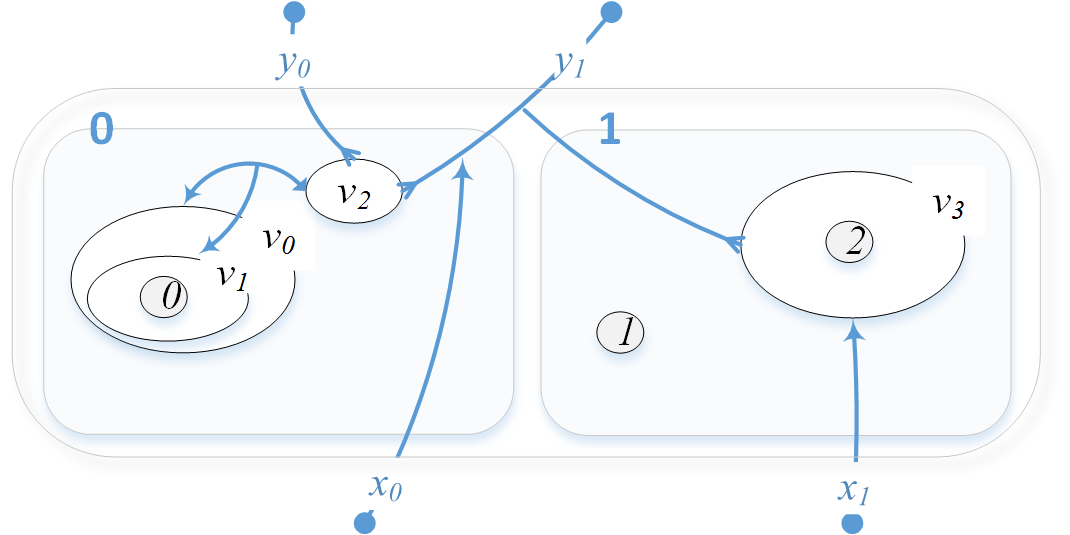}
\caption{as a graph in $\coalg{\rGF_4}$}
\label{fig:bigraph_ex}
\end{subfigure}
\caption{Bigraph}
\end{figure}
In Fig.~\ref{fig:pureBiGr_fig3} we have an introductory example from \cite{Milner06} that we represent as a  coalgebraic graph. 
The  coalgebraic graphs in the $\M$-adhesive category $\coalg{\rGF_4}$  need to have  well-founded and hierarchical nodes, where the contains function represents the parent function, so $\node = prnt$.
The function $cntrl$ yields basically the in- and out-degree of each node.  The $link$ function yields hyperedges, which we represent as directed hyperedges.
Hyperedges connecting outer names are represented as  directed hyperedges  with the arc itself as the target, those
 connecting inner names  as  directed hyperedges  with the arc itself as the source. The regions correspond to the roots $\mathbf{0}, \mathbf{1}$ of the forests
given by $\node$ and the site are the distinguished atomic nodes $0,1,2$.
The  nodes $\NN=\{ \mathbf{0}, \mathbf{1}, v_0,v_1,v_2,v_3, 0,1,2\}$ and  the contains function $\node:\NN \to \PPa(\NN)$, 
	yield the place graph. The directed nested hyperedges $\EN=\{e_1,e_2,e_3,e_4,e_5\}$ with $\st: \EN \to \Pot(\NN \uplus \EN)\times \Pot(\NN \uplus \EN)$ 
	 yield the link graph. We have:\\ 
$\begin{array}[t]{rr@{\mapsto}ll}
	    \node:
			  & \mathbf{0} &\{v_0,v_2\}\\
			  & \mathbf{1} &\{v_3,1\}\\
			  & v_0 &\{ v_1\}\\
			  & v_1 &\{ 0\}\\
			  & v_2 & v_2\\
			  & v_3 &\{2\}\\
			  & i &i 		\text{;for } 0\le i\le2
	 \end{array}
	$
	 $\begin{array}[t]{rr@{\mapsto}l}
	   \text{ and }  \st:& e_1 & (\{v_1,v_2,v_3\}, \{v_1,v_2,v_3\}) \\
			     & y_0 &  (\{ v_2\}, \{ v_2\})\\
			     & y_1 &  (\{ v_2,v_3\},  \{ v_2,v_3\})\\
			     & x_0 &  (\{x_0\},  \{y_1\})\\
			     & x_1 &  (\{x_1\},  \{v_3\})	 
	 \end{array}
	$
	\\	
Assuming $\node$ to be just well-founded we obtain bigraphs with  sharing as in \cite{SC15}.

\subsection{Graph Grouping}  \label{ss.graphgroup} \cite{JPR17} aims at a fundamentally different application area, namely graph grouping to  support data analysts  making decisions based on very  large graphs. Here, a graph hierarchy is established to cope with large amounts of data and to aggregate them.   Graph grouping operators produce a so-called summary graph containing super vertices and super edges. A super vertex stores the properties representing the group of nodes, and a super edge stores the properties representing the group of edges. Basically this leads to a  contains function
$\node: \NN \to \PPa(\NN)$ 
	that are well-founded but not necessarily hierarchical and a neighbour function $\st:\EN \to \NN \times \NN \times \PPa(\EN)$.
	These can be given as coalgebraic graphs in the category of  coalgebras $\coalg{\rGF_5}$   (see Table~\ref{tab:InvolvedFunctors}) that is $\M$-adhesive.\\
	But clearly this graph grouping is only sensible for attributed graphs since these used to abstract the data.

\section{Related Work}
\label{s.relWork}
%relWork.tex  vormals state

\subsection{Recursive Sets}
A set $M$ of integers is said to be recursive (see e.g. \cite{Davis82}) if there is a total recursive function $f(x) $ such that $f(x)=1$ for $x \in M$ and f$(x)=0$ for $x \notin M$. Any recursive set is also recursively enumerable.
\\
Finite sets, sets with finite complements, the odd numbers, and the prime numbers are all examples of recursive sets. The union and intersection of two recursive sets are themselves recursive, as is the complement of a recursive set.

\subsection{Recursive Graphs}
\cite{Bean76} is concerned  with recursive function theory that are analogous to certain problems in chromatic graph theory and introduces the following definition of recursive graphs according to \cite{Rem86}.
A recursive graph $\cal{G} =(V,E)$ is recursive, if $V$, the set of vertices is a recursive subset of the natural numbers $\Nat$ 
and $E$, the set of edges is a recursive subset of $\Nat^{(2)}$, the set of unordered pairs  from $\Nat$.
These recursive graphs have an infinite amount of nodes that need to be computed by a recursive function. 

\section{Conclusion}
\label{s.conc}
%conc.tex

We have presented a novel approach to hierarchies in graphs and graph transformations. This approach supports the use of the mature and extensive  theory of algebraic graph transformations for graphs with many different and also uncommon hierarchy concepts.
The aim of our approach is not a generalisation of hierarchy concepts in graph transformation but a possibility to access  algebraic graph transformation for graphs with a wide  spectrum of hierarchy concepts. 

We have presented an approach to graphs that allows arbitrarily nested nodes and edges being attached  to nodes, sets of nodes and edges.
This gives rise to  an abstract notion of graphs based on different functors. 

The vision  is a clear and simple access that provides a potential user with the hierarchical technique that is most adequate for the purpose. This requires a much deeper treatment of the 
hierarchical concepts at the abstract categorical level as well as an intuitive representation of these concepts.

\bibliographystyle{alphadin}
\bibliography{recgra}

\newcommand{\etalchar}[1]{$^{#1}$}
\begin{thebibliography}{BMPT14}

% this bibliography is generated by alphadin.bst [8.2] from 2005-12-21

\providecommand{\url}[1]{\texttt{#1}}
\expandafter\ifx\csname urlstyle\endcsname\relax
  \providecommand{\doi}[1]{doi: #1}\else
  \providecommand{\doi}{doi: \begingroup \urlstyle{rm}\Url}\fi

\bibitem[Ada05]{AD05}
\textsc{Adamek}, Jiri:
\newblock {Introduction to coalgebra}.
\newblock {In: }\emph{Theory and Applications of Categories} 14 (2005),
  157--199.
\newblock \url{http://www.tac.mta.ca/tac/volumes/14/8/14-08abs.html}

\bibitem[BCM10]{BCM10}
\textsc{Bruni}, Roberto ; \textsc{Corradini}, Andrea  ; \textsc{Montanari},
  Ugo:
\newblock Modeling a Service and Session Calculus with Hierarchical Graph
  Transformation.
\newblock {In: }\emph{{ECEASST}} 30 (2010).
\newblock
  \url{http://journal.ub.tu-berlin.de/index.php/eceasst/article/view/427}

\bibitem[BCRS16]{BCRS16}
\textsc{Benford}, Steve ; \textsc{Calder}, Muffy ; \textsc{Rodden}, Tom  ;
  \textsc{Sevegnani}, Michele:
\newblock On Lions, Impala, and Bigraphs: Modelling Interactions in
  Physical/Virtual Spaces.
\newblock {In: }\emph{ACM Trans. Comput.-Hum. Interact.} 23 (2016), Nr. 2,
  9:1--9:56.
\newblock \url{http://dx.doi.org/10.1145/2882784}. --
\newblock DOI 10.1145/2882784. --
\newblock ISSN 1073--0516

\bibitem[Bea76]{Bean76}
\textsc{Bean}, Dwight~R.:
\newblock Effective coloration.
\newblock {In: }\emph{Journal of Symbolic Logic} 41 (1976), Nr. 2, S. 469--480.
\newblock \url{http://dx.doi.org/10.1017/S0022481200051549}. --
\newblock DOI 10.1017/S0022481200051549

\bibitem[BH01]{BH01}
\textsc{Busatto}, Giorgio ; \textsc{Hoffmann}, Berthold:
\newblock Comparing Notions of Hierarchical Graph Transformation.
\newblock {In: }\emph{Electr. Notes Theor. Comput. Sci.} 50 (2001), Nr. 3,
  310--317.
\newblock \url{http://dx.doi.org/10.1016/S1571-0661(04)00184-7}. --
\newblock DOI 10.1016/S1571--0661(04)00184--7

\bibitem[BKK05]{BKK05}
\textsc{Busatto}, Giorgio ; \textsc{Kreowski}, Hans{-}J{\"{o}}rg  ;
  \textsc{Kuske}, Sabine:
\newblock Abstract hierarchical graph transformation.
\newblock {In: }\emph{Mathematical Structures in Computer Science} 15 (2005),
  Nr. 4, 773--819.
\newblock \url{http://dx.doi.org/10.1017/S0960129505004846}. --
\newblock DOI 10.1017/S0960129505004846

\bibitem[BMPT14]{BMPT14}
\textsc{Bruni}, Roberto ; \textsc{Montanari}, Ugo ; \textsc{Plotkin}, Gordon~D.
   ; \textsc{Terreni}, Daniele:
\newblock {On Hierarchical Graphs: Reconciling Bigraphs, Gs-monoidal Theories
  and Gs-graphs}.
\newblock {In: }\emph{Fundam. Inform.} 134 (2014), Nr. 3-4, 287--317.
\newblock \url{http://dx.doi.org/10.3233/FI-2014-1103}. --
\newblock DOI 10.3233/FI--2014--1103

\bibitem[Bus02]{Busatto02}
\textsc{Busatto}, Giorgio:
\newblock \emph{An abstract model of hierarchical graphs and hierarchical graph
  transformation}, University of Paderborn, Germany, Diss., 2002.
\newblock
  \url{http://ubdata.uni-paderborn.de/ediss/17/2002/busatto/disserta.pdf}

\bibitem[DHP02]{DHP02}
\textsc{Drewes}, Frank ; \textsc{Hoffmann}, Berthold  ; \textsc{Plump}, Detlef:
\newblock Hierarchical Graph Transformation.
\newblock {In: }\emph{J. Comput. Syst. Sci.} 64 (2002), Nr. 2, 249--283.
\newblock \url{http://dx.doi.org/10.1006/jcss.2001.1790}. --
\newblock DOI 10.1006/jcss.2001.1790. --
\newblock ISSN 0022--0000

\bibitem[DKH97]{DKH97}
\textsc{Drewes}, Frank ; \textsc{Kreowski}, Hans{-}J{\"{o}}rg  ;
  \textsc{Habel}, Annegret:
\newblock Hyperedge Replacement, Graph Grammars.
\newblock {In: }\textsc{Rozenberg}, Grzegorz (Hrsg.): \emph{Handbook of Graph
  Grammars and Computing by Graph Transformations, Volume 1: Foundations}.
\newblock World Scientific, 1997. --
\newblock ISBN 9810228848, S. 95--162

\bibitem[EEPT06]{FAGT}
\textsc{Ehrig}, H. ; \textsc{Ehrig}, K. ; \textsc{Prange}, U.  ;
  \textsc{Taentzer}, G.:
\newblock \emph{{Fundamentals of Algebraic Graph Transformation}}.
\newblock Springer, 2006 (EATCS Monographs in TCS)

\bibitem[EGH10]{EGH10}
\textsc{Ehrig}, Hartmut ; \textsc{Golas}, Ulrike  ; \textsc{Hermann}, Frank:
\newblock Categorical Frameworks for Graph Transformation and {HLR} Systems
  Based on the {DPO} Approach.
\newblock {In: }\emph{Bulletin of the {EATCS}} 102 (2010), S. 111--121

\bibitem[Ehr02]{Ehr02_bigraphs}
\textsc{Ehrig}, Hartmut:
\newblock Bigraphs meet double pushouts.
\newblock {In: }\emph{Bulletin of the EATCS} 78 (2002), S. 72--85

\bibitem[ES95]{ES95}
\textsc{Engels}, Gregor ; \textsc{Sch{\"{u}}rr}, Andy:
\newblock Encapsulated hierarchical graphs, graph types, and meta types.
\newblock {In: }\emph{Electr. Notes Theor. Comput. Sci.} 2 (1995), 101--109.
\newblock \url{http://dx.doi.org/10.1016/S1571-0661(05)80186-0}. --
\newblock DOI 10.1016/S1571--0661(05)80186--0

\bibitem[GH97]{GH97}
\textsc{Gadducci}, Fabio ; \textsc{Heckel}, Reiko:
\newblock An inductive view of graph transformation.
\newblock {In: }\textsc{Parisi{-}Presicce}, Francesco (Hrsg.): \emph{Recent
  Trends in Algebraic Development Techniques} Bd. 1376, Springer, 1997 (Lecture
  Notes in Computer Science). --
\newblock ISBN 3--540--64299--4, 223--237

\bibitem[GRJD16]{GRJD16}
\textsc{Gassara}, Amal ; \textsc{Rodriguez}, Ismael~B. ; \textsc{Jmaiel},
  Mohamed  ; \textsc{Drira}, Khalil:
\newblock Encoding Bigraphical Reactive Systems into Graph Transformation
  Systems.
\newblock {In: }\emph{Electronic Notes in Discrete Mathematics} 55 (2016),
  207--210.
\newblock \url{http://dx.doi.org/10.1016/j.endm.2016.10.051}. --
\newblock DOI 10.1016/j.endm.2016.10.051

\bibitem[{Jä}15]{Jae2015}
\textsc{{Jäkel}}, C.:
\newblock \emph{{A unified categorical approach to graphs}}.
\newblock \url{https://arxiv.org/abs/1507.06328}.
\newblock \,Version:\,2015

\bibitem[{Jä}16]{Jae2016}
\textsc{{Jäkel}}, C.:
\newblock \emph{{A coalgebraic model of graphs}}.
\newblock \url{https://arxiv.org/abs/1508.02169}.
\newblock \,Version:\,2016

\bibitem[Jac16]{Jac16}
\textsc{Jacobs}, Bart:
\newblock \emph{Introduction to Coalgebra: Towards Mathematics of States and
  Observation}. Bd.~59.
\newblock Cambridge University Press, 2016

\bibitem[JPR17]{JPR17}
\textsc{Junghanns}, Martin ; \textsc{Petermann}, Andr{\'{e}}  ; \textsc{Rahm},
  Erhard:
\newblock Distributed Grouping of Property Graphs with Gradoop.
\newblock {In: }\textsc{Mitschang}, Bernhard (Hrsg.) ; \textsc{Nicklas},
  Daniela (Hrsg.) ; \textsc{Leymann}, Frank (Hrsg.) ; \textsc{Sch{\"{o}}ning},
  Harald (Hrsg.) ; \textsc{Herschel}, Melanie (Hrsg.) ; \textsc{Teubner}, Jens
  (Hrsg.) ; \textsc{H{\"{a}}rder}, Theo (Hrsg.) ; \textsc{Kopp}, Oliver (Hrsg.)
   ; \textsc{Wieland}, Matthias (Hrsg.): \emph{Datenbanksysteme f{\"{u}}r
  Business, Technologie und Web {(BTW} 2017), 17. Fachtagung des
  GI-Fachbereichs ,,Datenbanken und Informationssysteme" (DBIS), 6.-10.
  M{\"{a}}rz 2017, Stuttgart, Germany, Proceedings} Bd. {P-265}, {GI}, 2017
  ({LNI}). --
\newblock ISBN 978--3--88579--659--6, 103--122

\bibitem[Kah14]{Kahl14}
\textsc{Kahl}, Wolfram:
\newblock Graph Transformation with Symbolic Attributes via Monadic Coalgebra
  Homomorphisms.
\newblock {In: }\emph{{ECEASST}} 71 (2014).
\newblock \url{http://journal.ub.tu-berlin.de/eceasst/article/view/999}

\bibitem[Mil06]{Milner06}
\textsc{Milner}, Robin:
\newblock Pure bigraphs: Structure and dynamics.
\newblock {In: }\emph{Inf. Comput.} 204 (2006), Nr. 1, 60--122.
\newblock \url{http://dx.doi.org/10.1016/j.ic.2005.07.003}. --
\newblock DOI 10.1016/j.ic.2005.07.003

\bibitem[Pal04]{Palacz04}
\textsc{Palacz}, Wojciech:
\newblock Algebraic hierarchical graph transformation.
\newblock {In: }\emph{J. Comput. Syst. Sci.} 68 (2004), Nr. 3, 497--520.
\newblock \url{http://dx.doi.org/10.1016/S0022-0000(03)00064-3}. --
\newblock DOI 10.1016/S0022--0000(03)00064--3

\bibitem[Rem86]{Rem86}
\textsc{Remmel}, J.B.:
\newblock Graph colorings and recursively bounded $\prod_1^0$-classes.
\newblock {In: }\emph{Annals of Pure and Applied Logic} 32 (1986), S. 185 --
  194.
\newblock
  \url{http://dx.doi.org/http://dx.doi.org/10.1016/0168-0072(86)90051-5}. --
\newblock DOI http://dx.doi.org/10.1016/0168--0072(86)90051--5. --
\newblock ISSN 0168--0072

\bibitem[Rog87]{Davis82}
\textsc{Rogers}, H.:
\newblock \emph{Theory of Recursive Functions and Effective Computability}.
\newblock MIT Press, 1987

\bibitem[Rut00]{Ru00}
\textsc{Rutten}, J.J.M.M.:
\newblock Universal coalgebra: a theory of systems.
\newblock {In: }\emph{Theoretical Computer Science} 249 (2000), Nr. 1, 3 - 80.
\newblock
  \url{http://dx.doi.org/http://dx.doi.org/10.1016/S0304-3975(00)00056-6}. --
\newblock DOI http://dx.doi.org/10.1016/S0304--3975(00)00056--6. --
\newblock ISSN 0304--3975

\bibitem[SC15]{SC15}
\textsc{Sevegnani}, Michele ; \textsc{Calder}, Muffy:
\newblock Bigraphs with sharing.
\newblock {In: }\emph{Theor. Comput. Sci.} 577 (2015), 43--73.
\newblock \url{http://dx.doi.org/10.1016/j.tcs.2015.02.011}. --
\newblock DOI 10.1016/j.tcs.2015.02.011

\bibitem[Sch99]{Schn99}
\textsc{Schneider}, H.~J.:
\newblock Describing systems of processes by means of high-level replacement.
\newblock {In: }\emph{Handbook of Graph Grammars and Computing by Graph
  Transformation, Volume 3}.
\newblock World Scientific, 1999, S. 401--450

\bibitem[{\'S}{\L}P{\etalchar{+}}17]{SLPPG2017}
\textsc{{\'S}lusarczyk}, Gra{\.z}yna ; \textsc{{\L}achwa}, Andrzej ;
  \textsc{Palacz}, Wojciech ; \textsc{Strug}, Barbara ; \textsc{Paszy{\'n}ska},
  Anna  ; \textsc{Grabska}, Ewa:
\newblock An extended hierarchical graph-based building model for design and
  engineering problems.
\newblock {In: }\emph{Automation in Construction} 74 (2017), S. 95--102

\bibitem[Wor13]{Worboys13}
\textsc{Worboys}, Michael~F.:
\newblock Using bigraphs to model topological graphs embedded in orientable
  surfaces.
\newblock {In: }\emph{Theor. Comput. Sci.} 484 (2013), 56--69.
\newblock \url{http://dx.doi.org/10.1016/j.tcs.2013.02.018}. --
\newblock DOI 10.1016/j.tcs.2013.02.018

\bibitem[WW12]{WW12}
\textsc{Walton}, Lisa ; \textsc{Worboys}, Michael~F.:
\newblock A Qualitative Bigraph Model for Indoor Space.
\newblock {In: }\textsc{Xiao}, Ningchuan (Hrsg.) ; \textsc{Kwan}, Mei{-}Po
  (Hrsg.) ; \textsc{Goodchild}, Michael~F. (Hrsg.)  ; \textsc{Shekhar}, Shashi
  (Hrsg.): \emph{Geographic Information Science} Bd. 7478, Springer, 2012
  (Lecture Notes in Computer Science). --
\newblock ISBN 978--3--642--33023--0, S. 226--240

\end{thebibliography}

\begin{appendix}
\section{Appendix}
%stuff
\subsection{Functors Preserving Pullbacks (along Monomorphims)}
	\label{s.misc}
	\label{s.powerset}

	The functors $\prod$ and $\funFinal$ are essentially limit constructions and hence compatible with  a pullback construction.\\

	The coproduct in \cSets is the disjoint union and can be considered a functor from the functor category $[\categ{I}, \cSets]$ to \cSets where \categ{I} a  small, discrete category.
	\begin{lemma}[$\coprod:\categ{I} \to  \cSets$ preserves pullbacks]% along injective functions]
	\label{l.coprod_PB}
	\end{lemma}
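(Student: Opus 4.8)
Lemma (\(\coprod:\categ{I} \to \cSets\) preserves pullbacks): the coproduct functor, viewed as a functor from \([\categ{I},\cSets]\) to \(\cSets\) for a small discrete category \categ{I}, preserves pullbacks.

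The plan is to compute both sides pointwise and compare them set-theoretically. Let \((D_i)_{i\in I}\) be the cospan apex in \([\categ{I},\cSets]\), with families of maps \(f=(f_i):(B_i)\to(D_i)\) and \(g=(g_i):(C_i)\to(D_i)\); the pullback in a functor category over a discrete index category is computed componentwise, so the pullback object is \((A_i)\) with \(A_i=\{(b,c)\in B_i\times C_i\mid f_i(b)=g_i(c)\}\), and the projections are the evident componentwise maps. Applying \(\coprod\) gives \(\coprod_i A_i\), which I would describe concretely as \(\{(i,b,c)\mid i\in I,\ b\in B_i,\ c\in C_i,\ f_i(b)=g_i(c)\}\). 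On the other side, \(\coprod\) applied to the cospan yields \(\coprod_i B_i \to \coprod_i D_i \gets \coprod_i C_i\), concretely the maps sending \((i,b)\mapsto(i,f_i(b))\) and \((i,c)\mapsto(i,g_i(c))\); its pullback in \(\cSets\) is \(\{((i,b),(j,c))\mid i=j,\ f_i(b)=g_i(c)\}\).

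First I would exhibit the obvious bijection between these two sets, \((i,b,c)\mapsto((i,b),(i,c))\), and check it is compatible with the two projection maps to \(\coprod_i B_i\) and \(\coprod_i C_i\) — this is immediate from the definitions. Then, rather than just asserting an isomorphism of sets, I would verify the universal property directly: given any set \(X\) with maps \(p:X\to\coprod_i B_i\) and \(q:X\to\coprod_i C_i\) agreeing into \(\coprod_i D_i\), each \(x\in X\) has \(p(x)=(i,b)\) and, since the composites into the coproduct of the \(D_i\) must land in the same summand, \(q(x)=(i,c)\) with the \emph{same} index \(i\); moreover \(f_i(b)=g_i(c)\), so there is a unique induced element \((i,b,c)\) of \(\coprod_i A_i\), and this assignment is the unique mediating map. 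That gives the pullback property and completes the proof.

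The only genuine subtlety — and the step I would be most careful about — is the "same index" observation: it relies on the fact that the coproduct injections \(\iota_i:D_i\to\coprod_j D_j\) are jointly disjoint (monomorphic with disjoint images) in \(\cSets\), so that knowing \(\iota_i(f_i(b))=\iota_k(g_k(c))\) forces \(i=k\). This is exactly why the statement is true in \(\cSets\) and would fail in a category where coproduct injections can overlap. Everything else is routine unwinding of the pointwise formula for limits in a functor category over a discrete base, and I would state that citation-style rather than belabour it.
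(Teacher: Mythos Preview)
Your proposal is correct and follows essentially the same approach as the paper: both verify the universal property of the pullback for $\coprod_i A_i$ directly, by taking a comparison object $X$ with maps into $\coprod_i B_i$ and $\coprod_i C_i$ and producing the unique mediating map $h:X\to\coprod_i A_i$ elementwise. Your treatment is in fact more careful than the paper's, which writes $h(x)=(c_i,b_i)$ without explicitly justifying why $\hat f(x)$ and $\hat g(x)$ must lie in the same summand; you isolate this ``same index'' step and correctly identify it as the place where disjointness of coproduct injections in $\cSets$ is used.
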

	
	\begin{proof}
	Given  pullbacks (PB) in \cSets for the index category \categ{I}  and diagram (1).\\

	$\xymatrix@=10mm{
	           A_i \ar[r]|{\pi_{B_i}}  \ar[d]|{\pi_{C_i}}    \ar@{}[dr]|{(PB)}
		  & B_i   \ar[d]|{f_i}   \\
			   C_i \ar[r]|{g_i} 
			&  D_i
		}
	$ \hfill
	$\xymatrix@=15mm{
	   X \ar[dr]|h \ar@/_6mm/[ddr]|{\hat{f}_i}  \ar@/^3mm/[drr]|{\hat{g}_i}    &&\\
	       &   \coprod( A_i) \ar[r]|{\coprod(\pi_{B_i})}  \ar[d]|{\coprod(\pi_{C_i})}    \ar@{}[dr]|{(1)}  \ar@{}[ur]|(.3){(2)}  \ar@{}[dl]|(.3){(3)} 
		  & \coprod(B_i)  \ar[d]|{\coprod(f_i)}   \\
			 &  \coprod(C_i)\ar[r]|{\coprod(g_i)} 
			&  \coprod(D_i)
		}
	$ \\[2mm]
	$(1)$ commutes since $\coprod$ is a functor.\\
	For each $X$ with  $g_i \circ \hat{f}_i  = f_i \circ \hat{g}_i$ there is the unique $h: X\to \coprod( A_i)$ with
	$h(x) = (c_i,b_i)$ with $\hat{g}_i (x) =b_i$ and $\hat{f}_i (x) = c_i$, so that $(2)$ and $(3)$ commute.
	\end{proof}

	The copy functor $\funX^n$ takes one set $S$ and copies the set $n$-times yielding an object $(S,S, ..., S) $ in the category $\cSets^n$.
	\begin{definition}[Copy functor $\funX$]~\\
	  $\funX^1 = Id_\cSets: \cSets \to \cSets$ and \\
		$\funX^{n+1} : \cSets \to \cSets^{n+1}$ with $\funX^{n+1} (S) = \funX^n(S) \times S$ for  sets $S$
		and \\
		$\funX^{n+1}(f) = \funX^n(f) \times f$ for functions $f:S\to S'$.\\
		
		 If $n$  is not  in the focus we may omit  it.
	
	\end{definition}
	Obviously, $\funX$ is well-defined and preserves injections. \\
	Note, $\funX$ differs  from a discrete schema $\mathcal{S}$ since $\mathcal{S}$ chooses sets that may be different. \\
	The symbol "$\times$" is used here to construct tuples of sets, whereas the product functor $\prod$ yields tuples of elements, so in these terms $\prod \circ\funX^2(S)=  S\times S$ with $\times$ the set-theoric cartesian product.

\begin{lemma}[The copy functor $\funX$ preserves pullbacks.]\label{l.funX_PB}
\end{lemma}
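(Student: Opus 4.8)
The plan is to reduce everything to the standard fact that limits in a product category are computed componentwise. First I would record that a commuting square in $\cSets^{n}$ is a pullback if and only if each of its $n$ component squares is a pullback in $\cSets$; this is the description of (co)limits in $\cSets^{n}\cong[\mathbf{n},\cSets]$ for the discrete index category $\mathbf{n}$, and it is in the same spirit as the coproduct argument in Lemma~\ref{l.coprod_PB} and the remark above that $\prod$ and $\funFinal$, being limit constructions, are compatible with pullbacks.

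Then I would take an arbitrary pullback square $(PB)$ in $\cSets$ --- say with cospan $B\nach{f_{1}}D\von{g_{1}}C$, pullback object $A$, and projections $\pi_{B},\pi_{C}$ --- and apply $\funX^{n}$ to it. The image is the object $(A,\dots,A)$, the two families of projections $(\pi_{B},\dots,\pi_{B})$ and $(\pi_{C},\dots,\pi_{C})$, and the cospan $(B,\dots,B)\nach{(f_{1},\dots,f_{1})}(D,\dots,D)\von{(g_{1},\dots,g_{1})}(C,\dots,C)$. Since $\funX^{n}$ places the very same set and the very same function into every slot, each of the $n$ component squares of this image is literally $(PB)$ again, so the componentwise criterion immediately gives that it is a pullback in $\cSets^{n}$. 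If one prefers to spell out the universal property directly, it suffices to note that a competitor $(X_{1},\dots,X_{n})$ equipped with maps to $(B,\dots,B)$ and $(C,\dots,C)$ agreeing over $(D,\dots,D)$ agrees slotwise, so the pullback property of $(PB)$ yields in each slot $i$ a unique mediator $h_{i}\colon X_{i}\to A$, and $(h_{1},\dots,h_{n})$ is the unique mediator in $\cSets^{n}$.

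As an alternative I would offer a two-line induction on $n$: for $n=1$ we have $\funX^{1}=Id_{\cSets}$, which trivially preserves pullbacks; for the step, $\funX^{n+1}$ is the pairing $\langle\funX^{n},Id_{\cSets}\rangle$ into $\cSets^{n+1}$, and a pairing of two pullback-preserving functors is again pullback-preserving, for exactly the componentwise reason above.

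I do not expect a genuine obstacle: the statement is essentially an instance of ``limits commute with limits''. The only point that needs attention is notational --- one must keep the ``$\times$'' that tuples sets into an object of $\cSets^{n+1}$ apart from the cartesian-product functor $\prod$ (a distinction the paper already flags), so that ``pullbacks in $\cSets^{n}$'' is not accidentally read as a pullback involving the set $S\times S$.
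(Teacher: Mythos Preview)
Your proposal is correct and matches the paper's proof: the paper, too, applies $\funX^{n}$ to the pullback square and verifies the universal property directly by taking a competitor $X=(X_{1},\dots,X_{n})$ and assembling the componentwise mediators $h_{i}\colon X_{i}\to A$ into $h=(h_{1},\dots,h_{n})$. Your additional ``limits are componentwise in $\cSets^{n}$'' framing and the induction alternative are sound elaborations, but the core argument is the same.
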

	
\begin{proof}

	Given  pullback $(PB)$  in $\cSets$:\\

	$\xymatrix@=10mm{
	           A\ar[r]|{\pi_{B}}  \ar[d]|{\pi_{C}}    \ar@{}[dr]|{(PB)}
		  & B  \ar[d]|{f}   \\
			   C \ar[r]|{g} 
			&  D
		}
		$
	 \\
	leading to the following pullback in the category  $\cSets^n$ :
	$$\xymatrix@R=10mm@C=25mm{
	   X \ar[dr]|h \ar@/_6mm/[ddr]|{\hat{f}}  \ar@/^3mm/[drr]|{\hat{g}}    &&\\
	       & (A,A,..,A) \ar[r]  \ar[d]   \ar@{}[dr]|{(1)}  \ar@{}[ur]|(.3){(2)}  \ar@{}[dl]|(.3){(3)} 
		  & (B,B,...,B)\ar[d]|{(f,f,...,f)}\\
			 & (C,C,...,C)\ar[r]|{(g,g,...,g)}
			&  (D,D,...,D)
		}
	$$
	$(1)$ commutes since $\funX$ is a functor.\\
	For each $X=(X_1, X_2,...,X_n)$ with  $(g,g,...,g) \circ \hat{f}  = (f,f,...,f) \circ \hat{g}$ there is the induced pullback morphism 
	$h: X\to  (A,A,...,A)$ with
	$h= (h_1,h_2,...,h_n) $ and  $h_i: X_i \to A$ are the induced pullback morphism of $(PB)$, so that $(2)$ and $(3)$ commute.
	\end{proof}

	Next we investigate some power set functors.
	
\begin{definition}[$\Pot,\Pot^{(i,j)}\Potfin$] \label{d.pots}
$\Pot,\Pot^{(i,j},\Potfin:\cSets \to \cSets$ with
\begin{itemize}
	\item $\Pot(M)= \{M' \subseteq M\}$,
	\item $\Pot^{(i,j)}(M)= \{M' \subseteq M \mid i\le |M|\le j\}$ and 
	\item$ \Potfin(M)= \{M' \subseteq M \mid  |M|\in \Nat\}$
\end{itemize}

so that $f:M\to N$ with $\Pot(f)(M') = f(M')$,  the same for $\Pot^{(i,j)}(f)$ and $ \Potfin(f)$
\end{definition}

%\subsection{Powerset Functor does not preserve pullbacks}
Contrary to Lemma A.39 in \cite{FAGT} the covariant powerset functor $\Pot$ does not preserve pullbacks.

Given a pullback diagram $(PB)$  in  \cSets with $D=\{d \}$, $B=\{1,2\}$ and $C=\{c,c'\}$.\\
Then $A=\{(1,c), (2,c), (1,c'), (2,c')\}$ is PB with the corresponding projections.

	 $(P,\pi_{\Pot(B)}, \pi_{\Pot(C)})$ is the pullback in \cSets over $(\Pot(D), \Pot(f_1), \Pot(g_1))$.\\
	$P=\{(B,C) \mid f_1(B) =g_1(C) \} = \{(\emptyset,\emptyset) \} \cup \{ \{1,\}, \{2\} ,\{1,2\}\}  \times \{ \{c\},\{c'\},\{c,c'\}\}$.\\
	
	Unfortunately $|P| = 10 \neq 16 = | \Pot(A)|$.

	\begin{lemma}[$\Pot:\cSets \to  \cSets$ preserves pullbacks along injective functions]
	\label{l.Pot_PB_along_injc}
	\end{lemma}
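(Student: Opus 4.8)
The plan is to follow the same template as the proof of Lemma~\ref{l.PPaprsvinjPB} (and its $\PPb$ and $\Potom$ analogues), which here becomes considerably simpler: since $\Pot$ involves no nesting, everything is finite and no Noetherian induction is needed. Starting from a pullback square $(PB)$ in \cSets with $g_1:C\hookrightarrow D$ injective, I would first note that the opposite leg $\pi_B:A\hookrightarrow B$ is injective (a pullback of a mono is a mono) and that $\Pot$ sends injections to injections — if $f$ is injective then $f^{-1}(f(M'))=M'$, so $\Pot(f)(M'_1)=\Pot(f)(M'_2)$ forces $M'_1=M'_2$. Hence $\Pot(\pi_B):\Pot(A)\to\Pot(B)$ and the projection $\pi_{\Pot(B)}:P\to\Pot(B)$ from the comparison pullback $P$ of $(\Pot(D),\Pot(f_1),\Pot(g_1))$ are both injective.

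Next I would describe $P$ explicitly as $P=\{(B',C')\in\Pot(B)\times\Pot(C)\mid f_1(B')=g_1(C')\}$ together with the induced mediating morphism $h:\Pot(A)\to P$, which is just $h(A')=(\pi_B(A'),\pi_C(A'))$ because $\Pot(\pi_B)(A')=\pi_B(A')$. The goal is then to show that $h$ is bijective. Using the concrete pullback $A=\{(b,c)\mid f_1(b)=g_1(c)\}$ in \cSets, I would define the candidate inverse $\bar h:P\to\Pot(A)$ by $\bar h(B',C')=\{(b,c)\in A\mid b\in B',\ c\in C'\}$ and verify the two composites.

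For $h\circ\bar h = id_P$: given $(B',C')\in P$, unfolding gives $\pi_B(\bar h(B',C'))=\{b\in B'\mid\exists\,c\in C'\colon f_1(b)=g_1(c)\}$; since $f_1(B')=g_1(C')$, for each $b\in B'$ we have $f_1(b)\in g_1(C')$, so such a $c$ exists and this set equals $B'$; symmetrically $\pi_C(\bar h(B',C'))=C'$. For $\bar h\circ h = id_{\Pot(A)}$: given $A'\subseteq A$, the inclusion $A'\subseteq\bar h(\pi_B(A'),\pi_C(A'))$ is immediate; conversely, if $(b,c)\in A$ with $b\in\pi_B(A')$ and $c\in\pi_C(A')$, choose $(b,c_1)\in A'$ and $(b_1,c)\in A'$, so that $g_1(c_1)=f_1(b)=g_1(c)$, hence $c_1=c$ since $g_1$ is injective, and therefore $(b,c)=(b,c_1)\in A'$. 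This gives $\Pot(A)\cong P$, so the image square is a pullback.

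The only subtle step — and the one place where injectivity of $g_1$ is genuinely used — is the final inclusion $\bar h(\pi_B(A'),\pi_C(A'))\subseteq A'$; without it a pair $(b,c)$ may have both coordinates separately realised by elements of $A'$ without lying in $A'$ itself, which is precisely the obstruction exhibited in the counterexample preceding the lemma (where $|P|=10\neq 16=|\Pot(A)|$). Everything else is routine unfolding of definitions, so I do not anticipate any further difficulty.
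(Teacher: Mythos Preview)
Your proposal is correct and follows essentially the same approach as the paper: the same explicit pullback $A=\{(b,c)\mid f_1(b)=g_1(c)\}$, the same mediating morphism $h$, the same inverse $\bar h(X,Y)=\{(x,y)\mid x\in X,\ y\in Y,\ f_1(x)=g_1(y)\}$, and the same verification of $h\circ\bar h=id_P$. The only minor deviation is in the check $\bar h\circ h=id_{\Pot(A)}$: you argue it directly by element chasing using injectivity of $g_1$, whereas the paper first establishes $\Pot(\pi_B)\circ\bar h=\pi_{\Pot(B)}$ and then cancels the injective $\Pot(\pi_B)$ on the left; both are equally valid and use the injectivity hypothesis at the same logical point.
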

	
	\begin{proof}
	Given a pullback (PB) in \cSets and diagram (1) as above with $g_1: C \hookrightarrow A$ injective.\\
 Pullbacks and  the powerset functor preserve injections, so \\
$\pi_B: A\hookrightarrow B$, $\Pot(\pi_B): \Pot(A) \hookrightarrow \Pot(B)$
	and $\pi_{\Pot(b)}: P \hookrightarrow \Pot(B)$ are injective.\\
	$\xymatrix@=15mm{
	           A \ar@{^{(}->}[r]|{\pi_B}  \ar[d]|{\pi_C}    \ar@{}[dr]|{(PB)}
		  & B   \ar[d]|{f_1}   \\
			   C \ar@{^{(}->}[r]|{g_1} 
			&  D
		}
	$ \hfill
	$\xymatrix@=15mm{
	            P   \ar@{..>}@/^/[dr]|{\bar{h}}   \ar@/^/@{^{(}->}[drr]|{\pi_{\Pot(B)} }  \ar@/_/[ddr]|{\pi_{\Pot(C)}  }     %\ar @{} @/_/  [drr]|{[2)}  
									&& 							     \\
       &    \Pot(A) \ar@{^{(}->}[r]|{\Pot(\pi_B)}  \ar[d]|{\Pot(\pi_C)}  \ar@{}[dr]|{(1)} \ar@/^/[ul]|h   \ar@{}[ur]^<<<{(2)} \ar@{}[dl]|<<<{(3)} 
		  &   \Pot(B)   \ar[d]|{\Pot(f_1)}
			\\
			 &    \Pot(C) \ar@{^{(}->}[r]|{\Pot(g_1)}
			&   \Pot(D)
		}
	$
	\\
	Since $(PB)$ is a pullback diagram we have  $A=\{ (b,c) \mid f_1(b) = g_1(c)\}$.\\
	 $(1)$ commutes, since \Pot  is a functor.\\
	$h: \Pot(A) \to P$ is given by $h(A') = (\Pot(\pi_B)(A'),\Pot(\pi_C)(A') )$ the induced PB morphism so that, 
	$\Pot(\pi_B) \circ h = \pi_{\Pot(B)} $ and    $\Pot(\pi_C) \circ h = \pi_{\Pot(C)} $ .\\

	We define  $\bar{h}: P \to \Pot(A)$ with \\
	$\bar{h}(X,Y) = \{ (x,y) \mid x \in X, \; y \in Y, \;  f_1(x) = g_1(y)\}  =( X\times Y) \cap A $   \\and  have:
	
	\begin{enumerate}
		\item 		$\bar{h}$ is well-defined since $\bar{h}(X,Y)  \in \Pot(A)$.
		\item $(2)$ commutes along $\bar{h}$, i.e.  $\Pot(\pi_B) \circ \bar{h}  = \pi_{\Pot(B)}(X,Y)$\\
		  For $(X,Y) \in P$ we have 	
			\begin{align*}
			    \Pot(\pi_B) \circ \bar{h} (X,Y)\\
					                 & =  \Pot(\pi_B)  \{ (x,y) \mid x \in X, \; y \in Y, \;  f_1(x) = g_1(y)\} \\
													&=  \{ x \mid x \in X \land  \exists; y \in Y : f_1(x) = g_1(y)\}) \\
													&=X \\
													& = \pi_{\Pot(B)}(X,Y)
					\end{align*}
	\end{enumerate}
	We show $P\cong \Pot(A)$:
	\begin{itemize}
	   \item $h \circ  \bar{h} (X,Y) = id_P$:\\ 
		For each $(X,Y) \in P$ we have
		\begin{align*}
h \circ  \bar{h}(X,Y) &= h(  \{ (x,y) \mid x \in, \; y \in Y, \;  f_1(x) = g_1(y)\})\\
													&= h(  \{ (x,y) \mid x \in X, \; y \in Y \;  f_1(x) = g_1(y)\})\\
													&=  (  \{ x \mid x \in X \land  \exists; y \in Y : f_1(x) = g_1(y)\}),  \{ y \mid y \in Y \land  \exists  x \in X, : f_1(x) = g_1(y)\})\\
													&=(X,Y) 
					\end{align*}
		\item $\bar{h} \circ h = id_{\Pot(A)}$:\\
		Since $\Pot(\pi_B)$ is injective $\Pot(\pi_B) \circ \bar{h} \circ h = \pi_{\Pot(B)}  \circ h = \Pot(\pi_B)  = \Pot(\pi_B)  \circ  id_{\Pot(A)}$\\
		             we have  $\bar{h} \circ h = id_{\Pot(A)}$
	\end{itemize}
	\end{proof}

	\begin{corollary}[$\Pot^{(i,j)}, \Potfin:\cSets \to  \cSets$ preserve pullbacks along injective functions]
	\label{c.Pot_PB_along_injc}
	\end{corollary}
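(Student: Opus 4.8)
The plan is to deduce this from Lemma~\ref{l.Pot_PB_along_injc} by exploiting that $\Pot^{(i,j)}$ and $\Potfin$ are subfunctors of $\Pot$, i.e.\ $\funF(M)\subseteq\Pot(M)$ for all $M$ and $\funF(f)$ is the restriction of $\Pot(f)$, where $\funF$ denotes either one of them. Given a pullback $(PB)$ in $\cSets$ with $g_1:C\hookrightarrow D$ injective, I would reuse the whole diagram and notation from the proof of Lemma~\ref{l.Pot_PB_along_injc}: let $\bar P$ be the pullback of $(\Pot(D),\Pot(f_1),\Pot(g_1))$ with its isomorphism $h:\Pot(A)\to\bar P$, $\bar h:\bar P\to\Pot(A)$, and recall $\bar h(X,Y)=(X\times Y)\cap A$ together with $\Pot(\pi_B)\circ h=\pi_{\Pot(B)}$ and $\Pot(\pi_C)\circ h=\pi_{\Pot(C)}$. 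Since $\funF$ also preserves injections, the pullback $P$ of $(\funF(D),\funF(f_1),\funF(g_1))$ in $\cSets$ is simply $P=\bar P\cap(\funF(B)\times\funF(C))$, and its comparison projections are the restrictions of those of $\bar P$; hence it suffices to prove that $h$ and $\bar h$ restrict to mutually inverse bijections between $\funF(A)$ and $P$, equivalently that $h^{-1}(P)=\funF(A)$ inside $\Pot(A)$.

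The concrete steps I would carry out: $h^{-1}(P)=\{A'\subseteq A\mid \pi_B(A')\in\funF(B),\ \pi_C(A')\in\funF(C)\}$. Because $g_1$, and therefore $\pi_B:A\to B$, is injective, every $A'\subseteq A$ satisfies $|A'|=|\pi_B(A')|$ and $A'$ is finite iff $\pi_B(A')$ is finite; moreover $g_1(\pi_C(A'))=f_1(\pi_B(A'))$, so with $g_1$ injective one also controls $|\pi_C(A')|$ in terms of $|\pi_B(A')|=|A'|$. Feeding this into the membership conditions identifies $h^{-1}(P)$ with $\Potfin(A)$, resp.\ with $\Pot^{(i,j)}(A)$. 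Finally, that $h$ and $\bar h$ restrict to inverse maps between these sets is then immediate from the two identities already proved in Lemma~\ref{l.Pot_PB_along_injc}, together with injectivity of $\pi_{\Pot(B)}$ and $\Pot(\pi_B)$ (which restrict to injections $\pi_{\funF(B)}$ and $\funF(\pi_B)$), giving $P\cong\funF(A)$ and hence preservation of the pullback along $g_1$.

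The step I expect to be the only real work is the cardinality bookkeeping for $\Pot^{(i,j)}$: one must make sure that passing from $A'$ to $\pi_B(A')$ and $\pi_C(A')$ keeps the cardinality inside the range prescribed by $(i,j)$, and this is precisely where injectivity of the morphism we pull back along enters — it is what forces $\pi_B$ to be injective and lets cardinalities be transported losslessly between $A$, $B$ and, via $g_1$, $C$. For $\Potfin$ there is nothing delicate, since finiteness is evidently preserved both by $(X\times Y)\cap A$ and by images under the projections.
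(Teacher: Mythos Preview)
The paper states this corollary without proof, presumably intending ``same argument as in Lemma~\ref{l.Pot_PB_along_injc}'' or ``obvious subfunctor restriction''. Your proposal supplies exactly the missing details and is correct in substance: identifying the pullback $P$ for $\funF\in\{\Pot^{(i,j)},\Potfin\}$ as $\bar P\cap(\funF(B)\times\funF(C))$ and then showing $h^{-1}(P)=\funF(A)$ via the cardinality bijection $|A'|=|\pi_B(A')|$ coming from injectivity of $\pi_B$ is precisely the right mechanism.

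One small clean-up: your sentence about controlling $|\pi_C(A')|$ through $g_1$ is more delicate than needed and does not actually yield a lower bound $\ge i$ (since $f_1$ need not be injective, $|g_1(\pi_C(A'))|=|f_1(\pi_B(A'))|$ can be strictly smaller than $|A'|$). The direction $\funF(A)\subseteq h^{-1}(P)$ is better handled by simply invoking functoriality of $\funF$: if $A'\in\funF(A)$ then $\funF(\pi_B)(A')\in\funF(B)$ and $\funF(\pi_C)(A')\in\funF(C)$ automatically. The converse direction, $h^{-1}(P)\subseteq\funF(A)$, needs only the first condition $\pi_B(A')\in\funF(B)$ together with $|A'|=|\pi_B(A')|$, exactly as you say. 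With that adjustment the argument is complete.
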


\begin{lemma}[$\rGF$ preserves pullbacks along monos]\label{l.rGF_PB}
 Let the coalgebraic graph functor $\rGF: \cSets\times\cSets \to \cSets \times \cSets$ be given 
with $\rGF(\NN,\EN) = (\funF(\NN), \funG(\EN))$.  $\rGF$ preserves pullbacks along monomorphisms  provided $\funF:  \cSets\times\cSets \to \cSets $ and $\funG:  \cSets\times\cSets \to \cSets $  preserve injections and pullbacks along injective morphisms.
\end{lemma}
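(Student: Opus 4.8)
The plan is to reduce everything to the componentwise description of limits and monomorphisms in the product category $\cSets \times \cSets$. First I would record two standard facts: (i) a commuting square in $\cSets \times \cSets$ is a pullback if and only if each of its two component squares is a pullback in $\cSets$, since limits in a product category are computed componentwise; and (ii) a morphism $(m_\NN, m_\EN)$ of $\cSets \times \cSets$ is a monomorphism precisely when both $m_\NN$ and $m_\EN$ are injective. In particular, a ``pullback along a monomorphism'' in $\cSets \times \cSets$ is nothing but a pair of pullbacks in $\cSets$, each taken along an injective function.

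Next I would take an arbitrary pullback square in $\cSets \times \cSets$ one of whose legs $g_1 = (g_1^\NN, g_1^\EN)$ is a monomorphism, and apply the functor $\funF : \cSets \times \cSets \to \cSets$ to the whole square. Since $g_1$ is componentwise injective (hence a mono in $\cSets \times \cSets$) and $\funF$ preserves pullbacks along injective morphisms by hypothesis, the image of the square under $\funF$ is a pullback in $\cSets$. The same argument applied to $\funG$ shows that the image of the square under $\funG$ is a pullback in $\cSets$.

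Finally I would reassemble. By definition $\rGF = \langle \funF, \funG \rangle$, so the image of the original square under $\rGF$ is the square in $\cSets \times \cSets$ whose first component is the $\funF$-image and whose second component is the $\funG$-image; by fact (i), and because both components are pullbacks in $\cSets$, this square is a pullback in $\cSets \times \cSets$. Moreover it lies along a monomorphism, since $\rGF(g_1) = (\funF(g_1), \funG(g_1))$ is componentwise injective because $\funF$ and $\funG$ preserve injections, hence a mono in $\cSets \times \cSets$. This is exactly the assertion that $\rGF$ preserves pullbacks along monomorphisms.

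The only genuinely delicate point is bookkeeping rather than mathematics: one must be careful that ``injective morphism'' for a functor out of $\cSets \times \cSets$ is read as ``componentwise injective'', equivalently monomorphism in the product category, so that the hypotheses on $\funF$ and $\funG$ apply verbatim to the square living in $\cSets \times \cSets$; and one should note that in the statement the expressions $\funF(\NN)$ and $\funG(\EN)$ abbreviate the action of $\funF$ and $\funG$ on the relevant pair of sets. Once these conventions are fixed, the proof is a straightforward componentwise argument with no further obstacles.
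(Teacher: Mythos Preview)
Your proposal is correct and follows essentially the same componentwise strategy as the paper: reduce the pullback in $\cSets\times\cSets$ to its two component pullbacks in $\cSets$, apply $\funF$ and $\funG$ separately, and reassemble. The only difference is presentational: the paper unfolds the abstract fact ``limits in a product category are computed componentwise'' by explicitly constructing the mediating morphism $h=(h_1,h_2)$ from the induced pullback morphisms of the $\funF$- and $\funG$-images, whereas you invoke that fact directly.
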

	
\begin{proof}

	Given  pullback $(PB1)$  and $(PB2)$ in \cSets with the momomorphisms $f_{\NN}$ and $f_{\EN}$.\\

	$\xymatrix@=10mm{
	           A_{\NN} \ar[r]|{\pi_{B_{\NN}}}  \ar@{^{(}->}[d]|{\pi_{C_{\NN}}}    \ar@{}[dr]|{(PB1)}
		  & B_{\NN}   \ar@{^{(}->}[d]|{f_{\NN}}   \\
			   C_{\NN} \ar[r]|{g_{\NN}} 
			&  D_{\NN}
		}
	$ \hfill
	$\xymatrix@=10mm{
	           A_{\EN} \ar[r]|{\pi_{B_{\EN}}}  \ar@{^{(}->}[d]|{\pi_{C_{\EN}}}    \ar@{}[dr]|{(PB2)}
		  & B_{\EN}   \ar@{^{(}->}[d]|{f_{\EN}}   \\
			   C_{\EN} \ar[r]|{g_{\EN}} 
			&  D_{\EN}
		}
	$
	 \\
	leading to the following pullbacks, due to the assumption for $\funF$\\
	$$\xymatrix@R=10mm@C=25mm{
	           \funF(A_{\NN},A_{\EN}) \ar[r]|{\funF(\pi_{B_{\NN}},\pi_{B_{\EN}})}  
						                                       \ar@{^{(}->}[d]|{\funF(\pi_{C_{\NN}},(\pi_{C_{\EN}})}    \ar@{}[dr]|{(PB3)}
		  & \funF(B_{\NN},B_{\EN})   \ar@{^{(}->}[d]|{\funF(f_{\NN},f_{\EN})}   \\
			   \funF(C_{\NN},C_{\EN}) \ar[r]|{\funF(g_{\NN},g_{\EN})} 
			&  \funF(D_{\NN},D_{\EN})
		}
$$ 

and due to the assumption for $\funG$\\
	$$\xymatrix@R=10mm@C=25mm{
	           \funG(A_{\NN},A_{\EN}) \ar[r]|{\funG(\pi_{B_{\NN}},\pi_{B_{\EN}})}  
						                                       \ar@{^{(}->}[d]|{\funG(\pi_{C_{\NN}},(\pi_{C_{\EN}})}    \ar@{}[dr]|{(PB4)}
		  & \funG(B_{\NN},B_{\EN})   \ar@{^{(}->}[d]|{\funG(f_{\NN},f_{\EN})}   \\
			   \funG(C_{\NN},C_{\EN}) \ar[r]|{\funG(g_{\NN},g_{\EN})} 
			&  \funG(D_{\NN},D_{\EN})
		}
$$ 
	Then we have in the category  $\cSets\times\cSets$ :
	$$\xymatrix@R=10mm@C=25mm{
	   X \ar[dr]|h \ar@/_6mm/[ddr]|{\hat{f}}  \ar@/^3mm/[drr]|{\hat{g}}    &&\\
	       & \rGF(A_{\NN},A_{\EN}) \ar[r]  \ar@{^{(}->}[d]   \ar@{}[dr]|{(1)}  \ar@{}[ur]|(.3){(2)}  \ar@{}[dl]|(.3){(3)} 
		  & \rGF(B_{\NN},B_{\EN})\ar@{^{(}->}[d]|{\rGF(f_{\NN},f_{\EN})}\\
			 &  \rGF(C_{\NN},C_{\EN})\ar[r]|{\rGF(g_{\NN},g_{\EN})}
			&  \rGF(D_{\NN},D_{\EN})
		}
	$$
	$(1)$ commutes since $\rGF$ is a functor.\\
	For each $X$ with  $\rGF(g_{\NN},g_{\EN}) \circ \hat{f}  = \rGF(f_{\NN},f_{\EN}) \circ \hat{g}$ there is the induced pullback morphism 
	$h: X\to  \rGF(A_{\NN},A_{\EN})$ with
	$(h_1,h_2) $, so that $(2)$ and $(3)$ commute. $h_1: X\to \funF(A_{\NN},A_{\EN})$ is the induced pullback morphism of $(PB3)$ for the projection of the first component
	and $h_2:X \to \funG(A_{\NN},A_{\EN})$ is the induced pullback morphism of $(PB4)$ for the projection of the second component.
	\end{proof}

	\subsection{Noetherian Induction}
	\label{s.NoetherianInd}
	A set $S$ together with a binary relation $R$ over $S$  is well-founded if and only if every non-empty subset $S' \subseteq S$ has a minimal element; that is, 
\[	\forall S'\subseteq S\ (S'\neq \emptyset  \impl  \exists x\in S'\;\;\forall s\in S'\;\,(s,x)\notin R)  \]
Examples of relations that are not well-founded include
the negative integers $\Int^-$  with the usual order, since any unbounded subset has no least element.  The powerset of a set  together with the inclusion of sets is a  well-founded set if and only if  the set is finite. Note, that the superpower sets are well-founded even for infinite sets, since we employ a different order.\\[2mm]
Well-founded sets  allow   Noetherian induction:\\
Given  a property ${\displaystyle P}$ for an order relation ${\displaystyle R }$  of a  wellfounded set $S$, then we have:
\begin{itemize}
	\item ${\displaystyle P(x)} $ holds  for all minimal elements of $S$.
   \item $x\in S$  and $P(x)$ holds  for all  $ yRx$ then  $P(x)$ holds.
\end{itemize}
Then  $P(x)$ holds for all $x \in S$.

\begin{lemma}[$\PPa(S)$ is well-founded]\label{l.PPa_wellfounded}
Let $R\subseteq \PPa(S) \times \PPb(S)$ be given by $A \;R \;B$ iff $np(A) \le np(B)$ with
$np:\PPa(S) \to \Nat$
\begin{itemize}
	\item $s \in S \impl  np(s) =0$
	\item $np(\emptyset) =1$
	\item $(X_i)_{i\in I} \in S \impl  np(\{ X_i | i \in I\}) = max \{ np( X_i) | i\in I\} +1$ with  $I \subseteq \Nat$
	%\item $X \subset \PPb(S) \impl np(\{X\})=  np(X) +1
\end{itemize}
\end{lemma}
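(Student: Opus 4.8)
The plan is to use $np$ as a rank function of $(\PPa(S),R)$ into the standard well-founded set $(\Nat,<)$ and to pull back well-foundedness along it; the only genuinely delicate point is that $np$ is a total function. (Throughout I read $R$ as the associated strict order, i.e.\ $A\,R\,B$ iff $np(A)<np(B)$, which is what the notion of well-foundedness in Appendix~\ref{s.NoetherianInd} requires.)

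\textbf{Step 1: $np:\PPa(S)\to\Nat$ is well-defined and total.} I would prove this by Noetherian induction along the inductive generation of $\PPa(S)$ in Def.~\ref{d.ppa_set}. The cases $x=s\in S$ and $x=\emptyset$ are the base cases, with $np(s)=0$ and $np(\emptyset)=1$. Otherwise $x=\{X_i\mid i\in I\}$ with every $X_i$ generated strictly before $x$ and hence already carrying a value $np(X_i)\in\Nat$; one then has to show that the set $\{np(X_i)\mid i\in I\}$ is bounded in $\Nat$, so that its maximum exists and $np(x)=\max\{np(X_i)\mid i\in I\}+1$ is defined. This is precisely the ``number of nested parentheses'' used informally throughout the paper, and it is exactly here that the clause ``$\PPa(S)$ is the smallest set satisfying 1.\ and 2.'' must be invoked, to exclude a set whose members have finite but unbounded nesting depth.

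\textbf{Step 2: transporting well-foundedness.} By the definition of $R$, an $R$-step strictly decreases $np$, so $np$ is a strictly monotone map $(\PPa(S),R)\to(\Nat,<)$. Now let $\emptyset\neq S'\subseteq\PPa(S)$. Then $\{np(A)\mid A\in S'\}$ is a non-empty subset of $\Nat$, hence has a least element $n_0$; choose any $A_0\in S'$ with $np(A_0)=n_0$. For every $s\in S'$ we have $np(s)\ge n_0=np(A_0)$, so no $s\in S'$ satisfies $s\,R\,A_0$, i.e.\ $A_0$ is $R$-minimal in $S'$. Since $S'$ was arbitrary, $(\PPa(S),R)$ is well-founded; this in particular legitimises the Noetherian inductions used in the proofs of Lemmas~\ref{l.PPa_prsv_injctPB} and~\ref{l.PPaprsvinjPB} (and of their counterparts for $\PPb$ and $\Potom$).

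I expect the main obstacle to be Step~1, and concretely the boundedness claim inside it: one must argue that the least fixpoint of Def.~\ref{d.ppa_set} contains no element whose members realise arbitrarily large nesting depths, since such an element would force the supremum defining $np$ outside $\Nat$ and $np$ would fail to be total. Everything after that is the routine fact that a relation admitting a strictly monotone map into $(\Nat,<)$ is well-founded, plus a line-by-line check of the recursion clauses for $np$ (which is immediate once the construction of $\PPa(S)$ is stratified into stages).
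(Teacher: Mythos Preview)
The paper's own proof is a single line: ``Due to the inductive definition of $\PPa$.'' Your proposal unpacks this into the standard rank-function argument (show $np$ is total, then pull back well-foundedness from $(\Nat,<)$), which is precisely what that one-liner gestures at; so the approaches agree in spirit, but you have supplied the details the paper omits.

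You are also more careful than the paper on two points it passes over. First, you note that $R$ must be read as the strict order for the minimality clause of Appendix~\ref{s.NoetherianInd} to be satisfiable at all; with the non-strict $\leq$ in the lemma statement every element is $R$-related to itself and no subset has a minimal element. Second, you correctly isolate the only genuinely nontrivial step, namely boundedness of $\{np(X_i)\mid i\in I\}$ in the recursive clause, without which $np$ fails to land in $\Nat$. Your remark that the ``smallest set'' clause of Def.~\ref{d.ppa_set} is what must exclude elements whose members have unbounded nesting depth is exactly the right diagnosis of where the real content lies; the paper does not address this.
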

\begin{proof}Due to the inductive definition of $\PPa$.
\end{proof}

\begin{lemma}[$\PPb(S)$ is well-founded]\label{l.PPb_wellfounded}
Let $R\subseteq \PPb(S) \times \PPb(S)$ be given by $A \;R \;B$ iff $np(A) \le np(B)$ with
$np:\PPb(S) \to \Nat$
\begin{itemize}
	%\item $s \in S \impl  np(s) =0$
	\item $np(\emptyset) =1$
	\item $(X_i)_{i\in I} \in S \impl  np(\{ X_i | i \in I\}) = max \{ np( X_i) | i\in I\} +1$ with  $I \subseteq \Nat$
	%\item $X \subset \PPb(S) \impl np(\{X\})=  np(X) +1
\end{itemize}
\end{lemma}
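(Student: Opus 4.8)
The plan is to deduce well-foundedness of $(\PPb(S),R)$ from the well-foundedness of the ordinals via the rank map $np$, using the elementary fact that the preimage of a well-founded relation under an arbitrary map is again well-founded. Reading $R$ as its strict part, $A\mathbin{R}B$ iff $np(A)<np(B)$ --- which is what the ``minimal element'' condition recalled in Appendix~\ref{s.NoetherianInd} actually requires --- the relation $R$ is by construction the preimage of $<$ along $np:\PPb(S)\to\Nat$. Once $np$ is known to be well-defined and total, the conclusion is immediate: for any non-empty $S'\subseteq\PPb(S)$ the set $\{np(x)\mid x\in S'\}$ of values is a non-empty set of ordinals, hence has a least element $\lambda$, and any $x_0\in S'$ with $np(x_0)=\lambda$ is $R$-minimal in $S'$. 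So the whole lemma reduces to the well-definedness of $np$.

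That is where the \emph{inductive} definition of $\PPb$ (Def.~\ref{d.ppb_set}) does the work. First I would recall that $\PPb(S)$, being the least set with $\Pot(S)\subseteq\PPb(S)$ and closed under ``promoting a subset to an element'', is obtained in stages: $\PPb_0=\Pot(S)$, $\PPb_{\alpha+1}=\PPb_\alpha\cup\Pot(\PPb_\alpha)$, $\PPb_\lambda=\bigcup_{\alpha<\lambda}\PPb_\alpha$, and $\PPb(S)=\PPb_\kappa$ for a sufficiently large ordinal $\kappa$. From this stratification one reads off two facts: every $x\in\PPb(S)$ either lies in $\Pot(S)$ or satisfies $x\subseteq\PPb(S)$ (so that its members again have ranks), and the ``one step in'' relation given by $x'\prec x$ iff $x\notin\Pot(S)$ and $x'\in x$ is well-founded on $\PPb(S)$ (each $x$ first appears at some stage and its $\prec$-predecessors at strictly earlier ones; alternatively it is just the restriction of $\in$, well-founded by Foundation). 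Then $np$ is precisely the rank function of $\prec$, defined by $\prec$-recursion: $np(x)=1$ for $x\in\Pot(S)$ (in particular $np(\emptyset)=1$), and $np(x)=\sup\{\,np(x')+1\mid x'\in x\,\}$ otherwise; well-foundedness of $\prec$ guarantees this has a unique total solution, which is the required $np$.

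Assembling the pieces gives that every non-empty subset of $\PPb(S)$ has an $R$-minimal element, i.e. $(\PPb(S),R)$ is well-founded, and hence supports Noetherian induction with the members of $\Pot(S)$ (the elements of rank $1$) as the base case --- which is exactly the induction used earlier, e.g. in Lemma~\ref{l.PPb_prsv_injct}. The argument is identical to the one for $\PPa$ (Lemma~\ref{l.PPa_wellfounded}), the only difference being that there the atoms $s\in S$ themselves belong to $\PPa(S)$ and are given $np(s)=0$.

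The point I expect to need the most care is the totality of $np$. Because $\PPb$ closes under \emph{arbitrary} subsets, $\PPb(S)$ already contains elements of transfinite nesting depth even when $S$ is finite --- e.g. $\{\{s\},\{\{s\}\},\{\{\{s\}\}\},\dots\}$ --- so $\max\{np(x')\mid x'\in x\}$ need not exist and $np$ must genuinely take values in the ordinals, with $\sup\{np(x')+1\}$ replacing $\max(\cdots)+1$; this is why I phrased the recursion with $\sup$. Since the ordinals are still well-ordered, the reduction in the first paragraph is unaffected, and only the statement's assertion ``$np:\PPb(S)\to\Nat$'' has to be relaxed to an ordinal-valued rank (it is literally correct, as written, precisely when $\PPb$ is additionally restricted to close under finite subsets, which covers all the examples in the paper).
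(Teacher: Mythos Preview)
Your argument is correct and, in fact, considerably more careful than the paper's own proof, which consists of the single line ``Due to the inductive definition of $\PPb$.'' The paper simply takes for granted that the stagewise build-up of $\PPb(S)$ makes the rank $np$ well-defined and hence that $R$ is well-founded; you actually carry this out, by exhibiting $\PPb(S)$ as a union of stages, observing that the ``one step in'' relation $\prec$ is well-founded (via Foundation or via the stage at which an element first appears), and then defining $np$ as the $\prec$-rank. The pull-back-along-$np$ step is a clean way to finish, and agrees in spirit with what the paper is gesturing at.

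Two of your side remarks are worth highlighting because they go beyond the paper. First, you rightly read $R$ as its strict part $np(A)<np(B)$; with the reflexive $\le$ actually written in the statement, the minimal-element condition of Appendix~\ref{s.NoetherianInd} would be vacuously violated, so the paper's formulation is sloppy here and your correction is the intended reading. Second, your observation that $np$ must in general be ordinal-valued rather than $\Nat$-valued is a genuine gap in the paper's statement: since clause~2 of Def.~\ref{d.ppb_set} promotes \emph{arbitrary} subsets $M'\subset\PPb(M)$ to elements, sets such as $\{\{s\},\{\{s\}\},\{\{\{s\}\}\},\dots\}$ lie in $\PPb(S)$ and have no finite nesting depth, so the clause ``$np:\PPb(S)\to\Nat$'' and the use of $\max$ are only literally correct under a finite-subset restriction. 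Your fix (replace $\max(\cdots)+1$ by $\sup\{np(x')+1\}$ and let $np$ land in the ordinals) is exactly right and does not affect the well-foundedness conclusion. The paper does not discuss this point at all.
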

\begin{proof}Due to the inductive definition of $\PPb$.
\end{proof}

\begin{lemma}[$\Potom(S)$ is wellfounded]\label{l.Potom_wellfounded}
Let $R\subseteq \Potom(S) \times \Potom(S)$ be given by $A \;R \;B$ iff $np(A) \le np(B)$ with
$np:\Potom(S) \to \Nat$
\begin{itemize}
	\item $s \in S \impl  np(s) =0$
	\item $np(\emptyset) =1$
	\item $(X_i)_{i\in I} \in S \impl  np(\{ X_i | i \in I\}) = max \{ np( X_i) | i\in I\} +1$ with  $I \subseteq \Nat$
	%\item $X \subset \PPb(S) \impl np(\{X\})=  np(X) +1
\end{itemize}
\end{lemma}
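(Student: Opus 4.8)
The plan is to reduce well-foundedness of $(\Potom(S),R)$ to the well-ordering of $\Nat_0$ through the rank function $np$. Concretely, I would first check that $np:\Potom(S)\to\Nat_0$ is a total, well-defined function, and then observe that $R$ is the $np$-preimage of the relation $\le$ on $\Nat_0$ (whose strict part $<$ is well-founded); well-foundedness of $R$ then follows from the standard fact that the preimage of a well-founded relation under any map is again well-founded.

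For the well-definedness of $np$ I would argue by induction on the layer index $i$ in the presentation $\Potom(S)=\bigcup_{i\in\Nat_0}\Pot^i(S)$, proving simultaneously that $np$ is defined on $\Pot^i(S)$ and that $np(x)\le i$ for every $x\in\Pot^i(S)$. The base case $\Pot^0(S)=S$ is the clause $np(s)=0$. For the step, any $x\in\Pot^{i+1}(S)=\Pot(\Pot^i(S))$ is a subset of $\Pot^i(S)$, so either $x=\emptyset$, where $np(x)=1\le i+1$, or $x=\{X_j\mid j\in I\}$ with every $X_j\in\Pot^i(S)$; by the induction hypothesis each $np(X_j)$ is defined and at most $i$, hence $np(x)=\max_j np(X_j)+1$ is defined and at most $i+1$. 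The one delicate point to flag is that for $x\in S$ the atom clause $np(x)=0$ is applied even when $x$ happens to be a set, i.e.\ the elements of $S$ are treated as atoms by convention; this is consistent precisely because $S$ is well-founded (no infinite $\in$-descent) and because in $\Potom$ — unlike in $\PPa$ and $\PPb$ — every member of a non-atomic set sits exactly one layer below it, so the recursion bottoms out after finitely many steps.

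Given $np$, the conclusion is immediate: for any non-empty $T\subseteq\Potom(S)$ the image $\{np(x)\mid x\in T\}$ is a non-empty subset of $\Nat_0$ and hence has a least element $n_0$; any $x\in T$ with $np(x)=n_0$ admits no $y\in T$ with $np(y)<np(x)$, so $x$ is $R$-minimal in $T$. This is exactly the argument underlying Lemma~\ref{l.PPa_wellfounded} and Lemma~\ref{l.PPb_wellfounded}, and it could equally be compressed to a one-line appeal to the definition of $\Potom$; I spell it out because the layered presentation makes the rank bound $np(x)\le i$ and the termination of the recursion completely transparent.

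I expect the only genuine obstacle to be pinning down the well-definedness of $np$ cleanly — justifying the atom/non-atom case split and confirming that membership always strictly decreases the layer index — since once $np$ is a bona fide $\Nat_0$-valued function, the remaining step is the routine ranking argument already used for $\PPa$ and $\PPb$.
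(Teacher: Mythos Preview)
Your proposal is correct and takes essentially the same approach as the paper: the paper's proof is the single line ``Due to the inductive definition of $\Potom$,'' and your argument is precisely a careful unpacking of that line---using the layer index $i$ in $\Potom(S)=\bigcup_{i}\Pot^i(S)$ to show $np$ is total with $np(x)\le i$, and then pulling well-foundedness back from $\Nat_0$ along $np$. You even anticipate this, noting the argument ``could equally be compressed to a one-line appeal to the definition of $\Potom$''; that is exactly what the paper does.
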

\begin{proof}Due to the inductive definition of $\Potom$.
\end{proof}
\end{appendix}

\end{document}